\newtheorem{lemma}{Lemma}
\newtheorem{prop}{Proposition}
\newcommand{\dd}{\mathrm{d}}
\newcommand{\bbE}{\mathds{E}}
\newcommand{\bbP}{\mathds{P}}
\newcommand{\bbR}{\mathds{R}}
\newcommand{\bbF}{\mathds{F}}
\newcommand{\bbL}{\mathcal{L}}
\newcommand{\bbT}{\mathcal{T}}
\newcommand{\bbC}{\mathcal{C}}
\begin{document}

\title{Rooftop and Community Solar Adoption with Income Heterogeneity}

\author{
Swapnil Rayal
\thanks{Foster School of Business, University of Washington. \textbf{e-mail}: \url{srayal@uw.edu}}
\and 
Apurva Jain
\thanks{Foster School of Business, University of Washington.  \textbf{e-mail}: \url{apurva@uw.edu}}
\and
Matthew Lorig
\thanks{Department of Applied Mathematics, University of Washington.  \textbf{e-mail}: \url{mlorig@uw.edu}}
}

\maketitle

\begin{abstract}
Each household in a population characterized by income heterogeneity faces random demand for electricity and decides if and when it should adopt a solar product, rooftop solar or community solar. A central planner, aiming to meet an adoption level target within a set time, offers net metering and subsidy on solar products and minimizes its total cost. Our focus is on analyzing the interactions of three new features we add to the literature: income diversity, availability of community solar, and consideration of adoption timing. \textit{Methodology and results:} We develop a bilevel optimization formulation to derive the optimal subsidy policy. The upper level (planner's) problem is a constrained non-linear optimization model in which the planner aims to minimize the average subsidy cost. The lower level (household's) problem is an optimal stopping formulation, which captures the adoption decisions of the households. We derive a closed-form expression for the distribution of optimal adoption time of households for a given subsidy policy.  We show that the planner's problem is convex in the case of homogeneous subsidy for the two products. \textit{Managerial implications:} Our results underscore the importance for planners to consider three factors - adoption level target, time target, and subsidy budget - simultaneously as they work in tandem to influence the adoption outcome. The planners must also consider the inclusion of community solar in their plans because, as we show, community and rooftop solar attract households from different sides of the income spectrum. In the presence of income inequality, the availability of community makes it easier to meet solar adoption targets. 
\end{abstract}

\section{Introduction.}\label{intro}
Rooftop solar panels are a major component of governments' renewable energy strategies. State and local governments worldwide offer subsidy programs to promote the adoption of solar panels among households. The earliest such laws in the U.S. date to the 1980s (\cite{verzola2016crossing}). The popularity of tools like \href{https://sunroof.withgoogle.com/}{Google Project Sunroof} shows widespread awareness and interest in household rooftop solar adoptions. As such programs proliferate and mature, new concerns about their impact and effectiveness have emerged. In addition to tracking how many households adopt rooftop solar, the metrics of interest have broadened to include how fast such adoptions occur and how well-off these adopters are. Such concerns about timing and equity raise interesting questions regarding the design of subsidy programs and how households of different income profiles respond to them (\cite{WSJ2022}).\par
With global warming's effects ever more visible, there is an increasing pressure on government planners to specify time targets for the adoption-quantity goals they set. According to the National Conference of State Legislatures (\cite{NCSL2021}),  more than half of American states have set renewable energy targets, with ten aiming to achieve full conversion with deadlines ranging between 2030 and 2050. Some regions have set specific targets for solar energy;  for instance, Washington D.C. aims at a 5\% solar contribution in its complete transition to cleaner resources by 2032. The industry has the same focus; the Solar Energy Industries Association aims to reach 30\% of U.S. electricity generation by 2030 (\cite{SEIA2021}).\par
The concerns about equity are also gaining prominence. As activists point out, one problem with the currently implemented solar subsidy policies is that they benefit the higher-income households more than the low-income households. For example, the federal tax credit for solar panel installation cannot be claimed by a household that rents an apartment and cannot put a solar array on the roof. At the same time, it is easier for a homeowner to install a solar array and qualify for the tax credit (\cite{DOE2021}). There is evidence of disparity in solar adoption based on income level, race, and ethnicity (\cite{forrester2022residential}, \cite{sunter2019disparities}). Generally, solar adopters tend to be from higher-income households living in neighborhoods with a relatively higher proportion of non-Hispanic white and Asian populations (\cite{barbose2021residential}). \par
The recent emergence of community or subscription solar products may be able to address the concerns about speed and equity raised above.  These products allow households to lease or subscribe to a portion of shared solar systems located away from homes, thus removing the disadvantages of high upfront cost and space constraints in rooftop solar installation.  As of 2021, there are about 1600 community solar projects in 39 states, and 22 states have enacted community solar legislation (\cite{heeter2021sharing}).  Many states, including Delaware, New Hampshire, Oregon, and Minnesota, are crediting surplus community solar generation at retail price (\cite{DSIRE2021}), the same way they support rooftop solar products. The availability of rooftop solar and community solar products in a region gives the households a flexibility in payment schemes which can hasten their adoption decision.  Thus, community and subscription solar products may provide planners with an additional resource to promote the adoption of solar technology.\par
This paper aims to develop a model that can incorporate important but previously unaddressed features discussed above - the timing of adoption, the impact of income inequality, and the availability of community solar - while maintaining tractability. Models that analyze the adoption timing of a new technology are an integral part of Operations Management(OM) research, but there has been limited work on their application to solar products. In addition, the interplay between household adoption timing decisions and the planners' time targets has not yet been explored in the literature. The modeling of income inequality also remains an open question in the OM literature; we take an early step in that direction. Finally, we believe that this is the first effort to model a household's choice between two different solar products, rooftop and community solar. In the following, we describe the model, the types of questions we ask, and the insights we draw from it.\par
We model a regional population where the households differ in income and electricity consumption. Drawing on energy economics literature, we argue that income diversity correlates with differences in electricity demand and determines a household's propensity to invest in renewable energy. More specifically, a higher-income household consumes more electricity at a faster growth rate and, financially, finds it easier to make a fixed investment; we capture this by modeling a lower discount rate for a higher-income household. In other words, a lower-income household uses a higher discount rate to value the future benefits of renewable energy. See Section \ref{model} for detailed support of these modeling choices.\par
A household's electricity demand is modeled as a continuous time stochastic process. Each household solves an optimal stopping problem to decide when to adopt a solar product, either rooftop or community solar. The household incurs an electricity consumption cost before adoption; must spend a fixed amount for rooftop solar, or pay a subscription fee for community solar adoption; and enjoys a net-metering billing system after adoption where it can sell any excess capacity back to the grid. Our modeling choices are informed by practice and earlier literature and are further explained in Section \ref{model}.\par
The central planner (representing the regional government) influences a household's decision by providing solar subsidies, modeled as percentage discounts applied to the adoption costs. The planner's problem is to minimize the total subsidy cost with a constraint to ensure that the adoption level target is achieved within the adoption time target. We model the central planner's problem as a bilevel optimization where the planner sets the subsidies with full knowledge of the household's problem, and then the households make their decisions about adoption timing and product choice.\par
In addition to the new modeling features mentioned above (timing of solar adoption, household income inequality, and solar product choice), we consider the following to be the paper's technical contributions in solving the household's problem and the planner's problem described above.
(1) We offer an explicit characterization of a household's product choice decision by determining a threshold income level that divides the household population distribution into two parts: those who will choose rooftop solar and those who will opt for subscription solar. 
(2) We develop closed-form expressions for a household's adoption timing density function.
(3) For the central planner's bilevel optimization formulation, we prove that it is a convex problem in the special case of equal subsidies, and for the general case, we prove the properties of the optimal solution and offer an algorithm to determine it. We view the formulation and solution of a bilevel optimization problem where the lower-level problem is based on an optimal stopping problem for a continuous time stochastic process as a new contribution to the OM modeling literature.\par
The solutions we develop for the household's and planner's problems allow us to investigate the interaction between the planner's policy decisions and a household's responses to them. Based on numerical explorations, we contribute the following observations. 
(1) To set goals, a central planner must consider three factors simultaneously:  adoption level target, adoption time target, and subsidy budget. A shorter target time will require a higher subsidy budget, as will a higher target adoption level. For a given subsidy budget, adoption level and adoption time can be substitutes for each other.
(2) Community solar is a useful tool for planners to achieve targets at a lower cost. While it is intuitive to suggest that adding additional product choices to the rooftop solar will make the planner's problem easier, the impact of community solar goes deeper. This is so because community solar and rooftop solar attract households from the opposite ends of the income spectrum. The central planner can differentiate the subsidies it offers for both products to help it achieve its target in a timely manner.
(3) Increasing income inequality in a region, as measured by its Gini coefficient, influences the adoption decision across the population. We show that the availability of community solar as a choice can help speed the adoption process; a low subsidy level with two products can deliver faster adoption than a high subsidy level with only one product. \par
The rest of this paper is organized as follows. In Section \ref{survey}, we review the related literature. We present the notation and assumptions and then formulate the problem in Section \ref{model}.  In Section \ref{section Analysis}, we first solve a special case of our model and then numerically solve the general model.  We discuss the implications of the results and possible extensions of the model in Section \ref{section_discussion} and Section \ref{section_extension} respectively. The proofs are deferred to the online appendix.
\section{Literature Review}\label{survey}
The emergence of distributed solar energy production and government incentives to promote it has raised interesting research questions. We start with summarizing this literature. Most of these papers, however, deploy static models to address questions they are interested in. Our focus is on adding a time dimension,  which requires us to develop a dynamic model of adoption decisions. There is limited work on developing dynamic models in the context of solar energy. Below, we describe how our model builds on the existing literature and then adds to it. Finally, our methodology has parallels in the real options literature. We close this section by highlighting these connections.\par
An early question in the literature was about the design of government subsidies to incentivize adoption by users. \cite{brown2017designing} analyzed the optimal rate of compensation for excess electricity generation in a distributed generation system when electricity generated is stochastic. \cite{sunar2021net} included the wholesale market dynamics in the analysis of the benefits of the net-metering policy to the utilities assuming random electricity demand. \cite{singh2022s} studied the optimal tariff structure for the utilities to achieve certain welfare goals. \cite{agrawal2022non} investigated the economic and environmental implications of non-ownership business models by endogenizing a solar power company's business model decisions. These papers use static models to develop their insights.\par
The literature on dynamic models in renewable energy operations is limited. \cite{babich2020promoting} model electricity prices and investment cost as diffusion processes and compares the effectiveness of two different subsidy policies, feed-in tariff, and tax rebate policies. \cite{angelus2021distributed} study when and how much capacity a single customer should install with an uncertain demand process to minimize future costs. \cite{alizamir2016efficient} offers a periodic model with information diffusion and learning over time with a goal to determine changes in the government's incentive over time. Our focus on the dynamic nature of the problem is for a different purpose. We analyze how individuals' adoption timing decisions interact with the planner's horizon for achieving the region's renewable energy goals. In addition, our model incorporates features like product choice and income inequality. Our focus on these new features has required us to simplify some of the other features available in some of the above papers, most notably the modeling of utilities and other intermediaries between the central planner and the electricity customer. We discuss such simplifications and assumptions after we present our model in the next section.\par
Our research also draws on real option valuation literature. A detailed exposition on real options valuation is available in \cite{dixit1994investment}, \cite{shiryaev2007optimal}, \cite{shreve2004stochastic} and \cite{oksendal2013stochastic}. In OM literature, techniques from real options valuation have been used to address investment timing, size, and choice decisions in various domains (\cite{dangl1999investment}, \cite{kwon2010invest}, \cite{takashima2012investment}, \cite{kwon2016impact}, \cite{angelus2021distributed}). In our work, we solve an optimal investment timing sub-problem with a choice of investment and use the results to solve a non-linear optimization problem in a bilevel formulation.
\section{ Problem and Model Formulation}{\label{model}}
We develop an optimal stopping formulation for a household's decision to adopt a solar product from a set of available choices. The households are heterogeneous in their discount rates and demand rates. They weigh the cost of adoption against future savings to determine which product to adopt and when to adopt it. Section \ref{subsec_household_pop} presents how we model household heterogeneity, and Section \ref{sub_cost} lists the cost parameters. Section \ref{sub_sub_household_optimization} puts these elements together in an optimization model for the household. Finally, a central planner, usually a state, sets targets related to adoption level and timing and chooses adoption subsidy levels to achieve them as it anticipates household responses. Section \ref{sub_sub_central_planner_optimization} models the central planner's problem.
\subsection{Household Population}\label{subsec_household_pop}
We consider a region with a population of households that are heterogeneous in their income levels. Let $r \in (0, \infty)$ denote the income level of a household. The function $p$ denotes the probability density of income in the population, i.e., $p(r)   \dd r$ is the portion of the population with income in the range of $[r,r+  \dd r]$. As discussed earlier, we are interested in studying the effect of inequality on adoption decisions. We chose income, rather than wealth or consumption, as the modeling primitive to capture inequality in our model. This choice is primarily due to the availability of direct support in the literature for two links we want to capture in our model: the link between discount rates and income levels and the link between electricity consumption and income levels.  We discuss these two features of our model below.
\subsubsection{Demand Heterogeneity}\label{subsubsec_demand_hetero}
We consider a complete filtered probability space $(\Omega, \mathcal{F}, \bbF,\bbP)$. The demand of electricity per unit time for a household belonging to income level $r$ is represented by a stochastic process, $\{ X^r_{t}\}_{t \geq 0}$. Let $\{ \mathcal{F}^r_t\}_{t \geq 0}$ be the $\sigma-$algebra generated by the stochastic process $\{ X^r_{t}\}_{t \geq 0}$, i.e., $ \mathcal{F}^r_t = \sigma(X^r_{s}: 0 \leq s\leq t)$. The filtration $\bbF= \sigma(\bigcup\limits_r \bigcup\limits_{t} \mathcal{F}^r_t$). We assume that $X^{r}_t$ follows the following dynamics
	    \begin{align}
		   \dd X^{r}_t &= \mu(r) X^r_{t} \dd t + \sigma X^r_{t} \dd W_t^{r}. \label{eq: demand process}
	    \end{align}
Here, the functions $\mu: (0, \infty) \rightarrow (0,\infty)$ and parameter $\sigma$ satisfy the at-most linear growth and Lipschitz continuity conditions.\par
The choice of representing electricity demand as a stochastic process allows us to capture the inherent variability of electricity consumption over time and provides us with a modeling platform to focus on the adoption timing decisions. This choice also follows recent literature; see  \cite{angelus2021distributed} for a similar modeling approach.\par
A wide variety of sources provide empirical evidence for the observation that the electricity demand is increasing over time and that households with higher incomes consume more. For example,  a report by NREL(\cite{steinberg2017electrification}), states that the average electricity demand is expected to increase $2.6\%$ by 2050 due to the widespread electrification of homes and modes of transport. The data(\cite{WB2021}) from 1971 to 2014 on per capita electricity consumption for three different major income levels shows that the electricity consumption for a higher income level is not just larger, it also displays a steeper gradient indicating faster growth for higher-income households. We capture these empirical observations in our model by assuming $\mu(r)>0$ for all $r$ and $\mu(r)$ is a non-decreasing function of $r$. We also assume that $X_{0}= X^{r}_{0}=x$ for all $r \in (0, \infty)$ for the sake of tractability and expositional simplicity.          
\subsubsection{Discount Rate Heterogeneity}\label{subsubsec_discount_rate_hetero}
\cite{banerjee2004inequality} observes that income level should not influence investment decisions if markets work perfectly, but that is not what we see in practice.  The paper then shows how markets are inefficient and how such inefficiencies can explain the impact of inequality on investment decisions. Moving from general arguments to the specific case of household investments in renewables, \cite{ameli2015impedes}  provides evidence of such investments being less than expected and then lists a variety of factors that may be driving such underinvestment, including the fact that lower-income households may hold back from such investments. This underinvestment is sometimes called the ``energy efficiency gap"  (\cite{gillingham2009energy}).  Investigations into why such a gap exists tend to focus on the role of the implicit discount rate; see, for example, \cite{sanstad2006end}, which describes it as the rate at which subjects discount the returns to energy-efficiency investments inferred ex-post from actual purchase decisions.\par
We seek to parsimoniously capture the effect of income differences on the adoption investment decision in our model. There may be many underlying drivers for how income levels influence household decisions -- such as credit availability, liquidity constraints, and hidden costs for low-income households --  but the above literature suggests that an effective way to capture their combined effect is to posit that the implicit discount rate which households use to make their decision is dependent on their income level.\par
In energy policy literature, the relationship between discount rates and income has long been a subject of interest. \cite{hausman1979individual} is one of the earliest works to model a household purchase decision of an energy-using durable and estimate its parameters; the results show that the discount rate varies inversely with income. This early literature is summarized in \cite{train1985discount}, which reviewed results from more than 12 studies that show discount rates decreasing with income. \cite{lawrance1991poverty} estimates that discount rates are three to five percent higher for households with low permanent incomes than for those with high permanent incomes. Newer studies like \cite{enzler2014subjective} continue to confirm the negative correlation between discount rates and income. A recent review, \cite{kubiak2016decision}, graphically represents the results of 13 studies,  further observing a correlation between high-income levels and low discount rates.\par
We consider this evidence strong enough to assume a similar relationship in our model. The income-dependent time preference discount rate is given by $\lambda(r)$. The function $\lambda$ is assumed to be continuous, bounded, and decreasing in $r$. It satisfies $\lambda(r) > \mu(r)$  and $\lambda(r) \in [\underline{\lambda}, \overline{\lambda}]$ for all $r$.
\subsection{Costs}\label{sub_cost}
From the perspective of a household considering the adoption of a solar product, we model the following costs: the price of electricity consumption, any adoption costs related to the acquisition of solar products, and net compensation after obtaining the solar product.\par
Let us first consider the consumption costs. Before adoption, a household pays for the electricity \textit{consumption} at the retail price of $p_b$ per unit. If it adopts a solar product, the cost is determined by what is generally known as the net-metering mechanism in the literature. Under net-metering, the central planner compensates the household for any electricity it delivers to the grid.  The billing cycle of the compensation mechanism, usually a month or a year, is denoted by $t_b$. At the end of a cycle, if a household’s electricity demand exceeds the total solar energy generated during a billing cycle, the household pays for the surplus consumption at retail price $p_b$ per unit, and if a household generates more solar electricity than the total demand during a billing cycle, the household is compensated for excess production at the same rate $p_b$. We refer to a household's net cost as its \textit{compensation}.\par
The net-metering solar compensation mechanism is widely used in the solar adoption models in the literature; see, for example, \cite{sunar2021net} and \cite{agrawal2022non}. It is also widely used in practice. While some variations exist, many states,  including Delaware, New Hampshire, Oregon, and Minnesota(\cite{DSIRE2021}),  use compensation mechanisms similar to the one described above, crediting surplus solar generation at retail price.\par
The acquisition costs incurred by the household at the time of adoption depend on the choice of solar product. We assume that the household has a choice of two solar products:  rooftop solar and subscription solar, with index $i=1$ for the rooftop and $i=2$ for the subscription solar. The former requires an upfront cost, and the latter charges a periodic fee. Including these two choices in our model reflects contemporary practice; these two products were simultaneously available in 39 states in the U.S. in 2021(\cite{NREL2021}). The capacity acquired by a household under either product choice is assumed to be the same and is denoted by $c$. The function $\eta(c, t_b)$ represents the electricity generated by solar capacity $c$ during the billing cycle of duration $t_b$. The exogenous function $\eta$ captures the innate inefficiency in solar products due to the semiconductor material used in solar cells and the seasonality effects. The notion of a household purchasing similar capacity under both choices is supported by studies such as \cite{agrawal2022non}, which report similar average capacities of residential panels under sales and non-ownership business models.\par
The rooftop solar requires a fixed payment of $K$ and a variable payment of $k$ per unit capacity. The total cost of installing rooftop solar of capacity $c$ is $K+kc$. The subscription solar requires a periodic payment of subscription cost per unit capacity,  $p_{sub}$.  The payment structure for the two products is different; rooftop solar requires a substantial upfront payment, while subscription solar requires periodic small payments. We refer to these as the \textit{adoption} costs.\par
Finally, a household may receive a subsidy from the central planner in the form of a percentage reduction in a household's adoption costs. We model this subsidy as  $\delta_i$  where the subscript $i \in \{1,2\}$ refers to rooftop and subscription solar products, respectively. We will discuss the design of this subsidy further in Section \ref{sub_effect_subsidy_on_household}. We summarize the notations described above in a table available in the online appendix. \par
\subsection{Model Formulation}\label{sub_model_form}
We present a bilevel formulation to derive the optimal subsidy policy for the central planner. The problem naturally fits in the bilevel optimization framework since the household's solar adoption decisions across the population impact the central planner's ability to achieve the targets, and the central planner can influence the household's decisions by setting the subsidy levels. Once the central planner sets the subsidy policy, the households react to these values and either choose one of the solar products or do not adopt solar technology. In our model, the central planner is the leader, minimizing the average discounted cost due to subsidies, and the households in the region are the followers,  minimizing their expected total discounted consumption, adoption, and compensation costs. We will refer to the lower level (respectively upper level) problem as the household's (respectively central planner's) optimization problem. In the following, we first present a household's lower-level problem and then the central planner's upper-level problem.  
\subsubsection{Household's optimization problem}\label{sub_sub_household_optimization}
Consider a household at income level $r$  whose electricity consumption per unit time follows the stochastic process $X^r=(X^{r}_{t})_{t \geq 0}$. For now, we do not specify the dynamics for $X^r$;  we just assume that $X^r$ is a Markov process. For a given subsidy level $\delta$, the household must decide (i) if and when it should adopt solar technology and (ii) which of the two products it should adopt: rooftop or subscription solar.  Suppose the household chooses to adopt solar technology when its rate of electricity consumption is $x$. Then, its expected discounted future cost is given by
	  \begin{align}
		  g(x ;\delta ,r)=&  \bbE_x \bigg[  \sum_{n=1}^{\infty}   e^{-n \lambda(r)  t_b }   p_b \bigg(  \int_{(n-1)t_b}^{n t_b}X^{r}_{s} ds-\eta(c, t_b) \bigg)\bigg] \\
		  & + \min \big(  (1-\delta_1)(K + kc) , (1-\delta_2)\sum_{n=0}^{\infty} e^{-n \lambda(r)  t_b } p_{sub}c  \big). \label{eq: terminalcostbasemodel} 
   	\end{align}
   	where $\bbE_x$ denotes the expectation under $\bbP$ given $X_0^r = x$.\par
The first term represents the expected discounted cost of electricity consumption. The integral, $\int_{(n-1)t_b}^{n t_b}X^{r}_{s} ds$, is the total electricity consumption in a billing cycle, and the function, $\eta(c, t_b)$ is the electricity generated by solar capacity $c$ during the billing cycle. In any billing cycle of length  $t_b$, if the total demand is more than the solar production with acquired capacity $c$, then the household pays the rate $p_b$ for the excess electricity it consumes from the grid. Otherwise, the household receives the same rate for the excess electricity it produces and puts into the grid. This follows the net-metering policy described above. The index $n$ counts the number of billing periods forever into the future, and $\lambda(r)$ represents the discount rate used to compute the present value at the time of adoption. Irrespective of which option is adopted, rooftop or subscription, the first term remains the same due to similar capacity offerings in both the products and compensation mechanism i.e., net-metering. If we have solar products with different capacities or different compensation mechanisms, the minimum function would be outside the expectation function. The assumptions of similar capacity and compensation mechanism among solar products are to keep the model analytically tractable. The second term represents the adoption cost and depends on the chosen option.  The first term under the $\min$ function is the cost of acquiring rooftop solar, and the second term represents the present value of the stream of subscription fee payments incurred at the end of each billing cycle. The household picks the minimum of the two costs.\par
We are now ready to formulate the household's optimization problem.  We denote a household's adoption time given income level $r$ by $\tau_r:= (\tau| r)$. The random variable $\tau_r$ is adapted to the filtration $ \{ \mathcal{F}^{r}_t\}_{t \geq 0}$ generated by the demand process for income level $r$, $\{ X_t^{r}\}_{t \geq 0}$. The optimal stopping time for a household, given the income level $r$, is denoted by $\tau^*_r$ , and it is a lower-level variable. The optimal cost function for a household of a given income level $r$ and subsidy policy $\delta$, $V(x; \delta, r)$, is given by the following optimal stopping problem.
	\begin{align}
		J(x, \tau_r ; \delta, r):=&   \bbE_x \bigg[ \int_{0}^{\tau_r} e^{-\lambda(r) s} p_b X^{r}_{s}ds  + e^ {-\lambda(r) \tau_r} g(X^{r}_{\tau_r}; \delta, r) \bigg]  , \label{eq: optimalstoppingbasemodel} \\
		V(x; \delta, r) :=&  \inf_{\tau_r \geq 0} J(x, \tau_r; \delta, r),\\
		\tau^*_r:=& {\arg\min}_{\tau_r \geq 0}J(x, \tau_r; \delta, r).
	\end{align} \par
In the above optimal stopping formulation, the optimal cost function seeks to minimize the discounted sum of the running cost of electricity consumption till the stopping time and the post-adoption costs. In essence, a household incurs the running cost of electricity consumption before adoption.  It has the option to adopt now and receive a terminal payoff equal to the sum of the adoption cost and the present value of future electricity consumption, adjusted for net metering. Or it has the option to wait and delay the adoption. The household determines an optimal stopping time to adopt one of the products. The formulation above will determine when it should adopt and, when it does, which of the two options it should choose. 
\subsubsection{Central planner's optimization problem}\label{sub_sub_central_planner_optimization}
A central planner aims to meet the adoption level target $\Lambda$ , expressed in terms of the fraction of households who adopt either the rooftop or the subscription solar product within an adoption time target horizon $T$. We consider the dual adoption level and time targets, $(\Lambda,T) $, as an exogenous input to the central planner's problem. Later in the numerical section, we explore the impact of different targets.\par
Given a target, the central planner's objective is to achieve it at the minimum expected average discounted cost of the subsidy the planner provides to the households that adopt. As formulated in the previous section, a household solves its own optimal stopping problem to determine the time to adopt the preferred solar product.  Thus, household decisions across the population impact the central planner's ability to achieve the target, and the central planner can influence a household's decisions by the subsidy offered. Assuming that the central planner has complete knowledge of the household's problem,  we formulate the central planner's problem as bilevel optimization with the level of subsidy $\delta$ as the upper-level variable. We present it below and then discuss it further.\par
We first define the adoption time of a randomly chosen household from the population, $\tau^*$, and its density function $f_{\tau^*}(t; \delta, x)$.  This metric is useful for the central planner as it represents a summary response of the households to a specific subsidy policy. Note that we can interpret the probability of $\tau^*$ not exceeding a given time-horizon as the average fraction of households that will adopt in that time horizon. We use this interpretation in the formulation below.
	\begin{align}
		f_{\tau^*}(t;\delta,x)& := \int_{0}^{\infty} 	f_{\tau^{*}_r}(t;\delta, x) p(r) dr. \\
		\bbP(\tau^* \leq t | X_{0}=x)&:= \int_{0}^{\infty} 	\bbP (\tau^{*}_r \leq t | X_{0}=x) p(r) dr. 
	\end{align} 
	The bilevel formulation for the central planner is given by
	\begin{align*}
		& \min\limits_{\delta} \bbE_{x} \bigg[\int_{0}^{\infty}  e^{-\lambda(r) {\tau^{*}_r}} \bigg( \mathbbm{1}_{\{r \leq r^{*}(\delta)\}} \delta_2 \sum_{n=0}^{\infty} e^{-n \lambda(r)  t_b } p_{sub}c+  \mathbbm{1}_{\{r > r^{*}(\delta)\} }\delta_1 (K+kc) \bigg) p(r) dr \bigg] \\
		& 	\textrm{s.t.}\\
		&  \bbP(\tau^* \leq \bbT| r \in (r^{\Lambda}_{lb},r^{\Lambda}_{ub}) , X_{0}=x) \geq \Lambda\\
		& 	\tau^*_r:= {\arg\min}_{\tau_r \geq 0}J(x, \tau_r; \delta, r)\\
		& (1-\delta_1)(K+kc)- (1-\delta_2) p_{sub} c ~~~\geq \epsilon\\
		& 0 \leq  \delta \leq 1. 
	\end{align*} \par        
In the above formulation, the objective function minimizes the average discounted cost of providing a subsidy to a household. The choice of the household for solar product is embedded in the objective function using the indicator function. The first constraint ensures that the probability of a randomly chosen household adopting a solar product within horizon $T$ is at least ${\Lambda}$; that is, the central planner must meet or exceed his target $(\Lambda,T) $. Note that we allow this randomly chosen household to belong to any specified income interval with a lower bound $r^{\Lambda}_{lb}$  and an upper bound $r^{\Lambda}_{ub}$ . This gives us the flexibility to impose an adoption target for any chosen set of income levels. We will use this flexibility later, but for now, we focus on meeting the target across the entire population: $r^{\Lambda}_{lb}=0$ and $r^{\Lambda}_{ub}= \infty$. The second constraint corresponds to the lower-level problem and is valid for all $r$. The third constraint eliminates uninteresting subsidies like those that make the entire investment in rooftop solar less than subscription payment for a single period. Here, $\epsilon$ is a very small positive real number. The last constraint ensures that the subsidy percentage for each product is between $0$ and $1$.
\section{Analysis of the Model}\label{section Analysis}
In Section \ref{sub_analysis_household}, we solve the household's optimization problem and derive a closed-form expression for a threshold demand level and the density of optimal adoption time for each income level. In Section  \ref{sub_analysis_central_planner}, we show that the central planner's optimization problem is convex in the case of homogeneous subsidies. In the case of non-homogeneous subsidies, the problem is non-convex; we offer a computationally tractable algorithm to solve the problem.
\subsection{Analysis of the Household's Optimization Problem}\label{sub_analysis_household}
To analyze this problem, we first show that it is possible to separate the two decisions: the timing of adoption and the type of product to adopt. In the following result, we show that the decision about the type of product depends only on cost parameters and can be characterized based on the income level of a customer. Specifically, we prove the existence of a subsidy-dependent threshold income level $r^*(\delta)$ that divides the range of incomes into two distinct subsets such that if a customer adopts a solar technology, she will choose roof-top solar if $r> r^*(\delta)$ and subscription solar if $r \leq r^*(\delta)$. In the following, we state the result in terms of a discount rate threshold  $\lambda^{*}_{\delta}$  where  $r^*(\delta)$=$\lambda^{-1}(\lambda^{*}_{\delta})$ and $\lambda^{-1}(\mathord{\cdot})$ is the inverse function of the income-dependent discount rate $\lambda(r)$, which is a decreasing function of $r$. This allows us to state the result without assuming a specific form for  $\lambda(r)$ and yet provide a closed-form expression for the threshold $\lambda^{*}_{\delta}$.
\begin{lemma}
		\label{lemma:thresholde}
		Given $\lambda^{*}_{\delta}= 	-\frac{1}{t_b}\ln \bigg(  1- \frac{(1- \delta_2)p_{sub}c}{(1-\delta_1)(K+kc)}\bigg)$, \\
		1. if $\lambda^{*}_{\delta} \geq \overline{\lambda}$ then $r^*(\delta)=0$.\\
		2. if $\lambda^{*}_{\delta} \leq \underline{\lambda}$ then $r^*(\delta)=\infty$.\\
		3. if $ \underline{\lambda} < \lambda^{*}_{\delta} <\overline{\lambda}$ then there exists $r^*(\delta) \in (0, \infty)$.
	\end{lemma}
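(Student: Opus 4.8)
The plan is to show that the product-choice decision decouples from the adoption-timing decision and then reduces to a single scalar comparison that can be solved in closed form. \emph{Step 1 (decoupling).} The first term of the terminal cost $g(x;\delta,r)$ in \eqref{eq: terminalcostbasemodel} --- the expected discounted stream of net-metered consumption costs --- is identical for the two products, since we assumed equal installed capacity $c$ and a common net-metering mechanism; it involves neither the product index nor the stopping rule. Hence, conditional on adopting at demand level $x$, the preferred product is simply whichever attains the minimum in the second term of $g$, and that comparison depends neither on $x$ nor on $\tau_r$. This is what lets the product choice be analyzed in isolation, and once it is pinned down $g(\cdot;\delta,r)$ becomes an explicit function of $x$ that feeds into the optimal stopping problem \eqref{eq: optimalstoppingbasemodel}.

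\emph{Step 2 (closed-form threshold).} The rooftop branch of the $\min$ is the constant $(1-\delta_1)(K+kc)$. For the subscription branch, $\lambda(r)>\mu(r)>0$ gives $e^{-\lambda(r)t_b}\in(0,1)$, so the geometric series sums to
\[
(1-\delta_2)\,p_{sub}c\sum_{n=0}^{\infty}e^{-n\lambda(r)t_b}=\frac{(1-\delta_2)\,p_{sub}c}{1-e^{-\lambda(r)t_b}}.
\]
A household (weakly) prefers subscription iff this is at most the rooftop cost, i.e.
\[
\frac{(1-\delta_2)\,p_{sub}c}{1-e^{-\lambda(r)t_b}}\le(1-\delta_1)(K+kc)\ \Longleftrightarrow\ e^{-\lambda(r)t_b}\le 1-\frac{(1-\delta_2)\,p_{sub}c}{(1-\delta_1)(K+kc)}.
\]
The planner's constraint $(1-\delta_1)(K+kc)-(1-\delta_2)p_{sub}c\ge\epsilon>0$ makes the right-hand side a number in $(0,1)$, so taking logarithms is legitimate and gives the equivalence $\lambda(r)\ge\lambda^{*}_{\delta}$ with $\lambda^{*}_{\delta}=-\frac{1}{t_b}\ln\!\big(1-\frac{(1-\delta_2)p_{sub}c}{(1-\delta_1)(K+kc)}\big)>0$, exactly the quantity in the statement: subscription is chosen when the discount rate is high enough, rooftop otherwise.

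\emph{Step 3 (translation to income).} Now use that $\lambda$ is continuous, decreasing, and satisfies $\lambda(r)\in[\underline{\lambda},\overline{\lambda}]$ for all $r$, and split on the location of $\lambda^{*}_{\delta}$. If $\lambda^{*}_{\delta}\ge\overline{\lambda}$ then $\lambda(r)\le\overline{\lambda}\le\lambda^{*}_{\delta}$ for every $r>0$, so every household (weakly) prefers rooftop; taking $r^{*}(\delta)=0$ makes ``$r>r^{*}(\delta)\Rightarrow$ rooftop'' cover the whole population, which is case~1. Case~2 is the mirror image, with $r^{*}(\delta)=\infty$. If $\underline{\lambda}<\lambda^{*}_{\delta}<\overline{\lambda}$, then $\lambda^{*}_{\delta}$ lies strictly inside the range of $\lambda$, so the intermediate value theorem supplies $r^{*}(\delta)\in(0,\infty)$ with $\lambda(r^{*}(\delta))=\lambda^{*}_{\delta}$; monotonicity of $\lambda$ converts $\lambda(r)\ge\lambda^{*}_{\delta}$ into $r\le r^{*}(\delta)$, giving the claimed partition and case~3.

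\emph{Main obstacle.} The algebra is routine; the real care is bookkeeping. I would need to (i) confirm convergence of the subscription series and that the logarithm is applied to a number in $(0,1)$, both from $\lambda(r)>\mu(r)>0$ and the planner's third constraint; (ii) in case~3 argue $r^{*}(\delta)$ is genuinely interior rather than $0$ or $\infty$, which uses the strict inequalities together with the (implicit) fact that $\underline{\lambda}$ and $\overline{\lambda}$ are the tight infimum and supremum of the range of $\lambda$, so that ``$\lambda^{*}_{\delta}$ strictly inside $[\underline{\lambda},\overline{\lambda}]$'' really means ``in the range of $\lambda$''; and (iii) fix a tie-breaking convention on the indifference set $\{r:\lambda(r)=\lambda^{*}_{\delta}\}$ and note it has no effect on any adoption probability or cost integral appearing later. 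None of these is deep, so I expect the proof itself to be short.
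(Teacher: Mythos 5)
Your proposal is correct and follows essentially the same route as the paper: both reduce the product choice to comparing the rooftop cost with the summed geometric subscription stream, obtain the threshold $\lambda^{*}_{\delta}$ by rearranging that inequality, and then translate it to an income threshold via the monotonicity of $\lambda$ together with the intermediate value theorem in the interior case. The only cosmetic difference is that the paper works with the cost difference $\Delta(r)$ and verifies its monotonicity by differentiation, whereas you invert the inequality to $\lambda(r)\ge\lambda^{*}_{\delta}$ directly; your extra bookkeeping (log argument in $(0,1)$ via the planner's constraint, tie-breaking, tightness of $\underline{\lambda},\overline{\lambda}$) is consistent with what the paper implicitly assumes.
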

Given the monotonous nature of  $\lambda(r)$, the above result states that if $\lambda^{*}_{\delta}$, exclusively determined by the cost and subsidy parameters, is too low, all customers, irrespective of their income, will adopt  subscription solar.  If $\lambda^{*}_{\delta}$ is too high, all customers will adopt roof-top solar. In between these two extremes is a region where customers below an income threshold will adopt subscription solar, and those above will adopt rooftop solar. A clear articulation of this threshold allows us to focus on the timing of adoption for the rest of this section. A direct advantage of the previous result is that it simplifies the terminal cost function $g$ in \eqref{eq: terminalcostbasemodel}. We first present the expressions for this terminal cost function.
	\begin{lemma} 
		\label{lemma:closedform terminal cost}
		Assuming $ \underline{\lambda} < \lambda^{*}_{\delta} <\overline{\lambda}$,
		$$
		g(x;r, \delta)=
		\begin{cases}
			A(r)x - B(r) + \frac{(1-\delta_2) p_{sub}c}{1-e^{-\lambda(r) t_b}} &,  r <  r^{*}(\delta)\\
			A(r)x  -B(r)  + (1-\delta_1) (K + kc)   &,  r \geq r^{*}(\delta). 
		\end{cases}
		$$
		where,
		\begin{align}
			A(r)&= \frac{  p_b }{\mu(r)}  \frac{ (1-e^{-\mu(r) t_b})}{e^{(\lambda(r)- \mu(r))t_b}-1}. \\
			B(r)&=  \frac{ p_b }{e^{\lambda(r) t_b}-1} \eta( c, t_b).
		\end{align}	
	\end{lemma}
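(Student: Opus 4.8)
The plan is to evaluate the expectation in \eqref{eq: terminalcostbasemodel} explicitly under the dynamics \eqref{eq: demand process}, and then to use Lemma \ref{lemma:thresholde} to resolve the $\min$ according to the sign of $\lambda(r)-\lambda^{*}_{\delta}$. The only probabilistic input needed is that, since $X^{r}$ solves \eqref{eq: demand process} with $X^{r}_{0}=x$, one has $\bbE_{x}[X^{r}_{s}] = x\,e^{\mu(r)s}$ for all $s\geq 0$; note the volatility $\sigma$ drops out because only the first moment enters. Everything after that is deterministic summation.

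First I would justify moving the expectation through the infinite sum and the time integral. Because $X^{r}_{s}\geq 0$, Tonelli gives
$$\bbE_{x}\bigg[\sum_{n=1}^{\infty}e^{-n\lambda(r)t_b}\Big(\int_{(n-1)t_b}^{nt_b}X^{r}_{s}\,\dd s+\eta(c,t_b)\Big)\bigg]=\sum_{n=1}^{\infty}e^{-n\lambda(r)t_b}\Big(\frac{x}{\mu(r)}e^{\mu(r)(n-1)t_b}\big(e^{\mu(r)t_b}-1\big)+\eta(c,t_b)\Big),$$
and the right side is finite precisely because $\lambda(r)>\mu(r)>0$ by the standing assumption on $\lambda$. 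Since $|\int_{(n-1)t_b}^{nt_b}X^{r}_{s}\,\dd s-\eta(c,t_b)|\leq \int_{(n-1)t_b}^{nt_b}X^{r}_{s}\,\dd s+\eta(c,t_b)$ pointwise, this dominates the signed series, so Fubini applies to the original expression in \eqref{eq: terminalcostbasemodel}.

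Next I would collapse the two geometric series. Summing $e^{-n\lambda(r)t_b}$ against $\frac{x}{\mu(r)}e^{\mu(r)(n-1)t_b}(e^{\mu(r)t_b}-1)$ over $n\geq 1$ is a geometric series with ratio $e^{-(\lambda(r)-\mu(r))t_b}\in(0,1)$; gathering constants yields exactly $A(r)x$. Summing $p_b\,\eta(c,t_b)\,e^{-n\lambda(r)t_b}$ over $n\geq 1$ is a geometric series with ratio $e^{-\lambda(r)t_b}$ and total $B(r)$, entering with a minus sign. For the adoption term, $(1-\delta_2)p_{sub}c\sum_{n=0}^{\infty}e^{-n\lambda(r)t_b}=\frac{(1-\delta_2)p_{sub}c}{1-e^{-\lambda(r)t_b}}$, so the bracketed quantity under the $\min$ in \eqref{eq: terminalcostbasemodel} becomes $\min\!\big((1-\delta_1)(K+kc),\,\frac{(1-\delta_2)p_{sub}c}{1-e^{-\lambda(r)t_b}}\big)$.

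Finally I would resolve the $\min$. The subscription present value $\frac{(1-\delta_2)p_{sub}c}{1-e^{-\lambda(r)t_b}}$ is strictly decreasing in $\lambda(r)$, and by the definition of $\lambda^{*}_{\delta}$ in Lemma \ref{lemma:thresholde} it equals $(1-\delta_1)(K+kc)$ exactly at $\lambda(r)=\lambda^{*}_{\delta}$; hence it is the smaller of the two when $\lambda(r)>\lambda^{*}_{\delta}$ and the larger when $\lambda(r)<\lambda^{*}_{\delta}$. Since $\lambda$ is decreasing and $r^{*}(\delta)=\lambda^{-1}(\lambda^{*}_{\delta})$, these conditions translate to $r<r^{*}(\delta)$ and $r\geq r^{*}(\delta)$, respectively, which gives the two cases in the statement. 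I do not expect a genuine obstacle; the only point requiring care is the interchange-of-limits argument, which is why I would state the convergence condition $\lambda(r)>\mu(r)$ explicitly and lean on it twice.
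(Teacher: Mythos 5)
Your proposal is correct and follows essentially the same route as the paper: exchange expectation with the sum and integral, use $\bbE_x[X^r_s]=xe^{\mu(r)s}$ for the geometric Brownian motion, sum the geometric series (valid since $\lambda(r)>\mu(r)$) to get $A(r)x-B(r)$ plus the adoption term, and resolve the $\min$ via the threshold from Lemma \ref{lemma:thresholde}. Your domination argument for the signed series is in fact a slightly more careful justification than the paper's direct appeal to Tonelli, but it is the same computation.
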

We are now ready to solve the optimal stopping problem. We first show that for a customer from a given income level, an adoption time exists. We prove it by defining a set of admissible discounting policies and a set of income levels such that the customers belonging to this set will not adopt a solar product. We then show that this set is empty. Then, we define and analyze the continuation region of the optimal stopping formulation and derive an expression for a threshold demand level.  
	\begin{lemma}
	   \label{doublestarr}
        For the given structure of function $g$ and subsidy policy $\delta$, \\
        (i) there exists an electricity demand threshold for a customer from an income level such that solar adoption is an optimal choice if current electricity consumption is greater than the demand threshold,\\
        (ii) the threshold is given by
		\begin{align}
			\overline{X}(r, \delta)= \bigg( \frac{\gamma_1(r)}{\gamma_1(r)-1} \bigg) \bigg(  \frac{f(r, \delta) - B(r)}{\frac{p_b}{\lambda(r)- \mu(r)}-A(r)}   \bigg) 
		\end{align}
		where,
		\begin{align}
			f(r, \delta)= \mathbbm{1}_{\{r \leq r^*(\delta)\}} \bigg( (1-\delta_2)\frac{p_{sub}c}{1- e^{-\lambda(r)t_b} } \bigg) + \mathbbm{1}_{\{r > r^*(\delta)\}}(1-\delta_1)(K+kc).  
		\end{align}
	\end{lemma}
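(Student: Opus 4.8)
The plan is to reduce the household's optimal stopping problem to a standard perpetual-option problem for geometric Brownian motion with an affine payoff, and then solve it by the smooth-fit (variational-inequality) method.

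\textbf{Reduction to an affine perpetual-option problem.} Fix an income level $r$ and a subsidy vector $\delta$, and write $\lambda=\lambda(r)$, $\mu=\mu(r)$. Because $X^r$ is a geometric Brownian motion and $\lambda>\mu$, we have $\bbE_x[\int_0^\infty e^{-\lambda s}p_b X^r_s\,\dd s]=\frac{p_b x}{\lambda-\mu}$; splitting the running integral at $\tau_r$ and applying the strong Markov property gives, for every stopping time $\tau_r$,
\[
J(x,\tau_r;\delta,r)=\frac{p_b x}{\lambda-\mu}+\bbE_x\!\left[e^{-\lambda\tau_r}\Big(g(X^r_{\tau_r};\delta,r)-\tfrac{p_b X^r_{\tau_r}}{\lambda-\mu}\Big)\right].
\]
By the closed form for $g$ in the previous lemma, the bracketed expression equals $h(X^r_{\tau_r})$ with the affine function $h(y):=-\alpha\,y+\beta$, where $\alpha:=\frac{p_b}{\lambda-\mu}-A(r)$ and $\beta:=f(r,\delta)-B(r)$. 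Minimising $J$ over $\tau_r$ is therefore equivalent to minimising $\bbE_x[e^{-\lambda\tau_r}h(X^r_{\tau_r})]$. The elementary inequality $\frac{e^{b}-1}{b}>1>\frac{1-e^{-a}}{a}$ for all $a,b>0$, applied with $a=\mu t_b$ and $b=(\lambda-\mu)t_b$, shows $\alpha>0$, so $h$ is strictly decreasing; this positivity is also exactly what makes the denominator of $\overline X(r,\delta)$ positive. (In the degenerate case $\beta\le 0$ the threshold collapses to $0$ and adoption is immediate; below we treat the generic case $\beta>0$.)

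\textbf{An adoption time exists.} If the household never adopts, its cost is exactly $\frac{p_b x}{\lambda-\mu}$. Pick $L$ large enough that $h(L)<0$ and let $\tau_L:=\inf\{t\ge 0:X^r_t\ge L\}$, which is finite $\bbP$-a.s. since $\limsup_t X^r_t=\infty$. Using $\bbE_x[e^{-\lambda\tau_L}]=(x/L)^{\gamma_1(r)}\in(0,1]$ for a GBM, $J(x,\tau_L;\delta,r)=\frac{p_b x}{\lambda-\mu}+h(L)(x/L)^{\gamma_1(r)}<\frac{p_b x}{\lambda-\mu}$. Hence, for every $r$, never adopting is strictly dominated: the set of income levels whose households never adopt is empty, and we may restrict attention to finite stopping times.

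\textbf{Smooth fit and the threshold.} Conjecture a one-sided rule $\tau^*_r=\inf\{t\ge 0:X^r_t\ge\overline X\}$ with continuation region $(0,\overline X)$. On $(0,\overline X)$ the value function $u$ of $\inf_{\tau_r}\bbE_x[e^{-\lambda\tau_r}h(X^r_{\tau_r})]$ solves $\mathcal L_r u-\lambda u=0$, where $\mathcal L_r\psi(x)=\tfrac12\sigma^2 x^2\psi''(x)+\mu x\psi'(x)$; its general solution is $C_1x^{\gamma_1(r)}+C_2x^{\gamma_2(r)}$, with $\gamma_2(r)<0<1<\gamma_1(r)$ the roots of $\tfrac12\sigma^2\gamma(\gamma-1)+\mu\gamma-\lambda=0$. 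Boundedness of $u$ as $x\downarrow 0$ forces $C_2=0$. Imposing value matching $C_1\overline X^{\gamma_1(r)}=h(\overline X)$ and smooth pasting $C_1\gamma_1(r)\overline X^{\gamma_1(r)-1}=h'(\overline X)=-\alpha$ and eliminating $C_1$ yields $\overline X=\frac{\gamma_1(r)}{\gamma_1(r)-1}\,\frac{\beta}{\alpha}$, which is precisely the claimed expression for $\overline X(r,\delta)$, and $C_1=-\alpha/(\gamma_1(r)\overline X^{\gamma_1(r)-1})<0$.

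\textbf{Verification and the main obstacle.} It remains to verify that $u$, defined as $C_1x^{\gamma_1(r)}$ on $(0,\overline X)$ and as $h$ on $[\overline X,\infty)$, is the value function and that $\tau^*_r$ is optimal. One checks that $u\in C^1(0,\infty)$ and is $C^2$ away from $\overline X$; that $u\le h$ everywhere, which on $(0,\overline X)$ follows because $\phi:=h-C_1x^{\gamma_1(r)}$ has $\phi(\overline X)=\phi'(\overline X)=0$ and $\phi''\le 0$ (as $C_1<0$, $\gamma_1(r)>1$), hence $\phi\ge 0$; and that $\mathcal L_r u-\lambda u\ge 0$ everywhere, being $0$ on $(0,\overline X)$ and equal to $\alpha(\lambda-\mu)x-\lambda\beta$ on $[\overline X,\infty)$, which is nonnegative at $x=\overline X$ by the smooth-fit identities and increasing in $x$ thereafter. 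Applying It\^o's formula to $e^{-\lambda t}u(X^r_t)$, localising, and using $\lambda>\mu$ to bound $\bbE_x[e^{-\lambda t}X^r_t]$ and justify the limit passage, one obtains $u(x)\le\bbE_x[e^{-\lambda\tau_r}h(X^r_{\tau_r})]$ for every $\tau_r$, with equality for $\tau^*_r$ (which keeps $X^r$ in $\{\mathcal L_r u-\lambda u=0\}$ until adoption and stops on $\{u=h\}$). Undoing the reduction of the first step identifies $V(x;\delta,r)$ and $\tau^*_r$ and proves (i) and (ii). I expect the crux to be this verification: establishing the global variational inequalities ($u\le h$ and $\mathcal L_r u-\lambda u\ge 0$) and handling the localisation and limit arguments for the unbounded affine payoff $h$, where $\alpha>0$ and $\lambda(r)>\mu(r)$ are used repeatedly.
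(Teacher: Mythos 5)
Your proposal is correct and follows essentially the same route as the paper's proof: a one-sided continuation region for the GBM, the Cauchy--Euler solution $C_1 x^{\gamma_1(r)}$ (discarding the $x^{\gamma_2(r)}$ branch at the origin), value matching and smooth pasting yielding $\overline{X}(r,\delta)=\frac{\gamma_1(r)}{\gamma_1(r)-1}\,\frac{f(r,\delta)-B(r)}{\frac{p_b}{\lambda(r)-\mu(r)}-A(r)}$, and an It\^o/localisation verification; your preliminary subtraction of $\frac{p_b x}{\lambda(r)-\mu(r)}$ to get a purely terminal affine payoff is just a reformulation of the paper's time-augmented, nonhomogeneous ODE treatment, and your positivity $\alpha>0$ is exactly the paper's proposition that $(\mu(r)-\lambda(r))A(r)+p_b>0$. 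One small slip: in the check $u\le h$ you should have $\phi''\ge 0$ (since $C_1<0$ and $\gamma_1(r)>1$), i.e.\ $\phi$ is convex with a double zero at $\overline{X}(r,\delta)$, which still gives $\phi\ge 0$, so the conclusion stands.
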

The above result shows that customers will eventually adopt a solar product under an appropriate subsidy policy. But we are interested in when such adoptions occur. We address this question in two steps. The first result shows that a threshold electricity demand level exists, such that when a customer's demand exceeds this level, it is optimal for that customer to adopt a solar product. The following result provides a closed-form expression for this threshold and explicitly captures its dependence on the customer's income level and the subsidy policy. Later, in the second step, we use this threshold to determine the timing of adoption.
The closed-form expression of the threshold is instructive, as it connects the adoption cost parameters, income, and subsidy to the adoption threshold. Note that the function$ f(r, \delta)$\ captures the upfront costs for either option, the adoption cost for the rooftop solar, and the discounted sum of regular payments for the subscription solar. As this function increases, the threshold increases too. Based on this threshold, we are now ready to state a result regarding a household's time of adoption.  As a household observes the stochastic process that governs its demand, the demand level eventually meets the above threshold. At this time, the household adopts a solar product. We present a closed-form expression for the density function of this adoption time.
 \begin{lemma}
		\label{taurdensity}
		The probability density function of $\tau^{*}_{r}$ is given by
		\begin{align}
			f_{\tau^{*}_r}(t;\delta, r)= \frac{a(r, \delta)}{\sqrt{2\pi}t^{3/2}}  e^{\frac{-(a(r, \delta)-b(r)t)^2}{2t}} 
		\end{align}
		where,
		\begin{align}
			a(r, \delta)&:=  \frac{1}{\sigma} \log\bigg(\frac{\overline{X}(r, \delta)}{x}\bigg) \\
			b(r)  &:= \frac{1}{\sigma}\bigg(\mu(r) - \frac{\sigma^2}{2}\bigg)
		\end{align}
   	\end{lemma}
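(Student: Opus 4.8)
The plan is to translate the optimal stopping time $\tau^*_r$ into a first-passage time of the underlying demand process and then invoke the known density of the first hitting time of a level by a Brownian motion with drift. By Lemma~\ref{doublestarr}, the optimal stopping rule for a household of income $r$ is a threshold rule: the household adopts at the first instant its demand $X^r_t$ reaches the level $\overline{X}(r,\delta)$. Since $X_0^r = x < \overline{X}(r,\delta)$ (the interesting case where adoption has not already occurred), we have
\begin{align}
\tau^*_r = \inf\{t \geq 0 : X^r_t \geq \overline{X}(r,\delta)\}.
\end{align}
The first step is therefore to make this identification precise and to handle the degenerate cases (e.g.\ $x \geq \overline{X}(r,\delta)$, giving $\tau^*_r = 0$) as boundary remarks.

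The second step is to reduce the geometric Brownian motion to an arithmetic one. Writing $Y_t := \log X^r_t$, It\^o's formula applied to \eqref{eq: demand process} gives $\dd Y_t = (\mu(r) - \tfrac{\sigma^2}{2})\dd t + \sigma \dd W^r_t$, so $Y_t = \log x + (\mu(r) - \tfrac{\sigma^2}{2}) t + \sigma W^r_t$. The event $\{X^r_t \geq \overline{X}(r,\delta)\}$ is $\{Y_t \geq \log \overline{X}(r,\delta)\}$, and since $X^r$ has continuous paths and is increasing to $+\infty$ a.s.\ in the long run (given $\mu(r) > 0$, though one only needs non-degeneracy), the first passage is almost surely finite. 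Dividing through by $\sigma$, $\tau^*_r$ equals the first time the process $\tilde W_t + b(r) t$ reaches the level $a(r,\delta)$, where $\tilde W$ is a standard Brownian motion, $b(r) = \tfrac{1}{\sigma}(\mu(r) - \tfrac{\sigma^2}{2})$, and $a(r,\delta) = \tfrac{1}{\sigma}\log(\overline{X}(r,\delta)/x)$, matching the definitions in the statement.

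The third step is to quote the classical inverse Gaussian (Wald) density for the first passage time of a Brownian motion with drift: if $\nu(t)$ is the first time $W_t + \beta t$ hits level $\alpha > 0$, then $\nu$ has density $\frac{\alpha}{\sqrt{2\pi} t^{3/2}} \exp\!\big(-\frac{(\alpha - \beta t)^2}{2t}\big)$ on $t > 0$. This is standard (derivable via the reflection principle together with Girsanov's theorem, or found in \cite{shreve2004stochastic}, \cite{shiryaev2007optimal}). Substituting $\alpha = a(r,\delta)$ and $\beta = b(r)$ yields exactly the claimed expression for $f_{\tau^*_r}(t;\delta,r)$.

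I do not anticipate a serious obstacle here; the argument is essentially bookkeeping once Lemma~\ref{doublestarr} is in hand. The one point requiring mild care is the sign of $b(r)$: the Wald density formula as usually stated assumes the level $\alpha$ is positive but places no sign restriction on the drift $\beta$, and indeed when $\beta \leq 0$ the first passage time is defective (mass $< 1$) — however, since $\overline{X}(r,\delta) > x$ gives $a(r,\delta) > 0$ and the density expression is valid regardless, the only genuine modeling assumption needed for $\tau^*_r < \infty$ a.s.\ is $b(r) \geq 0$, i.e.\ $\mu(r) \geq \sigma^2/2$; if this is not guaranteed one should either add it as a hypothesis or simply report the density as the (possibly sub-probability) law of $\tau^*_r$ on $\{\tau^*_r < \infty\}$. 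I would flag this explicitly rather than sweep it aside.
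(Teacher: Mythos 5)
Your proposal is correct and follows essentially the same route as the paper: both identify $\tau^*_r$ with the first passage of the geometric Brownian motion $X^r$ to the threshold $\overline{X}(r,\delta)$ and then invoke the known law of the hitting time of a drifted Brownian motion at level $a(r,\delta)$ (the paper phrases this via the running-maximum CDF and differentiates, which is the same computation as quoting the inverse-Gaussian density directly). Your explicit flag about the drift sign — that the expression is a proper density only when $\mu(r)\geq \sigma^2/2$ and is otherwise the sub-probability law on $\{\tau^*_r<\infty\}$ — is a fair point that the paper's own proof passes over silently.
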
 
We visualize the relationships between demand threshold and probability density of adoption time at different income levels and subsidies in Figure \ref{fig:Demand threshold} and Figure \ref{fig: Probability density} respectively. Figure \ref{fig:Demand threshold} shows that the threshold decreases as the overall subsidy increases. However, the exact manner of the change in threshold depends on how that subsidy is divided between the two products. Figure \ref{fig: Probability density} shows that the density of adoption time shifts to the right for households of different income levels as subsidy and income increases. We also observe that the peaks of the density curve for low and high income households can be brought closure with proper subsidy policy.
   \begin{figure}
     \begin{minipage}{.5\textwidth}
        \includegraphics[width=1\linewidth]{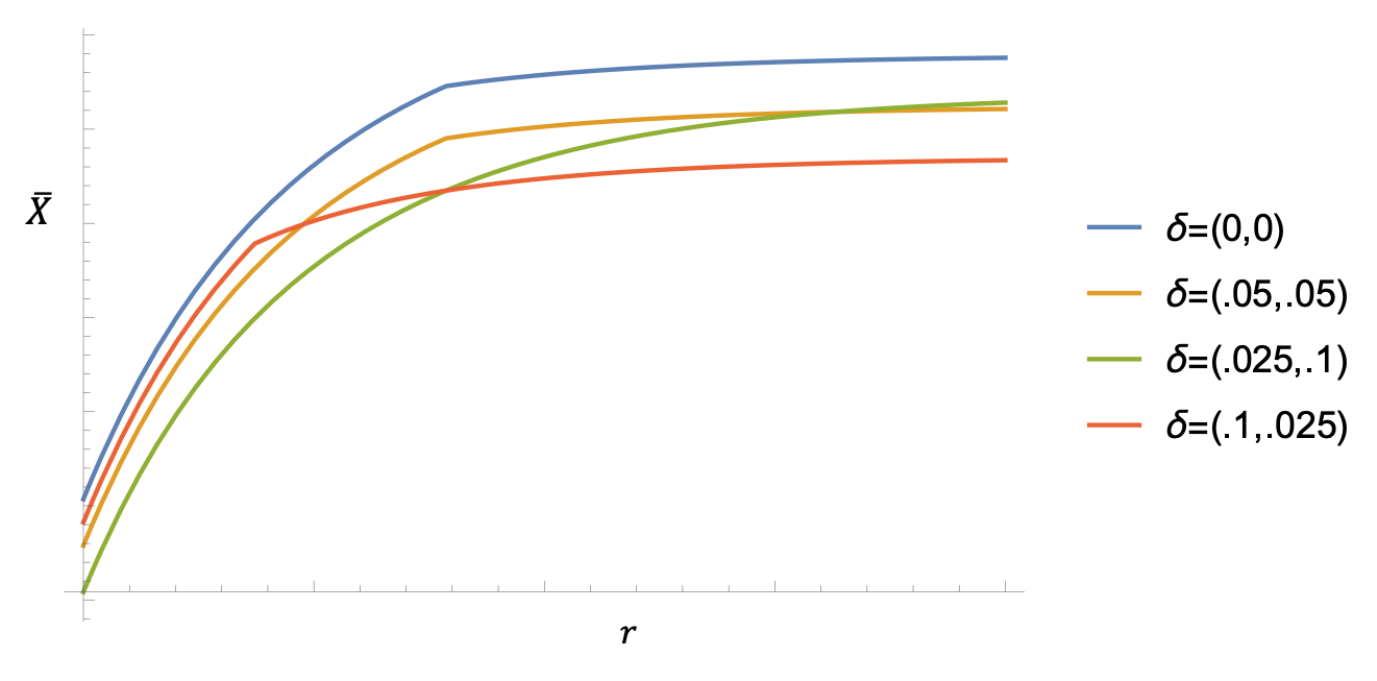}
        \captionof{figure}{$\overline{X}(r)$ for different subsidy policies}
        \label{fig:Demand threshold}
    \end{minipage}
    \begin{minipage}{.5\textwidth}
        \includegraphics[width=1\linewidth]{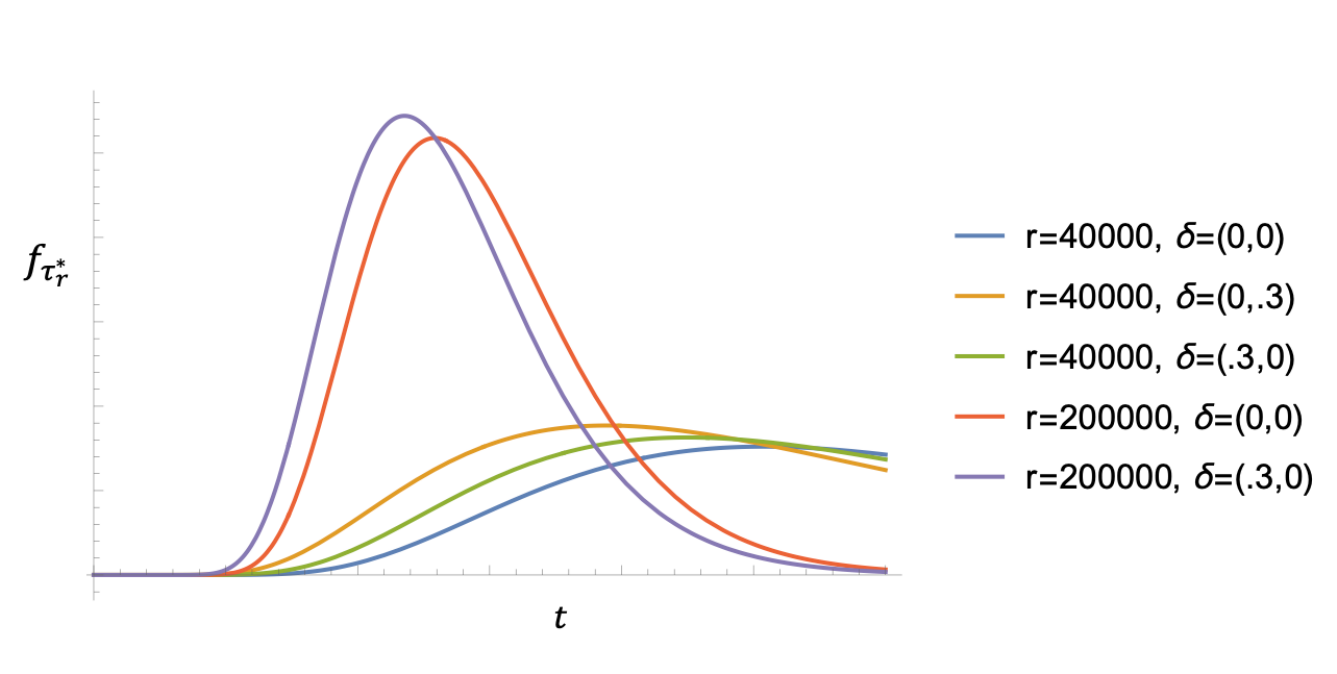}
        \captionof{figure}{Probability density of $\tau_r^{*}$}
        \label{fig: Probability density}
    \end{minipage}
    \end{figure}
To summarize this section, a household decides between the two products based on their adoption cost, income level, and the central planner's subsidy level. The product choice decision depends on comparing the upfront adoption cost of the rooftop solar product and the subscription cost stream of the subscription solar product. We show that the choice decision depends on an income-level threshold for which we provide a closed-form expression. Households from income levels higher than the threshold adopt rooftop solar, while those lower than the threshold adopt subscription solar. The adoption timing decision compares the value of waiting and the value of immediate adoption. We show that this decision, too, depends on a threshold for electricity demand, which is a function of income and subsidy levels. When a customer's demand exceeds this threshold, she finds it optimal to adopt. We provide a closed-form expression for this threshold and for the resulting adoption time density.
\subsection{Analysis of the Central Planner's Optimization Problem} \label{sub_analysis_central_planner}
In this section, we solve the bilevel formulation for the central planner presented in Section \ref{sub_sub_central_planner_optimization}. We consider two cases: homogenous subsidy and non-homogenous subsidy. In the homogenous subsidy case, we assume that the central planner assigns similar subsidies to both the products, i.e., $\delta_1=\delta_2=\delta$. Under the non-homogenous subsidy case, $\delta_1$ and $\delta_2$ can be unequal.\par
We rewrite the objective function of the central planner's problem below and refer to it as $z(\delta)$:
	\begin{align}
		z(\delta):= \bbE_{x} \bigg[\int_{0}^{\infty}  e^{-\lambda(r) {\tau^{*}_r}} \bigg( \mathbbm{1}_{\{r \leq r^{*}(\delta)\}} \delta_2 \sum_{n=0}^{\infty} e^{-n \lambda(r)  t_b } p_{sub}c+  \mathbbm{1}_{\{r > r^{*}(\delta)\} }\delta_1 (K+kc) \bigg) p(r) dr \bigg]. 
	\end{align}
\subsubsection{Homogenous subsidy}\label{sub_sub_homogenous_subsidy}
In this case, we can observe using Lemma \ref{lemma:thresholde} that $r^{*}(\delta)$ is independent of the discount policy $\delta$. This allows us to prove the following result.
	\begin{lemma} 
		\label{convexityoffeasibleregioncasefirst}
		The feasible region of the central planner's problem is a convex set.
	\end{lemma}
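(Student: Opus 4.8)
The plan is to exploit the fact that, under homogeneous subsidies, the decision variable collapses to a scalar $\delta = \delta_1 = \delta_2 \in [0,1]$, so the feasible region is a subset of $\bbR$ and is convex if and only if it is an interval. It therefore suffices to show that each of the four constraints carves out an interval of $\delta$-values, since an intersection of intervals is again an interval. The box constraint $0 \le \delta \le 1$ is trivially an interval; the ``uninteresting-subsidy'' constraint becomes $(1-\delta)\bigl(K + kc - p_{sub}c\bigr) \ge \epsilon$, which is affine in $\delta$ and hence also defines an interval; and the lower-level constraint only pins down $\tau^*_r$ and restricts nothing by itself. Thus the whole argument reduces to showing that the map
\[
\delta \longmapsto \bbP\bigl(\tau^* \le T \mid r \in (r^{\Lambda}_{lb}, r^{\Lambda}_{ub}),\, X_0 = x\bigr)
\]
is non-decreasing; then $\{\delta : \bbP(\tau^* \le T \mid \cdots) \ge \Lambda\}$ is an upper set of $[0,1]$, i.e.\ of the form $[\delta_{\min},1]\cap[0,1]$, which is an interval.

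To establish this monotonicity I would proceed in three steps. First, Lemma~\ref{lemma:thresholde} shows that in the homogeneous case $\lambda^*_\delta$ — and hence the income threshold $r^*(\delta) = r^*$ — is independent of $\delta$, so the partition of the population into rooftop- and subscription-adopters is frozen and the indicators in $f(r,\delta)$ of Lemma~\ref{doublestarr} do not move with $\delta$. Second, for each fixed $r$ I would verify that $\overline{X}(r,\delta)$ is non-increasing in $\delta$: in the closed form of Lemma~\ref{doublestarr} the factor $\gamma_1(r)/(\gamma_1(r)-1)$ and the denominator $p_b/(\lambda(r)-\mu(r)) - A(r)$ are positive and $\delta$-free, while
\[
f(r,\delta) = (1-\delta)\Bigl[\mathbbm{1}_{\{r\le r^*\}}\tfrac{p_{sub}c}{1-e^{-\lambda(r)t_b}} + \mathbbm{1}_{\{r> r^*\}}(K+kc)\Bigr]
\]
is affine and decreasing in $\delta$; this uses the auxiliary sign facts $\gamma_1(r) > 1$ (the positive root of the characteristic quadratic exceeds $1$ because $\lambda(r) > \mu(r)$) and $A(r) < p_b/(\lambda(r)-\mu(r))$. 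Third, since $X^r$ is a geometric Brownian motion started at $x$, a lower barrier is hit no later along every sample path, so $\tau^*_r$ is pathwise non-increasing in $\delta$; equivalently $a(r,\delta) = \tfrac1\sigma\log(\overline X(r,\delta)/x)$ is non-increasing in $\delta$ and the inverse-Gaussian distribution function $t\mapsto\int_0^t f_{\tau^*_r}(s;\delta,r)\,ds$ of Lemma~\ref{taurdensity} is decreasing in $a$ (and equals $1$ once $\delta$ is large enough that $\overline X(r,\delta)\le x$, i.e.\ $\tau^*_r = 0$), so $\bbP(\tau^*_r \le T \mid X_0 = x)$ is non-decreasing in $\delta$ for every $r$. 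The conditional probability in the constraint is a $p$-weighted average of these maps over $r\in(r^\Lambda_{lb},r^\Lambda_{ub})$ with $\delta$-independent weights, hence also non-decreasing in $\delta$, which closes the chain. Intersecting the four intervals then gives a convex feasible region.

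The step I expect to be the real obstacle is the second one: showing $\overline{X}(r,\delta)$ moves in the right direction, which rests on the sign conditions $\gamma_1(r) > 1$ and $A(r) < p_b/(\lambda(r)-\mu(r))$. The former is the standard root-location argument for the characteristic quadratic of the stopping problem under the maintained assumption $\lambda(r) > \mu(r)$. The latter is an elementary but slightly fiddly inequality comparing $\tfrac{1-e^{-\mu(r)t_b}}{\mu(r)\,(e^{(\lambda(r)-\mu(r))t_b}-1)}$ with $\tfrac{1}{\lambda(r)-\mu(r)}$, which I would settle by treating both sides as functions of $t_b$ (they agree in the limit $t_b\to 0$) and comparing derivatives, or equivalently by a convexity argument for $e^{\mu(r) s}$ on $[0,t_b]$. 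Once these are in hand, the pathwise monotonicity of first-passage times and the mixture step are routine.
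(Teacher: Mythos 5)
Your proposal is correct and takes essentially the same route as the paper: the paper's proof likewise reduces the claim to showing that $\overline{X}(r,\delta)$ is non-increasing in $\delta$, hence that the adoption probability $\bbP(\tau^*\leq T\mid X_0=x)$ is non-decreasing in $\delta$, and then concludes convexity of each constraint's superlevel set (phrased there via quasi-concavity of monotone functions rather than your interval argument, which is equivalent for the scalar $\delta$ of the homogeneous case). Your additional sign verifications ($\gamma_1(r)>1$ and $A(r)<p_b/(\lambda(r)-\mu(r))$) are consistent with facts the paper establishes within the proof of Lemma \ref{doublestarr}.
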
 
	\begin {lemma}
		\label{zdelta}
		$z(\delta)$ is a convex and increasing function of $\delta$.
	\end{lemma}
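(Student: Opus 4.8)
The plan is to reduce the statement to a one-dimensional convexity fact about a single income level, for which the closed forms already derived do all the work. First I would use the homogeneous-subsidy reduction noted above: by Lemma~\ref{lemma:thresholde} the threshold $r^{*}(\delta)\equiv r^{*}$ does not depend on $\delta$, so the parenthesized expression inside $z(\delta)$ equals $\delta\,q(r)$, where $q(r):=\mathbbm{1}_{\{r\le r^{*}\}}\frac{p_{sub}c}{1-e^{-\lambda(r)t_b}}+\mathbbm{1}_{\{r> r^{*}\}}(K+kc)\ge 0$ does \emph{not} involve $\delta$. Hence $z(\delta)=\delta\,h(\delta)$ with $h(\delta):=\bbE_x\!\left[\int_{0}^{\infty}e^{-\lambda(r)\tau^{*}_{r}}\,q(r)\,p(r)\,\dd r\right]\ge 0$, and since $\delta\ge 0$ we get $z'(\delta)=h(\delta)+\delta h'(\delta)\ge 0$ and $z''(\delta)=2h'(\delta)+\delta h''(\delta)\ge 0$ as soon as $h$ is nonnegative, nondecreasing and convex; so it suffices to prove those three properties for $h$.

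Next I would push $\bbE_x$ through the $r$-integral by Tonelli (the integrand is nonnegative and $\int q(r)p(r)\,\dd r<\infty$ since $q$ is bounded), obtaining $h(\delta)=\int_{0}^{\infty}\phi_{r}(\delta)\,q(r)\,p(r)\,\dd r$ with $\phi_{r}(\delta):=\bbE_x[e^{-\lambda(r)\tau^{*}_{r}}]$. Because $q(r)p(r)\ge 0$ and a nonnegative mixture of nondecreasing convex functions is again nondecreasing and convex, it is enough to show each $\phi_{r}(\cdot)$ is nondecreasing and convex (differentiation under the integral sign being justified by dominated convergence). To evaluate $\phi_{r}$ I would integrate the first-passage density of Lemma~\ref{taurdensity} against $e^{-\lambda(r)t}$: completing the square in the exponent rewrites the integral as $e^{-a(r,\delta)(\sqrt{b(r)^{2}+2\lambda(r)}-b(r))}$ times the total mass of another first-passage density, which equals $1$ because the auxiliary drift $\sqrt{b(r)^{2}+2\lambda(r)}$ is positive; using $\frac{1}{\sigma}(\sqrt{b(r)^{2}+2\lambda(r)}-b(r))=\gamma_{1}(r)$, the positive root of $\tfrac12\sigma^{2}\gamma(\gamma-1)+\mu(r)\gamma-\lambda(r)=0$ (the same constant appearing in Lemma~\ref{doublestarr}, which exceeds $1$ because $\lambda(r)>\mu(r)$), this gives $\phi_{r}(\delta)=\big(x/\overline{X}(r,\delta)\big)^{\gamma_{1}(r)}$.

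It then remains to record how this depends on $\delta$. By Lemma~\ref{doublestarr}, in the homogeneous case $f(r,\delta)=(1-\delta)q(r)$, so substituting into the threshold formula gives $\overline{X}(r,\delta)=C(r)\big((1-\delta)q(r)-B(r)\big)$ for a constant $C(r)>0$ not involving $\delta$; thus $\delta\mapsto\overline{X}(r,\delta)$ is affine and strictly decreasing (and positive on the relevant range). Consequently $\phi_{r}(\delta)$ equals a positive constant times $\big((1-\delta)q(r)-B(r)\big)^{-\gamma_{1}(r)}$, i.e.\ the map $t\mapsto t^{-\gamma_{1}(r)}$ — convex and decreasing on $(0,\infty)$ since $\gamma_{1}(r)>0$ — precomposed with an affine decreasing function of $\delta$, and hence is convex and increasing. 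Together with the reductions above, this proves the lemma.

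\emph{Expected main obstacle.} The delicate point is the boundary regime $\overline{X}(r,\delta)\le x$: there the household already starts in the stopping region, so $\tau^{*}_{r}\equiv 0$, $\phi_{r}\equiv 1$, and at the value of $\delta$ for which $\overline{X}(r,\delta)=x$ the left derivative of $\phi_{r}$ is strictly positive while its right derivative is $0$, which would break convexity of $\phi_{r}$ (and potentially of $h$). Since $\overline{X}(r,1)=-C(r)B(r)<0<x$ with $B(r)>0$, this regime is genuinely entered as $\delta\uparrow 1$, so the clean argument above needs the maintained assumption $\overline{X}(r,\delta)>x$ for every $r$ in the support of $p$ and every admissible $\delta$ — equivalently $a(r,\delta)=\frac{1}{\sigma}\log(\overline{X}(r,\delta)/x)>0$, which is in any event required for the density in Lemma~\ref{taurdensity} to be a genuine density — or an explicit parameter restriction forcing it, which I would state up front. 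The remaining items (the Tonelli and dominated-convergence interchanges, boundedness of $q$, and $\gamma_{1}(r)>1$ so that $C(r)>0$ and $\overline{X}(r,\delta)>0$) are routine consequences of $\lambda(r)>\mu(r)$ and the standing assumptions.
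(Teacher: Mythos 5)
Your proposal is correct and takes essentially the same route as the paper: Tonelli to pull the expectation through the $r$-integral, the closed-form Laplace transform $\bbE_x\big[e^{-\lambda(r)\tau^{*}_r}\big]=\big(x/\overline{X}(r,\delta)\big)^{\gamma}$, the fact that under a homogeneous subsidy $\overline{X}(r,\delta)$ is affine and decreasing in $\delta$ (with $r^{*}$ fixed), and hence pointwise-in-$r$ monotonicity and convexity of the integrand $\delta\big(x/\overline{X}(r,\delta)\big)^{\gamma}\nu(r)p(r)$ — the paper just verifies this by computing the first two $\delta$-derivatives of $\delta/\big((1-\delta)\nu(r)-B(r)\gamma_1(r)\big)^{l}$ directly rather than via your $z(\delta)=\delta h(\delta)$ product/composition reduction, which is a cosmetic difference. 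The boundary regime you flag is exactly what the paper disposes of cursorily as its Case 2 ($\overline{X}(r,\delta)=x$, where the integrand is linear), so your explicit caveat that $\overline{X}(r,\delta)>x$ must hold on the admissible range (the same condition needed for the adoption-time density to be a genuine density) is, if anything, more careful than the paper's own treatment.
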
 
The combination of Lemma \ref{convexityoffeasibleregioncasefirst} and Lemma \ref{zdelta} shows that the central planner's problem for the case $\delta_1=\delta_2=\delta$ is a convex optimization problem. Any traditional algorithm, like gradient descent, can be used to solve the problem optimally.
	 \begin{figure}
		\centering
		\begin{subfigure}[b]{0.4\textwidth}
			\includegraphics[width=\textwidth]{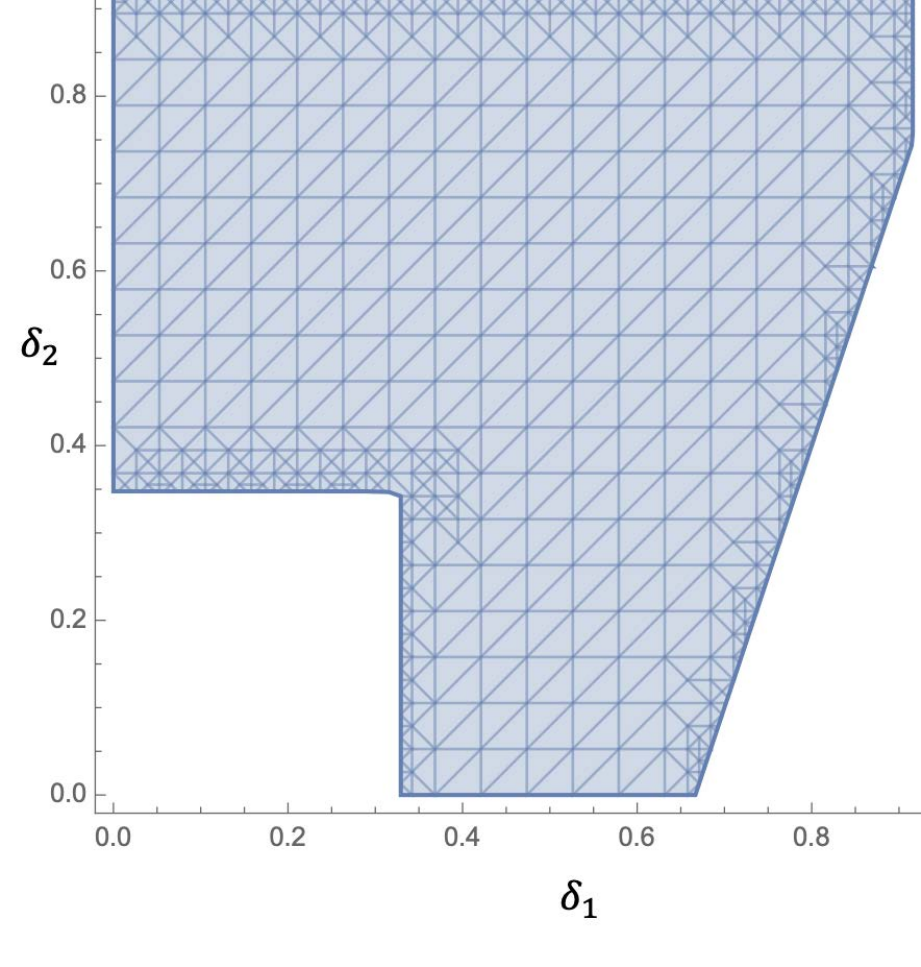} 
			\caption{}
			\label{fig non convex feasible region}
		\end{subfigure}
		\hfill
		\begin{subfigure}[b]{0.5\textwidth}
			\includegraphics[width=\textwidth]{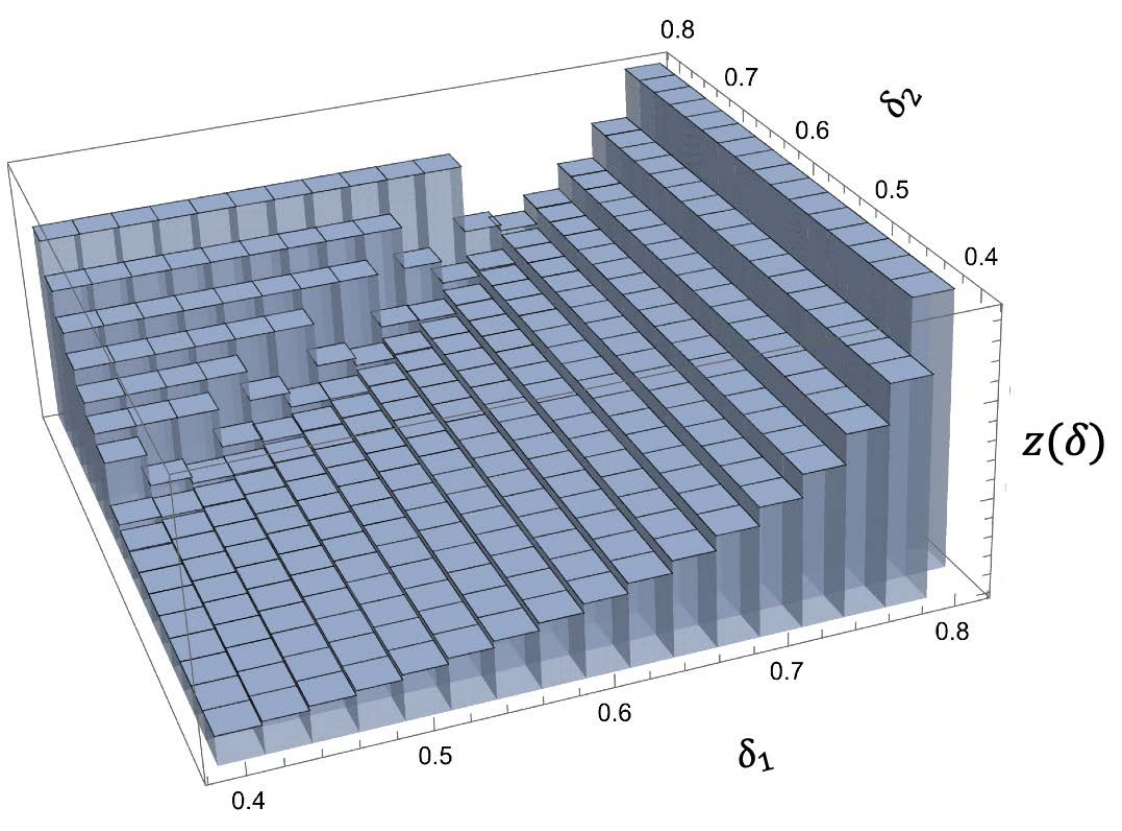} 
			\caption{}
			\label{fig Non convex obj}
		\end{subfigure}
		\caption{In case of heterogeneous subsidy (a) Feasible region (b) Objective function}
		\label{fig: Non-convex optimization problem}
	\end{figure}
\subsubsection{Heterogeneous subsidy}\label{sub_sub_non_homogenous_subsidy}
In the general case, where product subsidies can differ, the central planner's formulation is a non-convex optimization problem.  In Figure \ref{fig: Non-convex optimization problem}, we show results for a specific case of parameters when the feasible region and objective function are non-convex.The finding of non-convexity leads us to observe that a heuristic rule that simply combines two subsidy levels, each individually feasible, may not even provide a solution that satisfies all constraints.\par 
We use a grid search algorithm to find the optimal subsidy, $\delta^*$, which minimizes $z(\delta)$. We set a precision level, $prec$, and divide the $[0,1]^2$ space into disjoint grids of size $prec$ x $prec$. We choose a sequence of equally spaced points $(\delta_1^i, \delta_2^i)$ from the disjoint smaller grid, which satisfies the constraints of the central planner's optimization problem and evaluates $z(\delta)$ at each point. The pair $(\delta_1^i, \delta_2^i)$, which gives the lowest value of $z(\delta)$ is the optimal subsidy level. We can increase the accuracy of the algorithm by choosing a lower value of $prec$ and making the divisions finer, but it would increase the computational burden. We note that $z(\delta)$ is not Lipschitz continuous. Therefore, it is difficult to develop bounds for the algorithm.
\section{Discussion of the Impact of Parameters}\label{section_discussion}
After deriving theoretical results in the previous section, we now turn to developing insights into the impact of input parameters and what such insights may mean for the central planner. We first consider the role target level and target horizon play. Next, we analyze the impact of subsidy structure and what it means for the customers' decisions. We then develop insights into the impact of income distribution in the region.\\
In this section, we use numerical experiments to graph results and draw insights. The set of input parameter values for these numerical experiments is based on a variety of industry and research reports; we briefly mention them here. We use  $p_b=11.2$ cents/kWh, which was the average price of electricity to residential customers in Washington state in July 2023 (\href{https://www.eia.gov/electricity/monthly/epm_table_grapher.php?t=epmt_5_6_a}{EIA}) and $t_b= 1$ month. We use $c= 6.35$ kW, which is the mean of the average roof-top solar capacity of 6.21 kW and the average non-ownership solar capacity of 6.48 kW in the US (\cite{agrawal2022non}). We assume $\eta(6.48,1)= 500$ kWh per month for $c= 6.35$ kW based on the energy calculator(\cite{pvwattnrel}). We set $p_{sub}=\$ 22/$kW per month$, K=\$10,000$ and $k= \$4,000$, which are in the range of prices without rebates by the government. We assume that $\underline{\mu}=.01$ and $\overline{\mu}=.04$ i.e., the annual rate of increase in electricity demand is bounded between $1\%$ and $4\%$. We use annual discounting with rates $\underline{\lambda}=.045$ and $\overline{\lambda}=.06$ . Based on \cite{angelus2021distributed}, we have $x=1.6$ kW per hour.
\subsection{Adoption Level and Time Targets}\label{sub_adoption_level_time}
As discussed earlier, a distinguishing feature of our model is the focus on the adoption level and adoption time targets for the central planner. This section addresses the impact of these targets on both the central planner and the customer. It is straightforward to see the impact of these targets on the central planner's problem. A longer adoption time target  $T$ leads to lower optimal subsidy costs. This is because a longer horizon results in a larger feasible region which will be a super-set of the original feasible region due to the monotonicity of the probability measure. Using similar arguments, we can say that the higher value of adoption level target $\Lambda$ leads to higher optimal subsidy costs. In summary, more challenging targets, either high adoption levels or short planning horizons,  increase the expected discounted subsidy cost.\par
Our numerical analysis underlines the impact on the optimal central planner cost as the adoption level and time targets change. It suggests that the two types of targets can be used as substitutes for each other. If the central planner must set a tight adoption-level target, it can still control costs by relaxing the time targets. From another perspective, it is not enough to advocate for larger adoption-level targets. Such advocacy must also emphasize time horizons for achieving them; otherwise, central planners may choose the control costs by stretching the horizon over which they promise to meet the adoption level targets. \par
But as targets change, and as the central planner optimally adjusts subsidies in response, how do they influence a household's choice, rooftop or subscription solar? The length of the adoption time and level targets impact the central planner's subsidy decisions for different products. Though these targets are not part of a household's optimization problem, they impact the household's decision through subsidies assigned by the central planner to achieve the target adoption level in the target time. We explore the impact of subsidies in the next section.\par
\subsection{Effect of Subsidy on Households' Product Choice}\label{sub_effect_subsidy_on_household}
As we discussed in Section \ref{sub_analysis_household}, the value of $r^*(\delta)$ indicates the threshold for a household's choice of different solar products in the region. A household belonging to income level $r$ prefers subscription solar if $r \leq r^*(\delta)$ and roof-top solar if  $r > r^*(\delta)$. We visualize this threshold in Figure \ref{fig: income prefernce}.  A central planner can influence this choice by changing the subsidy policy. Though if the subsidy is homogeneous, i.e., $\delta_1=\delta_2= \delta$, using Lemma \ref{lemma:thresholde}, we note that $r^{*}(\delta)$ is independent of the discount policy $\delta$  and, therefore, the central planner cannot control the product choice by altering the subsidy. Under a homogeneous subsidy policy, a change in targets does not affect the choice of a household and only influences the decision to adopt or not.\par 
In the case of heterogeneous subsidy, the central planner has an additional lever at its disposal. It can differentiate between subsidies and thus also influence a household's product choice. To better understand this lever, we study the change in threshold income level, $r^*(\delta)$, when the subsidy for one of the products is changed. We show that the threshold moves to the left as $\delta_1$ increases, i.e., there are households whose choice shifts from subscription solar to rooftop solar.  The threshold moves to the right as $\delta_2$ increases, i.e., there are some households who chose rooftop solar earlier and now prefer subscription solar. Lemma \ref{propertyrstardelta}, states this formally and confirms the intuition that a higher subsidy for a product choice will favor its adoption. 
    	\begin{lemma}
	    	\label{propertyrstardelta}
		  $r^*(\delta)$ is a non-increasing function of $\delta_1$ and a non-decreasing function of $\delta_2$.
    	\end{lemma}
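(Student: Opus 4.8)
The plan is to work directly from the closed-form expression for the threshold discount rate given in Lemma \ref{lemma:thresholde}, namely
\begin{align}
\lambda^{*}_{\delta}= -\frac{1}{t_b}\ln \bigg( 1- \frac{(1- \delta_2)p_{sub}c}{(1-\delta_1)(K+kc)}\bigg),
\end{align}
together with the relation $r^*(\delta)=\lambda^{-1}(\lambda^{*}_{\delta})$ and the fact that $\lambda(\cdot)$ is decreasing (so $\lambda^{-1}$ is decreasing as well). The key observation is that monotonicity of $r^*(\delta)$ in each $\delta_i$ is equivalent to the \emph{reverse} monotonicity of $\lambda^{*}_{\delta}$ in that argument, since composing with the decreasing function $\lambda^{-1}$ flips the direction. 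So it suffices to show that $\lambda^{*}_{\delta}$ is non-decreasing in $\delta_1$ and non-increasing in $\delta_2$ on the relevant domain.

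First I would isolate the quantity $q(\delta):=\dfrac{(1-\delta_2)p_{sub}c}{(1-\delta_1)(K+kc)}$ and note that $\lambda^{*}_{\delta}=-\frac{1}{t_b}\ln(1-q(\delta))$ is a strictly increasing function of $q$ on $q\in(0,1)$ (the composition $q\mapsto 1-q$ is decreasing, $\ln$ is increasing, and the outer $-\frac1{t_b}$ flips sign once more, for a net increase). Then I would check how $q$ moves with each subsidy: $\partial q/\partial \delta_1 = \dfrac{(1-\delta_2)p_{sub}c}{(1-\delta_1)^2(K+kc)}>0$, so $q$ and hence $\lambda^{*}_\delta$ is increasing in $\delta_1$; and $\partial q/\partial \delta_2 = -\dfrac{p_{sub}c}{(1-\delta_1)(K+kc)}<0$, so $q$ and hence $\lambda^{*}_\delta$ is decreasing in $\delta_2$. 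Applying the decreasing map $\lambda^{-1}$ then yields that $r^*(\delta)$ is non-increasing in $\delta_1$ and non-decreasing in $\delta_2$, which is the claim.

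The one point requiring care — and the main obstacle, though a mild one — is handling the boundary/degenerate cases from Lemma \ref{lemma:thresholde} so that the monotonicity statement holds globally on $[0,1]^2$ (intersected with the third constraint of the planner's problem) rather than only on the interior regime $\underline{\lambda}<\lambda^{*}_{\delta}<\overline{\lambda}$. In the regime $\lambda^{*}_{\delta}\ge\overline\lambda$ we have $r^*(\delta)=0$, and in the regime $\lambda^{*}_{\delta}\le\underline\lambda$ we have $r^*(\delta)=\infty$; I would argue that as $\delta_1$ increases (resp.\ $\delta_2$ decreases) $\lambda^{*}_\delta$ only increases, so one can only move \emph{from} the $r^*=\infty$ regime \emph{toward} and then \emph{into} the $r^*=0$ regime, never the reverse, and within the interior regime the inverse-function argument above applies; monotonicity is therefore preserved across the regime transitions. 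I would also note in passing that the planner's feasibility constraint $(1-\delta_1)(K+kc)-(1-\delta_2)p_{sub}c\ge\epsilon$ guarantees $q(\delta)<1$ so that $\ln(1-q)$ and hence $\lambda^{*}_\delta$ is well defined on the feasible set, and that if one wants strict monotonicity it holds strictly wherever $r^*(\delta)\in(0,\infty)$ and $\lambda^{-1}$ is strictly decreasing. The argument is essentially a sign-chasing exercise once the expression for $\lambda^{*}_\delta$ is in hand.
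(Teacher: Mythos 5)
Your proposal is correct and follows essentially the same route as the paper's proof: differentiate (or sign-chase) the closed-form expression for $\lambda^{*}_{\delta}$ to see it is non-decreasing in $\delta_1$ and non-increasing in $\delta_2$, then invert through the decreasing function $\lambda(\cdot)$. Your added care about the boundary regimes ($r^*=0$ or $r^*=\infty$) and well-definedness under the planner's constraint goes slightly beyond the paper's terse argument but does not change the approach.
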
 
    \begin{figure}
		\centering
		\begin{subfigure}[b]{0.45\textwidth}
			\includegraphics[width=\textwidth]{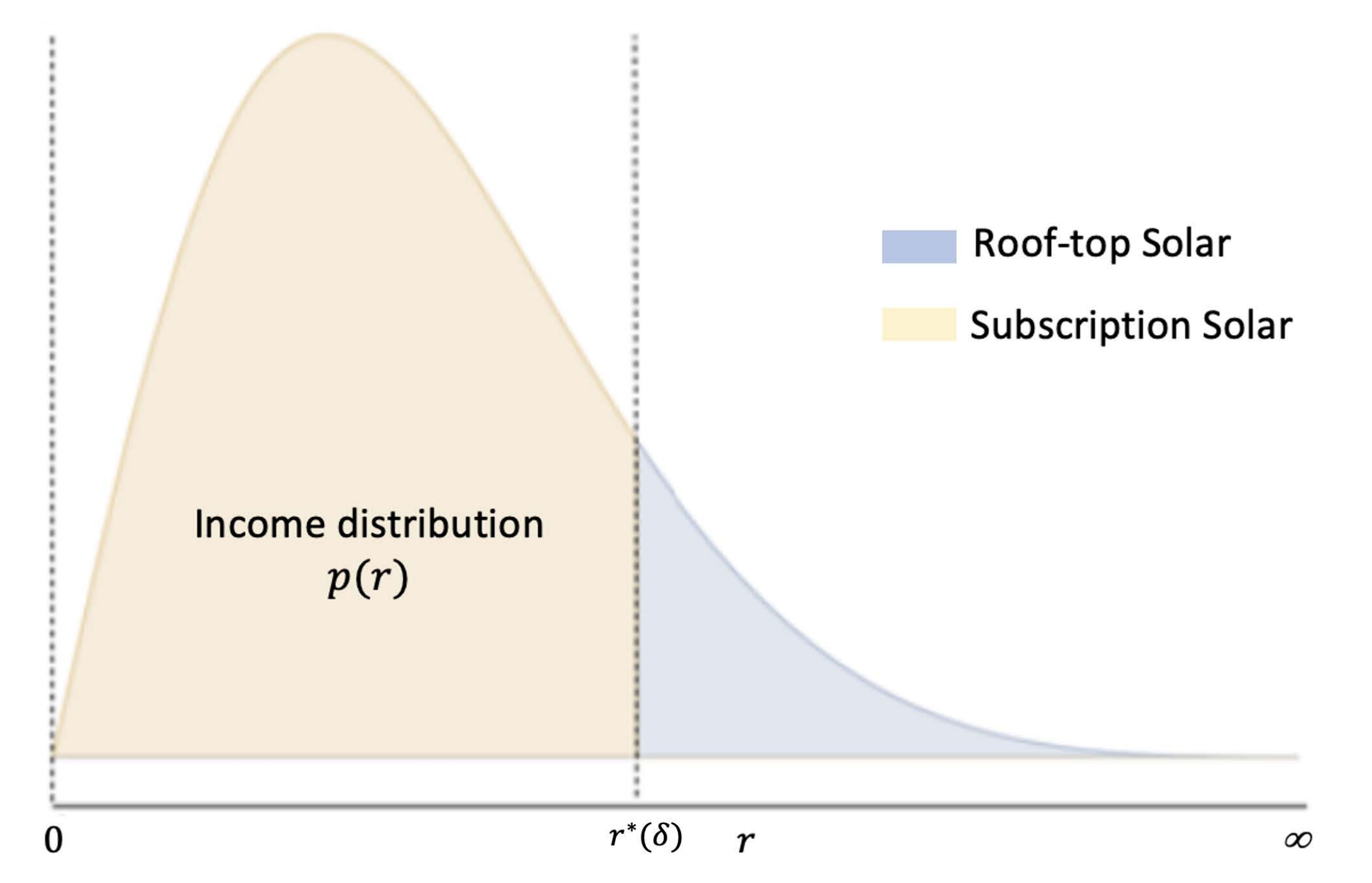} 
			\caption{}
			\label{fig: income prefernce}
		\end{subfigure}
		\hfill
		\begin{subfigure}[b]{0.5\textwidth}
			\includegraphics[width=\textwidth]{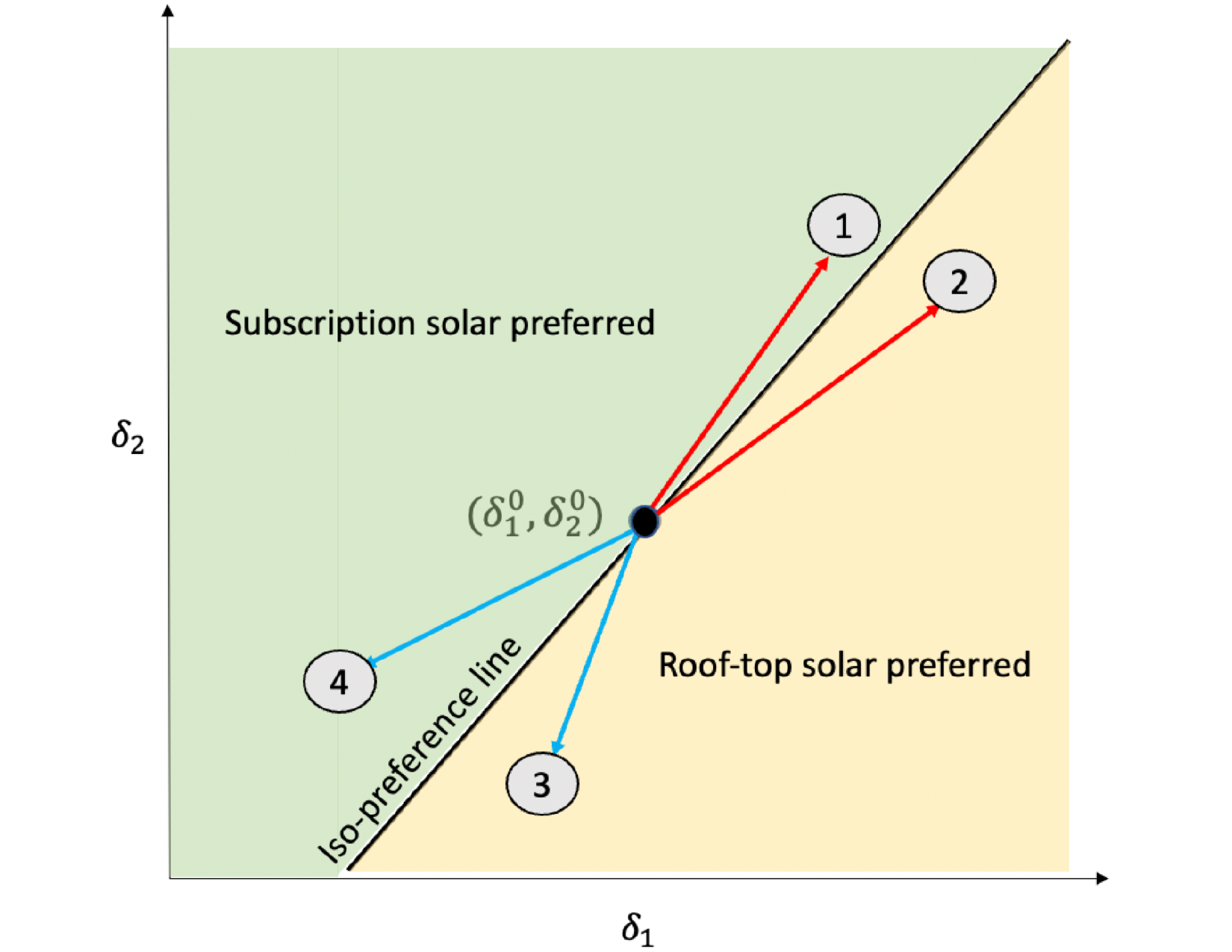} 
			\caption{}
			\label{fig: simultaneous subsidy change}
		\end{subfigure}
		\caption{(a) Product preference in the region (b) Effect of subsidy change on customer product preference}
		\label{fig: subsidy preference}
	\end{figure}
The impact on preference when subsidy for both products is changed simultaneously is not as intuitive. If both subsidies are increased (indicated by the red arrow in Figure \ref{fig: simultaneous subsidy change}) or both subsidies are decreased (indicated by the blue arrow), the impact on preference is not immediate. We now present a method that allows us to understand the impact of such subsidy changes. Let $ \delta^0 =(\delta^0_1, \delta^0_2)$ be the initial subsidy and $\delta^1=(\delta^1_1, \delta^1_2)$ be the changed subsidy such that $\delta^0_1 \neq \delta^1_1$ and $\delta^0_2  \neq \delta^1_2$. \par
We find that there exists a line passing through $(\delta^0_1, \delta^0_2)$ such that if $(\delta^1_1, \delta^1_2)$ lies on the line, then the household preferences in the region do not change. We call such a line \textit{iso-preference} line. If  $(\delta^1_1, \delta^1_2)$ lies above the iso-preference line, then preference is shifting towards subscription solar, and if $(\delta^1_1, \delta^1_2)$ lies below the iso-preference line, then preference is shifting towards roof-top solar, as shown in the Figure \ref{fig: simultaneous subsidy change}. From Lemma \ref{lemma:thresholde}, we know that 
	 \begin{align}
	    	r^{*}(\delta)= \lambda^{-1} \bigg(  \frac{-1}{t_b} \ln\bigg[ 1- q(\delta)\frac{p_{sub} c}{K+kc} \bigg]  \bigg), 
   	\end{align} 
where $q(\delta):= (1-\delta_2)/(1-\delta_1)$. In the equation above, for any $\delta^1$ such that $q(\delta^1)= q(\delta^0)$, $r^{*}(\delta^1)=r^{*}(\delta^0)$. This gives us the equation of the iso-preference line, $\delta^{1}_2= q(\delta^0)\delta^{1}_1 + 1-q(\delta^0)$. If $\delta^{1}_2> q(\delta^0)\delta^{1}_1 + 1-q(\delta^0)$ then using the decreasing property of $\lambda(\cdot)$ function we can show that $r^{*}(\delta^1)> r^{*}(\delta^0)$. Similarly, if $\delta^{1}_2 < q(\delta^0)\delta^{1}_1 + 1-q(\delta^0)$ then $r^{*}(\delta^1)< r^{*}(\delta^0)$. In Figure \ref{fig: simultaneous subsidy change}, we see that since points $1$ and $4$ lie above the iso-preference line through $\delta^0$, households'  choice would move towards subscription solar. A similar analogy can be made for points $2$ and $3$.\par
How should the central planner use the difference in subsidies to influence a household's choice, and which product should it favor: rooftop or subscription? The central planner's choice of optimal subsidies for different products depends not only on the planning horizon and adoption level but also on the income distribution in the region. We know from Lemma \ref{lemma:thresholde} that the high (low) income customers prefer rooftop (subscription) solar. Depending on the targets and the income distribution, the central planner may allocate a higher subsidy to the product preferred by the higher number of households, thus promoting early adoptions.  In our numerical experiments, when faced with a population distribution skewed toward lower incomes, the central planner finds it optimal to offer higher subsidies for community solar in order to meet tight adoption levels and time targets.  We note that this is counter to what we find in practice where planners first offered incentives for rooftop solar products.
\subsection{Income Inequality}\label{sub_income_inequality}
Another distinctive feature of this work is the explicit modeling of the income spectrum. This allows us to address questions about the impact of income distribution on solar adoption: How does income inequality impact the adoption, and who gets the lion's share of subsidy offered by the central planner? To address these questions, we assume an explicit form for the income distribution in the region. The individual income in the region follows a log-logistic distribution, i.e., $p(r)= \frac{(\beta/\alpha)(r/\alpha)^{\beta-1}}{(1+(r/\alpha)^\beta)^2}$. Here, $\alpha>0$ is a scale parameter, and $\beta>0$ is a shape parameter. According to the literature (\cite{clementi2005pareto} ,\cite{druagulescu2001evidence}), this is a well-defended choice for capturing actual income dispersion (other possible choices of income distributions are log-normal, exponential, and Pareto). In a log-logistic income distribution, the median income of the region is given by the scale parameter $\alpha$, and the Gini coefficient of the region is given by $1/\beta$. These two properties enable us to capture differences in income characteristics of different regional populations. This also provides a direct way for the central planner to incorporate income inequality in the region as it seeks an effective and fair optimal subsidy policy. We assume that $\beta>1$; this limits us to $G<1$ and makes the income distribution unimodal. The choice of log-logistic distribution is also useful because we can still derive the density function of the household adoption time, as we show in the next result.
	\begin{lemma} 
		\label{densitytau}
		The probability density function of $\tau^*$ is given by
		\begin{align}
			f_{\tau^*}(t;x, \delta,G, \alpha) = \frac{1}{\sqrt{2\pi}} \frac{1}{(G \alpha) t^{3/2}} \int_{0}^{\infty} a(r, \delta) \frac{(r/\alpha)^{\frac{1-G}{G}}}{(1+ (r/\alpha)^{1/G})^2} e^{\frac{-(a(r, \delta)-b(r)t)^2}{2t}} dr. 
		\end{align}
	\end{lemma}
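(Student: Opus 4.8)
The plan is to obtain $f_{\tau^*}$ by specializing the mixture formula
$f_{\tau^*}(t;\delta,x)=\int_0^\infty f_{\tau^*_r}(t;\delta,r)\,p(r)\,dr$
from Section~\ref{sub_sub_central_planner_optimization} to the log-logistic income density, using the closed form for $f_{\tau^*_r}$ from Lemma~\ref{taurdensity}. Concretely, I would first recall that under the log-logistic model the shape parameter satisfies $\beta=1/G$ and the scale parameter is $\alpha$, so that
$p(r)=\dfrac{(\beta/\alpha)(r/\alpha)^{\beta-1}}{(1+(r/\alpha)^\beta)^2}
=\dfrac{1}{G\alpha}\,\dfrac{(r/\alpha)^{1/G-1}}{(1+(r/\alpha)^{1/G})^2}.$
Since $1/G-1=(1-G)/G$, this is exactly the density-weight factor $\dfrac{1}{G\alpha}\dfrac{(r/\alpha)^{(1-G)/G}}{(1+(r/\alpha)^{1/G})^2}$ appearing in the claimed expression.

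Next I would substitute the expression from Lemma~\ref{taurdensity},
$f_{\tau^*_r}(t;\delta,r)=\dfrac{a(r,\delta)}{\sqrt{2\pi}\,t^{3/2}}\,e^{-(a(r,\delta)-b(r)t)^2/(2t)},$
into the integral and pull the $r$-independent constants $\dfrac{1}{\sqrt{2\pi}\,t^{3/2}}$ outside. This yields
$f_{\tau^*}(t;x,\delta,G,\alpha)=\dfrac{1}{\sqrt{2\pi}\,t^{3/2}}\int_0^\infty a(r,\delta)\,e^{-(a(r,\delta)-b(r)t)^2/(2t)}\,\dfrac{1}{G\alpha}\dfrac{(r/\alpha)^{(1-G)/G}}{(1+(r/\alpha)^{1/G})^2}\,dr,$
which, after absorbing the $\dfrac{1}{G\alpha}$ into the stated prefactor $\dfrac{1}{\sqrt{2\pi}}\dfrac{1}{(G\alpha)t^{3/2}}$, is precisely the claimed formula. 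I would also note the implicit dependence: $a(r,\delta)$ depends on $x$ through $a(r,\delta)=\tfrac1\sigma\log(\overline{X}(r,\delta)/x)$ and on the cost/subsidy parameters through $\overline{X}(r,\delta)$ (Lemma~\ref{doublestarr}), while $b(r)$ depends only on $\mu(r)$ and $\sigma$; hence all the listed arguments $(t;x,\delta,G,\alpha)$ genuinely enter.

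The only substantive points to check, rather than pure substitution, are (a) that the mixture integral is well-defined, i.e.\ that $\int_0^\infty f_{\tau^*_r}(t;\delta,r)p(r)\,dr<\infty$ for each fixed $t>0$, and (b) that $f_{\tau^*_r}$ itself is a valid (inverse-Gaussian-type) density for each $r$ in the relevant range, which is where the hypotheses $\lambda(r)>\mu(r)$, $\mu(r)>0$, and $\overline{X}(r,\delta)>x$ (so that $a(r,\delta)>0$) are used. For (a), I would bound $a(r,\delta)\,e^{-(a(r,\delta)-b(r)t)^2/(2t)}$ uniformly: since $\lambda,\mu$ are bounded and the cost parameters are fixed, $a(r,\delta)$ and $b(r)$ range over bounded sets as $r$ varies, so the integrand is bounded by a constant times $p(r)$, which is integrable; alternatively one can simply invoke that $f_{\tau^*_r}(t;\cdot)\le C_t$ for a constant depending on $t$ only. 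I expect the main (mild) obstacle to be handling edge cases in which $r^*(\delta)\in(0,\infty)$ so that $\overline{X}(r,\delta)$ switches branches at $r=r^*(\delta)$ via the indicator in $f(r,\delta)$: the integrand may have a jump discontinuity there, but it remains bounded and the integral over a single point is zero, so the formula is unaffected — I would remark that the integral is understood to split at $r^*(\delta)$ when $\underline\lambda<\lambda^*_\delta<\overline\lambda$, using the two-branch form of $\overline{X}(r,\delta)$ from Lemma~\ref{doublestarr} in each piece. Everything else is a direct algebraic rewrite, so the proof is short.
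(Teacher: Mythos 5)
Your proposal is correct and follows essentially the same route as the paper: substitute the log-logistic density with $\beta=1/G$ and the closed-form per-income density from Lemma \ref{taurdensity} into the mixture definition of $f_{\tau^*}$, then collect the $r$-independent factors. The extra remarks on integrability and the branch switch at $r^*(\delta)$ are harmless additions beyond what the paper records.
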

    \begin{figure}[h]
		\centering
		\includegraphics[width=.8\textwidth]{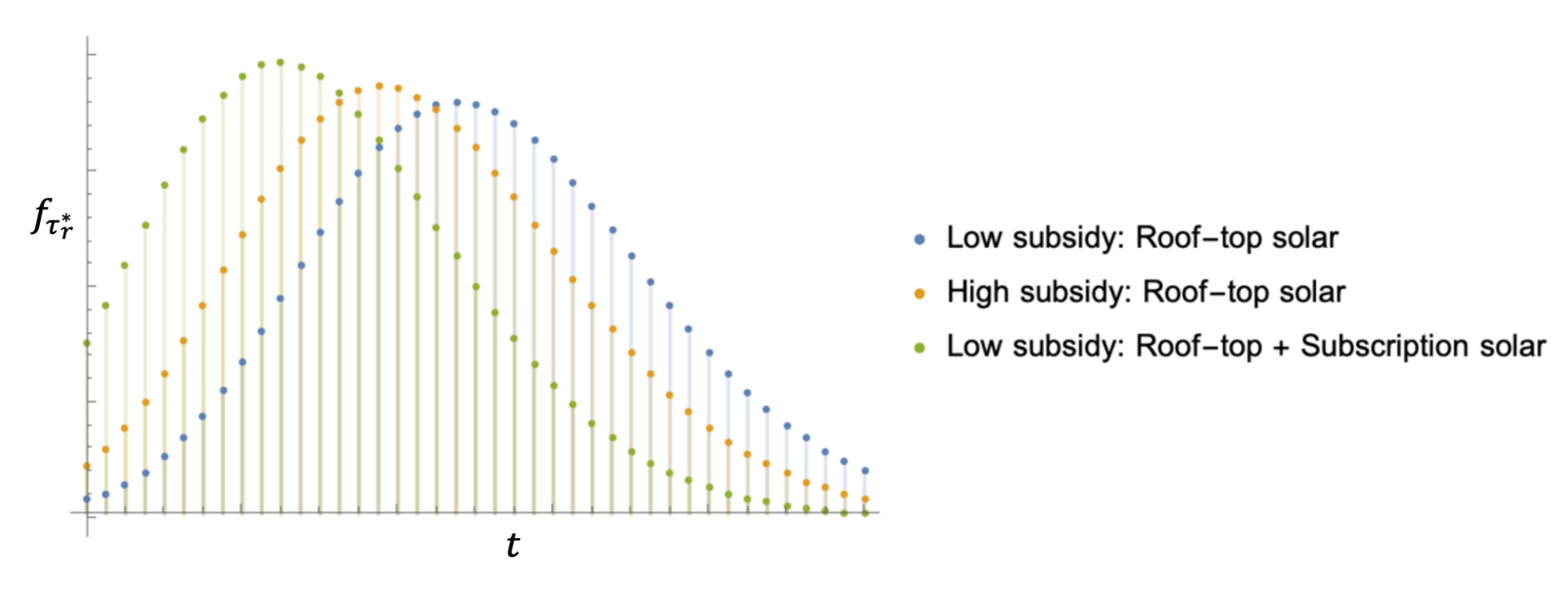} 
		\caption{Probability density function of $\tau^*$ for different product-subsidy offerings}
		\label{high income inequality product heterogeneity}
	\end{figure}
In the above Lemma \ref{densitytau}, we develop a relationship between the density function of the optimal adoption time in the region and the Gini coefficient. We can interpret the probability density function of $\tau^*$ as a weighted average of the probability density of adoption time at different income levels where the weights depend on the region's income distribution. This Lemma allows us to study the interplay between income inequality, subsidy policies, and adoption time. In Figure \ref{high income inequality product heterogeneity} we compare the effectiveness of a policy of providing households a choice of two solar products in a region of high-income inequality. The green curve represents a policy where the planner offers a low subsidy level but offers it for both products. In contrast, the orange curve represents a policy where the planner offers a higher subsidy level but  only for rooftop solar products. As the Figure shows, the mode of the adoption time distribution is smaller for the former policy than for the latter. In other words, a low subsidy level with two products can deliver faster adoption than a high subsidy level with only one product. Further reduction in subsidy for the one-product policy can delay the adoption, as is apparent from the blue curve. This observation offers useful policy advice to the central planner: offering multiple options may reduce the total subsidy cost.
\section{Extensions}\label{section_extension}
In this section, we develop several extensions of our model. We start by generalizing the net-metering credit mechanism and then add other features.
\subsection{Solar Energy Crediting Mechanisms}\label{sub_solar_crediting}
Our analysis till now assumed net-metering as the crediting mechanism for excess solar energy generation. Though a large number of states use net-metering, some states like Arizona, Kentucky, and South Carolina use net-billing. Under net-billing, the households are credited at a rate less than the retail electricity price.  In Arizona, a household is compensated at wholesale electricity price, while in Kentucky and South Carolina, the crediting price is set between retail and wholesale electricity rates. As of 2024, some states like Alabama, South Dakota, and Tennessee do not offer any form of compensation for excess solar energy generated (\cite{dsiremap}). In this section, we generalize the compensation function to incorporate net-billing and no-crediting options for excess generation along with net-metering.\par 
We denote the price for crediting excess generation by $p_s$ such that $0 \leq p_s \leq p_b$. If $p_s= p_b$, the household is compensated under the net-metering compensation mechanism. If $0 <p_s< p_b$, it is compensated under the net-billing compensation mechanism. If $p_s=0$, it can offset the electricity bill but is not compensated for the excess solar generation. \par
We define the solar compensation function $h_i$ as
     \begin{align}
	    	h_{i}(x, c; r, p_s,  p_b):=   \bbE_x \bigg[ \sum_{n=1}^{\infty}  e^{- n \lambda(r) t_b }  \bigg(  p_b  \bigg[  \int_{(n-1)t_b}^{n t_b}X^{r}_{s} ds-c t_b \bigg]^{+} - p_s \bigg[  c t_b- \int_{(n-1) t_b}^{n t_b}X^{r}_{s}ds \bigg ]^{+}  \bigg) \bigg]  .  
   	\end{align}
The function $h_i$ represents the expected payments made by a household after adopting solar product $i$. It is the sum of expected discounted payments in a billing cycle. The payments in a billing cycle depend on the demand for electricity during a billing cycle. The term $p_b[\int_{(n-1)t_b}^{n t_b}X^{r}_{s} ds-c t_b]^{+} $ is the payment made if the total demand exceeds the solar electricity generation during a billing cycle. The term $p_s [c t_b - \int_{(n-1)t_b}^{n t_b}X^{r}_{s} ds]^{+} $ is the payment received if the solar electricity generation exceeds the total demand during a billing cycle. We can re-write the above equation as
	\begin{align}
		h_{i}(x, c; r, p_s,  p_b):=  \sum_{n=1}^{\infty}  e^{- n \lambda(r) t_b } t_b  \bigg(  p_b  \bbE_x  \bigg[  \frac{\int_{(n-1)t_b}^{n t_b}X^{r}_{s} ds}{t_b}-c \bigg]^{+} - p_s  \bbE_x \bigg[  c- \frac{\int_{(n-1) t_b}^{n t_b}X^{r}_{s}ds}{t_b} \bigg ]^{+}  \bigg). 
	\end{align} 
To analyze if the household's optimal stopping problem with the above modification can be solved, we refer to the stochastic finance literature. The term $ \bbE_x  \bigg[  \frac{\int_{(n-1)t_b}^{n t_b}X^{r}_{s} ds}{t_b}-c \bigg]^{+}$ resembles the payoff of an Asian call option and the term $\bbE_x\bigg[c-\frac{\int_{(n-1) t_b}^{n t_b}X^{r}_{s}ds}{t_b} \bigg ]^{+}$ resembles the payoff of an Asian put option (\cite{shreve2004stochastic}). The household is a seller of $p_b$ units of the Asian call option, and a buyer of $p_s$ units of the Asian put option. Unlike American/European call-and-put options, there is no closed-form expression for the expected payoff of Asian options. This is primarily because the payoffs of Asian options are path-dependent due to the integral function, which makes payoffs non-Markovian. We conclude that an analytical solution for our problem may not be achievable.  However, numerous numerical methods (\cite{geman1993bessel}, \cite{rogers1995value}, \cite{vecer2001new}) exist for approximate valuation of the Asian options, which could provide better insights about different crediting mechanisms.
Even though a complete analysis of this model is not possible, it turns out that we can derive limited results for the existence of solutions for the household's optimal stopping problems in even more general settings. In the next section, we further generalize our model before presenting these results. 
\subsection{Other Features}\label{sub_other_feature}
In our primary model, the demand dynamics follow geometric Brownian motion, the \textit{compensation} to the household and various costs have fixed functional forms, and there is a choice of only two products with similar capacity. In a more general context, a household may have the option of considering different solar products with different capacities with compensation mechanisms other than net-metering, net-billing, and no solar credits. In addition, there may be other payment mechanisms available for solar adoption that require lower \textit{upfront} costs than rooftop solar but higher periodic payment than subscription solar. We extend our formulation by adding these generalized features. Specifically, we now let the demand dynamics follow a general stochastic diffusion equation.  The \textit{compensation, running, and adoption costs} can assume a general form, and the household has a choice of $n$ products with different capacities. These generalizations make detailed analytical exposition hard, but we show that exploring the edge cases is still possible. We derive a set of conditions to help the central planner devise better compensation schemes, adoption costs, and subsidy policies. We first present the model and then state the conditions.\par
Consider a complete filtered probability space $(\Omega, \mathcal{F}, \bbF,\bbP)$. The demand of electricity per unit time for a household belonging to income level $r$ is represented by a stochastic process, $\{ X^r_{t}\}_{t \geq 0}$. Let $\{ \mathcal{F}^r_t\}_{t \geq 0}$ be the $\sigma-algebra$ generated by the stochastic process $\{ X^r_{t}\}_{t \geq 0}$, i.e., $ \mathcal{F}^r_t = \sigma(X^r_{s}: 0 \leq s\leq t)$. The filtration $\bbF= \sigma(\bigcup\limits_r \bigcup\limits_{t} \mathcal{F}^r_t$). We assume that $X^{r}_t$ follows the following dynamics.
	\begin{align}
		\dd X^{r}_t &= \mu(r,X^{r}_t) \dd t + \sigma(r, X^{r}_t) \dd W_t^{r}. \label{eq: general demand process} 
	\end{align}
	Here the functions $\mu: (0,\infty)\times (0, \infty) \rightarrow (0,\infty)$ and $\sigma: (0,\infty)\times(0,\infty) \rightarrow (0,\infty)$ satisfy the at-most linear growth and Lipschitz continuity conditions.\par
The central planner offers $n$ solar products in the region. We use the notation $i$ to denote a particular product such that $i\in \{1,..,n \}$. The subsidy and capacity for products are represented by n-dimensional vectors $\delta$ and $c$, respectively. The components of these vectors, $\delta_i$, and $c_i$, are the subsidy provided by the central planner and the capacity offered for the product $i$, respectively. We assume  $\delta \in [0,1]^n$ and $c \in R^{n}_{+}$. The function $f_{i}(c_i ;r, \delta_i)$ is the \textit{adoption} cost of the solar product $i$ with capacity $c_i$ given the income level $r$ of the household and subsidy $\delta_i$ offered by the central planner. The function $h_{i}(x; r, p_s,  p_r,c_i)$ is the solar \textit{compensation} offered by product $i$ when demand for electricity is $x$, installed capacity is $c_i$, the household's income level is $r$, the retail price is $p_r$, and the re-selling price is $p_s$. We define the \textit{net cost of solar adoption} for product $i$, $g_i(x; r, \delta_i)$, as	
    \begin{align}
		g_i(x; r, \delta_i)& :=  \bbE_x\big[h_{i}(x; r, p_s,  p_r, c_i) + f_{i}(c_i;r,\delta_i)\big]. 
	\end{align}
	The function, $g(x; r, \delta)$, is the net cost of solar adoption at demand level $x$ given the income level $r$ of the household and $\delta$ subsidy policy. We define $g(x; r, \delta)$ as
	\begin{align}
		g(x; r, \delta) := \min\limits_{  i \in \{1,...n\}}  g_i(x; r, \delta_i).  \label{eq:netadoption} 
	\end{align}
	The function $w(x;p_b)$ is the cost of consuming $x$ unit of electricity per unit of time at the retail rate $p_b$. The optimal cost functional for a household, given income level $r$ and subsidy policy $\delta$, is  
	\begin{align}
		V(x; r, \delta) :=& \inf_{\tau_r \geq 0} \bigg\{ \bbE_x \bigg[ \int_{0}^{\tau_r} e^{-\lambda(r) s}w(X^{r}_{s}; p_b)ds  + e^ {-\lambda(r) \tau_r(\delta)} g(X^{r}_{\tau_r(\delta)}; r, \delta) \bigg]  \bigg\}. 
	\end{align}
	\begin{lemma}
		\label{lemma:extremestoppingtimes}
Assuming $g \in \bbC^{2}(\bbR)$, $w \in \bbC^{2}(\bbR)$ and a subsidy policy $\delta$, if  a household belongs to income level $r$ such that \\
		$(i)$ $\big( -\lambda(r) g(x;r, \delta)  + \bbL_{X^{r}} g(x;r, \delta) + w(x; p_b)\big) \geq 0$ for all $x$ then $\tau^{*}_r= 0$ a.s. \\
		$(ii)$$\big( -\lambda(r) g(x;r, \delta)  + \bbL_{X^{r}} g(x;r ,\delta) + w(x; p_b)\big) < 0$ for all $x$ then $\tau^{*}_r= \infty$ a.s. \\
		Here $\bbL_{X^r}$ is the infinitesimal generator for the demand process $\{X^{r}_t\}_{t \geq 0}$. 
	\end{lemma}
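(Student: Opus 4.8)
The plan is to reduce both claims to a single identity for the cost functional, obtained by applying Dynkin's formula to the discounted terminal-cost process. Write $\mathcal{A}g(x) := -\lambda(r)\,g(x;r,\delta) + \bbL_{X^r}g(x;r,\delta) + w(x;p_b)$ for the quantity appearing in the hypotheses, and let $J(x,\tau_r) := \bbE_x\big[\int_{0}^{\tau_r}e^{-\lambda(r)s}w(X^r_s;p_b)\,ds + e^{-\lambda(r)\tau_r}g(X^r_{\tau_r};r,\delta)\big]$ denote the objective, so that $V(x;r,\delta)=\inf_{\tau_r\ge 0}J(x,\tau_r)$ and $J(x,0)=g(x;r,\delta)$.

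First I would apply Itô's formula to $s\mapsto e^{-\lambda(r)s}g(X^r_s;r,\delta)$, using $g\in\bbC^2(\bbR)$ and the dynamics \eqref{eq: general demand process}. Stopped at $\tau_r\wedge\tau_n$ for a localizing sequence $\tau_n$ (e.g.\ exit times of $X^r$ from bounded intervals together with the cap $n$), this gives $e^{-\lambda(r)(\tau_r\wedge\tau_n)}g(X^r_{\tau_r\wedge\tau_n}) = g(x) + \int_{0}^{\tau_r\wedge\tau_n}e^{-\lambda(r)s}\big(-\lambda(r)g+\bbL_{X^r}g\big)(X^r_s)\,ds + M_{\tau_r\wedge\tau_n}$, where $M$ is the local martingale $\int_0^{\cdot}e^{-\lambda(r)s}g'(X^r_s)\sigma(r,X^r_s)\,\dd W^r_s$. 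Adding $\int_{0}^{\tau_r\wedge\tau_n}e^{-\lambda(r)s}w(X^r_s)\,ds$ to both sides, taking $\bbE_x$ to kill the martingale term, and passing $n\to\infty$ (using the at-most-linear-growth and Lipschitz conditions on $\mu,\sigma$, the $\bbC^2$ regularity of $g,w$, and the discount factor to justify the interchange of limit and expectation), I obtain the key identity
\[
J(x,\tau_r) \;=\; g(x;r,\delta) \;+\; \bbE_x\!\Big[\int_{0}^{\tau_r}e^{-\lambda(r)s}\,\mathcal{A}g(X^r_s)\,ds\Big],\qquad\text{valid for every stopping time }\tau_r .
\]

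Given this identity, part $(i)$ is immediate: if $\mathcal{A}g(x)\ge 0$ for all $x$, the integral term is nonnegative for every $\tau_r$, hence $J(x,\tau_r)\ge g(x;r,\delta)=J(x,0)$, so $\tau_r=0$ attains the infimum and $\tau^*_r=0$ a.s. For part $(ii)$, if $\mathcal{A}g(x)<0$ for all $x$, then $\tau_r\mapsto\int_{0}^{\tau_r}e^{-\lambda(r)s}\mathcal{A}g(X^r_s)\,ds$ is decreasing since the integrand is strictly negative, so $J$ is made strictly smaller by stopping later; letting $\tau_r\to\infty$ and invoking monotone convergence, $J(x,\infty)\le J(x,\tau_r)$ for every finite $\tau_r$, so no finite stopping time is optimal and $\tau^*_r=\infty$ a.s. (along the way one records that $e^{-\lambda(r)\tau_r}g(X^r_{\tau_r})\to 0$, so $J(x,\infty)$ collapses to the pure running-cost term).

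The main obstacle is precisely the localization-and-limit step that establishes the identity: the statement only assumes $g,w\in\bbC^2(\bbR)$, which does not by itself control their growth, so one must combine this with the standard $L^p$ moment bounds for solutions of \eqref{eq: general demand process} under the linear-growth/Lipschitz hypotheses on $\mu,\sigma$, together with the discount factor $e^{-\lambda(r)s}$, to ensure $M$ is a genuine (uniformly integrable) martingale — or vanishes in the limit — and that $\bbE_x\big[\int_0^\infty e^{-\lambda(r)s}|\mathcal{A}g(X^r_s)|\,ds\big]<\infty$, so that all expectations above are well defined and the passage $n\to\infty$ is legitimate. Once the identity is in place, both conclusions follow by inspection of the sign of $\mathcal{A}g$.
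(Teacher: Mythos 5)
Your proposal is correct and follows essentially the same route as the paper: both arguments come down to Dynkin's formula for the discounted cost and the sign of $-\lambda(r)g + \bbL_{X^r}g + w$, the paper packaging this through the time-augmented diffusion $Z^r_t$ and Oksendal's subharmonicity criterion, while you derive the equivalent identity $J(x,\tau_r)=g(x;r,\delta)+\bbE_x\big[\int_0^{\tau_r}e^{-\lambda(r)s}\mathcal{A}g(X^r_s)\,ds\big]$ directly via It\^o and localization. The integrability/localization caveats you flag are real but are glossed over in the paper's own proof as well, so they do not constitute a gap relative to it.
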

As discussed earlier, the formulation above does not lend itself to a complete determination of adoption timing, but we are still able to identify situations in which the adoption time takes extreme values.  In Lemma \ref{lemma:extremestoppingtimes}, we derive the conditions that lead to either the immediate adoption or rejection of solar technologies by a household. It connects the net adoption cost function $g$,   running cost function $w$, and the dynamics of the demand process in \eqref{eq: general demand process} for a household with income level $r$. The result is useful for the central planner in designing net adoption cost function $g$ and the discount policy $\delta$, which prevents rejection of solar technology for any income level $r$.  Using Lemma \ref{lemma:extremestoppingtimes}, we can construct a set of admissible discounting policies and a set of income levels that will not adopt solar technology at the given discount levels.
 
\section{Conclusion}\label{section_conclusion}
Governments and planners all around the world engage in setting targets for achieving alternative energy production. Solar energy production is a major source of alternative energy. The planners devise subsidy schemes to promote the adoption of solar products across populations. There is increasing urgency to set ambitious targets and achieve them quickly. We are motivated by a need to introduce a time dimension in the models that analyze this process: the setting of subsidies and how they influence solar adoption. We are also motivated by innovations in business models that may assist with faster adoption, specifically the emergence of community solar. Both the inclusion of the time dimension and modeling of the community solar option are unexplored topics in the solar adoption literature. We model a household's electricity demand as a continuous-time stochastic process. This choice allows us to focus on a household's adoption time decisions as well as its choice between community solar and rooftop solar. Another new feature in our model is discount rate heterogeneity, which allows us to focus on the adoption decisions across a range of income levels. Our technical contributions include a complete analysis of the household's decisions. We derive an income threshold that governs the choice of solar products and a closed-form expression for the probability density of the adoption time. We also derive technical results for the planner's bilevel optimization problem, where the planner determines subsidies to minimize its cost while meeting targets and assuming that the households will make their own adoption timing decision.\par
Numerical experiments lead to insights about target setting for the planner. Adoption level and adoption time targets behave as substitutes, allowing the planner flexibility in setting subsidies. The structure of this subsidy determines the range of household incomes that will prefer community solar over rooftop solar. Our examples show that depending on the income distribution, the planner may change its subsidy levels to achieve its adoption quantity and time targets. Finally, we show how inequality in income distributions affects the achievement of these targets.
We extend our model to address general cost and compensation functions with multiple products, but several avenues for further work remain. In future work, we will  continue to explore the effect of specific population densities and inequality measures on the design of subsidies and adoption of competing technologies.

\appendix

\section{E-Companion for Rooftop and Community Solar Adoption with Income Heterogeneity: Notations and  Proofs}


  \begin{center}
      \begin{tabular}{ | m{25em} | m{1cm}| m{1cm} | } 
      \hline
       Retail cost of electricity per unit per unit time 
       & $p_b$ \\ 
      \hline
       Cost of solar subscription per unit capacity
       & $p_{sub}$ \\ 
      \hline
       Fixed cost of installing Roof-top solar
       & $K$ \\ 
      \hline
      Variable cost of installing Rooftop solar per unit capacity
      & $k$ \\ 
      \hline
      Installed solar capacity 
       & $c$ \\ 
      \hline
      Duration of solar billing cycle
      & $t_b$ \\ 
      \hline
      Electricity produced by capacity $c$ during $t_b$
      & $\eta(c, t_b)$ \\ 
      \hline
      Subsidy($\%$) for solar product $i$
      & $\delta_i$ \\ 
      \hline
   \end{tabular}
 \end{center}

\setcounter{lemma}{0}

		\begin{lemma}
		Given $\lambda^{*}_{\delta}= 	-\frac{1}{t_b}\ln \bigg(  1- \frac{(1- \delta_2)p_{sub}c}{(1-\delta_1)(K+kc)}\bigg)$, \\
		1. if $\lambda^{*}_{\delta} \geq \overline{\lambda}$ then $r^*(\delta)=0$.\\
		2. if $\lambda^{*}_{\delta} \leq \underline{\lambda}$ then $r^*(\delta)=\infty$.\\
		3. if $ \underline{\lambda} < \lambda^{*}_{\delta} <\overline{\lambda}$ then there exists $r^*(\delta) \in (0, \infty)$.
  	  \end{lemma}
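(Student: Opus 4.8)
The plan is to work directly from the terminal cost function in \eqref{eq: terminalcostbasemodel}. A household at income level $r$ that adopts compares the two terms inside the $\min$: the rooftop cost $(1-\delta_1)(K+kc)$ and the present value of the subscription stream $(1-\delta_2)\sum_{n=0}^{\infty} e^{-n\lambda(r)t_b} p_{sub}c = (1-\delta_2)\frac{p_{sub}c}{1-e^{-\lambda(r)t_b}}$, using the geometric series (valid since $\lambda(r)>0$). Thus the household strictly prefers subscription solar precisely when
\begin{align}
(1-\delta_2)\frac{p_{sub}c}{1-e^{-\lambda(r)t_b}} < (1-\delta_1)(K+kc).
\end{align}
First I would rearrange this inequality into the form $1 - e^{-\lambda(r)t_b} > \frac{(1-\delta_2)p_{sub}c}{(1-\delta_1)(K+kc)}$, i.e. $e^{-\lambda(r)t_b} < 1 - \frac{(1-\delta_2)p_{sub}c}{(1-\delta_1)(K+kc)}$, i.e. $\lambda(r) > -\frac{1}{t_b}\ln\!\big(1 - \frac{(1-\delta_2)p_{sub}c}{(1-\delta_1)(K+kc)}\big) = \lambda^{*}_{\delta}$. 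So the subscription-preferring set is exactly $\{r : \lambda(r) > \lambda^{*}_{\delta}\}$.

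The second step is to translate this condition on $\lambda(r)$ into a condition on $r$ using that $\lambda$ is continuous, decreasing, and takes values in $[\underline{\lambda},\overline{\lambda}]$. Since $\lambda$ is decreasing, $\lambda(r) > \lambda^{*}_{\delta}$ is equivalent to $r$ being below the level where $\lambda$ crosses $\lambda^{*}_{\delta}$. I would then split into the three cases of the statement. If $\lambda^{*}_{\delta} \geq \overline{\lambda}$, then $\lambda(r) \leq \overline{\lambda} \leq \lambda^{*}_{\delta}$ for all $r$, so no household strictly prefers subscription solar — hence everyone (weakly) prefers rooftop and $r^*(\delta) = 0$. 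If $\lambda^{*}_{\delta} \leq \underline{\lambda}$, then $\lambda(r) \geq \underline{\lambda} \geq \lambda^{*}_{\delta}$ for all $r$, so every household prefers subscription solar and $r^*(\delta) = \infty$. If $\underline{\lambda} < \lambda^{*}_{\delta} < \overline{\lambda}$, then by the intermediate value theorem (applied to the continuous function $\lambda$, which ranges over an interval containing $\lambda^{*}_{\delta}$) there is some $r^*(\delta) \in (0,\infty)$ with $\lambda(r^*(\delta)) = \lambda^{*}_{\delta}$; monotonicity gives that $r \leq r^*(\delta)$ iff $\lambda(r) \geq \lambda^{*}_{\delta}$, matching the claimed partition.

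There is one modeling subtlety worth a sentence: the quantity inside the logarithm, $1 - \frac{(1-\delta_2)p_{sub}c}{(1-\delta_1)(K+kc)}$, must be positive for $\lambda^{*}_{\delta}$ to be well-defined. This is guaranteed by the third constraint in the central planner's problem, $(1-\delta_1)(K+kc) - (1-\delta_2)p_{sub}c \geq \epsilon > 0$, which ensures the ratio is strictly less than $1$; I would invoke this at the outset. I expect no serious obstacle here — the argument is essentially a one-line algebraic rearrangement followed by an invocation of monotonicity and the intermediate value theorem. The only thing to be careful about is bookkeeping the weak-versus-strict inequalities at the boundary ($r = r^*(\delta)$), and confirming the degenerate cases ($r^*(\delta) = 0$ or $\infty$) are stated in a way consistent with the convention that "$r \leq r^*(\delta)$ means adopt subscription."
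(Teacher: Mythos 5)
Your proposal is correct and follows essentially the same route as the paper: both reduce the product choice to comparing $(1-\delta_1)(K+kc)$ with the geometric-series value $(1-\delta_2)p_{sub}c/(1-e^{-\lambda(r)t_b})$, translate that comparison into a threshold condition on $\lambda(r)$ (the paper phrases it via the monotone difference $\Delta(r)$ and applies the intermediate value theorem to $\Delta$, you apply it to $\lambda$ directly — an equivalent bookkeeping choice), and then split into the three cases using $\lambda(r)\in[\underline{\lambda},\overline{\lambda}]$ and the fact that $\lambda$ is continuous and decreasing. Your extra remark that the planner's constraint $(1-\delta_1)(K+kc)-(1-\delta_2)p_{sub}c\geq\epsilon>0$ keeps the logarithm well defined is a sensible addition but not a substantive departure.
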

  	   \begin{proof}
  	   	Consider a function $\Delta(r):= (1-\delta_1)K+kc- (1-\delta_2)\sum_{n=0}^{\infty} e^{-n \lambda(r)  t_b} p_{sub}c$.
  	   	\begin{align}
  	   		\frac{\partial \Delta(r)}{\partial r}= (1-\delta_2)\frac{e^{-\lambda(r) t_b}}{(1-e^{-\lambda(r) t_b)^2}} p_{sub} c t_b \frac{\partial \lambda(r)}{\partial r}
  	   	\end{align}
  	   	
  	   	We notice that $\Delta(r)$ is a continuous decreasing function of $r$, since $\frac{\partial \lambda(r)}{\partial r}<0$. This implies that the difference in cost of roof-top installation and subscription solar decreases in $r$. Next, we discusses 3 possible cases. \\\\
  	   	$\textbf{Case 1:}$ If $\lambda^{*}_{\delta}> \overline{\lambda}$, then
  	   	\begin{align}
  	   		-\frac{1}{t_b}\ln \bigg(  1- \frac{ (1-\delta_2)p_{sub}c}{(1-\delta_1)(K+kc)}  \bigg)  > \overline{\lambda}. \\
  	   		(1-e^{-\overline{\lambda} t_b}) < \frac{(1-\delta_2) p_{sub}c}{ (1-\delta_1) (K+ kc)}.\\
  	   		(1-\delta_1)(K+kc) - \frac{(1-\delta_2)p_{sub}c}{1-e^{-\overline{\lambda} t_b}}  < 0.
  	   	\end{align}  
  	   	The function $\Delta$ is decreasing and continuous  in $r$ and $\lambda$ is decreasing continuous bounded function of $r$. Since $\lambda(r) \leq \overline{\lambda}$, the following holds for all $r$.
  	   	\begin{align}
  	   		\Delta(r) \leq (1-\delta_1)(K+kc)  - (1-\delta_2)\frac{p_{sub}c}{1-e^{-\overline{\lambda} t_b}}  < 0
  	   	\end{align}
  	   	and $r^*(\delta)=0$.\\\\
  	   	$\textbf{Case 2:}$ If $\lambda^{*}_{\delta}< \underline{\lambda}$, then
  	   	\begin{align}
  	   		-\frac{1}{t_b}\ln \bigg(  1- \frac{(1-\delta_2)p_{sub}c}{(1-\delta_1)(K+kc)}  \bigg)  < \underline{\lambda}. \\
  	   		(1-e^{- \underline{\lambda} t_b}) > \frac{(1-\delta_2) p_{sub}c}{(1-\delta_1)(K+kc)}.\\
  	   		(1-\delta_1)(K+kc) - \frac{(1-\delta_2) p_{sub}c}{1-e^{- \underline{\lambda} t_b}}  > 0.
  	   	\end{align}  
  	   	The function $\Delta$ is decreasing and continuous  in $r$ and $\lambda$ is decreasing continuous bounded function of $r$. Since $\lambda(r) \geq \underline{\lambda}$, the following holds for all $r$.
  	   	\begin{align}
  	   		\Delta(r) \geq (1-\delta_1)(K+kc) - (1-\delta_2)  \frac{p_{sub}c}{1-e^{-\underline{\lambda} t_b}}  > 0
  	   	\end{align}
  	   	and $r^*(\delta)=\infty$.\\\\
  	   	$\textbf{Case 3:}$ If $\underline{\lambda} \leq \lambda^{*}_{\delta} \leq \overline{\lambda}  $, then there exist $r'$ and $r''$ such that $\Delta(r')<0$ and $\Delta(r'')>0$. Using intermediate value theorem, there exist a $r^*(\delta)$ such that $\Delta(r^*(\delta))=0$. Using the monotonicity of $\Delta$ function we have $\Delta(r) \geq 0$ for $r \leq r^*(\delta)$ and $\Delta(r) \leq 0$ for $r \geq r^*(\delta)$. 
  	   \end{proof}

  	     \begin{lemma} 
  	     	Assuming $ \underline{\lambda} < \lambda^{*}_{\delta} <\overline{\lambda}$,
  	     	$$
  	     	g(x;r, \delta)=
  	     	\begin{cases}
  	     		A(r)x - B(r) + \frac{(1-\delta_2) p_{sub}c}{1-e^{-\lambda(r) t_b}} &,  r <  r^{*}(\delta)\\
  	     		A(r)x  -B(r)  + (1-\delta_1) (K + kc)   &,  r \geq r^{*}(\delta).
  	     	\end{cases}
  	     	$$
  	     	where,
  	     	\begin{align}
  	     		A(r)&= \frac{  p_b }{\mu(r)}  \frac{ (1-e^{-\mu(r) t_b})}{e^{(\lambda(r)- \mu(r))t_b}-1}.\\
  	     		B(r)&=  \frac{ p_b  }{e^{\lambda(r) t_b}-1} \eta(c, t_b).
  	     	\end{align}	
  	     	\label{gfunction}
  	     \end{lemma}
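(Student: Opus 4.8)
The plan is to evaluate the two pieces of $g$ in \eqref{eq: terminalcostbasemodel} separately: the infinite sum of discounted net-metering payments and the $\min$ term that encodes the product choice.

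First I would dispose of the $\min$. Summing the geometric series gives $\sum_{n=0}^{\infty} e^{-n\lambda(r)t_b}p_{sub}c = p_{sub}c/(1-e^{-\lambda(r)t_b})$, which converges since $\lambda(r)t_b>0$, so the $\min$ in \eqref{eq: terminalcostbasemodel} equals $\min\big((1-\delta_1)(K+kc),\,(1-\delta_2)p_{sub}c/(1-e^{-\lambda(r)t_b})\big)$. This is exactly the difference $\Delta(r)$ analyzed in the proof of Lemma \ref{lemma:thresholde}, which was shown there to be nonnegative for $r \le r^{*}(\delta)$ and nonpositive for $r \ge r^{*}(\delta)$ (with strict inequalities when $\underline{\lambda} < \lambda^{*}_{\delta} < \overline{\lambda}$ and $r \ne r^{*}(\delta)$). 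Hence the household selects the subscription cost on $\{r < r^{*}(\delta)\}$ and the rooftop cost on $\{r \ge r^{*}(\delta)\}$, which produces the last additive term in each branch of the claimed formula.

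Next I would compute the expectation term. I would invoke Tonelli to move $\bbE_x$ inside the sum, first splitting the summand into the nonnegative consumption part $p_b\int_{(n-1)t_b}^{n t_b}X^{r}_s\,ds$ and the deterministic part $p_b\eta(c,t_b)$ so that each resulting series is treated on its own. For the consumption part I would use the mean of geometric Brownian motion, $\bbE_x[X^{r}_s] = x e^{\mu(r)s}$ (obtained by taking expectations in \eqref{eq: demand process} and solving the resulting linear ODE, the $\sigma$-term being a martingale), which gives $\bbE_x\big[\int_{(n-1)t_b}^{n t_b}X^{r}_s\,ds\big] = \tfrac{x}{\mu(r)}\big(e^{n\mu(r)t_b}-e^{(n-1)\mu(r)t_b}\big)$. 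Substituting and factoring yields $\tfrac{p_b x}{\mu(r)}(1-e^{-\mu(r)t_b})\sum_{n\ge 1} e^{-n(\lambda(r)-\mu(r))t_b}$; the geometric series converges precisely because $\lambda(r) > \mu(r)$, and summing it and simplifying $e^{-(\lambda(r)-\mu(r))t_b}/\big(1-e^{-(\lambda(r)-\mu(r))t_b}\big) = 1/\big(e^{(\lambda(r)-\mu(r))t_b}-1\big)$ gives $A(r)x$. The $\eta$ part contributes $-p_b\eta(c,t_b)\sum_{n\ge 1} e^{-n\lambda(r)t_b} = -p_b\eta(c,t_b)/(e^{\lambda(r)t_b}-1) = -B(r)$. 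Adding the three contributions gives the stated expression.

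There is no serious obstacle here; the only points requiring care are the justification for interchanging $\bbE_x$ with the infinite sum — clean once the summand is split, since each piece is either nonnegative (apply Tonelli) or absolutely summable — and bookkeeping on which geometric ratios lie below $1$: the consumption series needs the standing assumption $\lambda(r) > \mu(r)$, while the discount and subscription series only need $\lambda(r)t_b > 0$.
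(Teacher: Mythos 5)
Your proposal is correct and follows essentially the same route as the paper's proof: resolve the $\min$ via the sign of $\Delta(r)$ from Lemma \ref{lemma:thresholde}, then apply Tonelli, use $\bbE_x[X^r_s]=xe^{\mu(r)s}$ for geometric Brownian motion, and sum the geometric series (convergent since $\lambda(r)>\mu(r)$) to obtain $A(r)x-B(r)$ plus the product-specific adoption cost. Your added care in splitting the summand before interchanging and in making the product-choice step explicit is a minor refinement of the same argument, not a different approach.
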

  	     \begin{proof}
  	     	Consider the case when $r  \geq  r^*(\delta)$
  	     	\begin{align}
  	     		g(x;r, \delta)&= \bbE_x \bigg[  \sum_{n=1}^{\infty}   e^{-n \lambda(r)  t_b }   p_b  \bigg(  \int_{(n-1)t_b}^{n t_b}X^{r}_{s} \texttt{d}s-\eta(c, t_b) \bigg)\bigg] +  (1-\delta(r))(K + kc).\\
  	     		&= \sum_{n=1}^{\infty}   e^{-n \lambda(r)  t_b }   p_b  \bigg(  \int_{(n-1)t_b}^{n t_b}\bbE_x \big[  X^{r}_{s} \big] \texttt{d}s-\eta(c, t_b)\bigg)+  (1-\delta(r))(K + kc).	~~~~~~~~~~~~~~~~~(\text{Using Tonelli Theorem})\\
  	     		&= \sum_{n=1}^{\infty}   e^{-n \lambda(r)  t_b }   p_b  \bigg(  \int_{(n-1)t_b}^{n t_b} xe^{\mu(r) s}\texttt{d}s-\eta(c, t_b) \bigg)+  (1-\delta(r))(K + kc).	~~(X^{r}_{s} \text{ is a geometric Brownian motion})\\
  	     		&= \sum_{n=1}^{\infty}   e^{-n \lambda(r)  t_b }   p_b \bigg(  \frac{x}{\mu(r)}(e^{n \mu(r) t_b}- e^{(n-1) \mu(r) t_b})-\eta(c, t_b) \bigg)+  (1-\delta(r))(K + kc).\\
  	     		&= \frac{  p_b x}{\mu(r)} \sum_{n=1}^{\infty}   e^{-n \lambda(r)  t_b }    (e^{n \mu(r) t_b}- e^{(n-1) \mu(r) t_b})-p_b \eta(c, t_b) \sum_{n=1}^{\infty} e^{-n \lambda(r)  t_b }    +  (1-\delta(r))(K + kc).\\
  	     		&= \frac{  p_b x}{\mu(r)} (1-e^{-\mu(r) t_b})\sum_{n=1}^{\infty}   e^{n (\mu(r)-\lambda(r))  t_b }  -p_b \eta(c, t_b) \sum_{n=1}^{\infty} e^{-n \lambda(r)  t_b }   +  (1-\delta(r))(K + kc).\\
  	     		&= \frac{  p_b }{\mu(r)}  \frac{ (1-e^{-\mu(r) t_b})}{e^{(\lambda(r)- \mu(r))t_b}-1} x -  \frac{p_b}{e^{\lambda(r) t_b}-1}\eta(c, t_b)   +  (1-\delta(r))(K + kc).
  	     	\end{align}
  	     	Now, for the case when $r  < r^*(\delta)$,
  	     	\begin{align}
  	     		g(x;r, \delta)&= \bbE_x \bigg[  \sum_{n=1}^{\infty}   e^{-n \lambda(r)  t_b }   p_b  \bigg(  \int_{(n-1)t_b}^{n t_b}X^{r}_{s} \texttt{d}s-\eta(c, t_b) \bigg)\bigg] +  (1-\delta(r))\sum_{n=0}^{\infty} e^{-n \lambda(r)  t_b } p_{sub}c.\\
  	     		&= \frac{  p_b x}{\mu(r)} (1-e^{-\mu(r) t_b})\sum_{n=1}^{\infty}   e^{n (\mu(r)-\lambda(r))  t_b }  -p_b  \eta(c, t_b) \sum_{n=1}^{\infty} e^{-n \lambda  t_b }   +    (1-\delta(r))p_{sub}c\sum_{n=0}^{\infty} e^{-n \lambda(r)  t_b }.\\
  	     		&= \frac{  p_b }{\mu(r)}  \frac{ (1-e^{-\mu(r) t_b})}{e^{(\lambda(r)- \mu(r))t_b}-1} x  -  \frac{p_b }{e^{\lambda(r) t_b}-1} \eta(c, t_b) +  \frac{(1-\delta(r))  p_{sub}c}{1-e^{-\lambda(r) t_b}} .
  	     	\end{align}
  	     \end{proof}		
  	     
  	     	\begin{lemma}
  	         	For the given structure of function $g$ and subsidy policy $\delta$, \\
  	          (i) there exists a electricity demand threshold for a customer from every income level such that solar adoption is an optimal choice if current electricity consumption is greater than the demand threshold.\\
  	          (ii) the demand threshold is given by
  	         \begin{align}
  	          	\overline{X}(r, \delta)= \bigg( \frac{\gamma_1(r)}{\gamma_1(r)-1} \bigg) \bigg(  \frac{f(r, \delta) - B(r)}{\frac{p_b}{\lambda(r)- \mu(r)}-A(r)}   \bigg)
  	          \end{align}
  	          where,
  	          \begin{align}
  	          	f(r, \delta)= \mathbbm{1}_{\{r \leq r^*(\delta)\}} \bigg( (1-\delta_2)\frac{p_{sub}c}{1- e^{-\lambda(r)t_b} } \bigg) + \mathbbm{1}_{\{r > r^*(\delta)\}}(1-\delta_1)(K+kc)  
  	          \end{align}
  	          
  	       \end{lemma}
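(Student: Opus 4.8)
The plan is to solve the household's optimal stopping problem \eqref{eq: optimalstoppingbasemodel} by the standard guess-and-verify approach for one-dimensional diffusions, exploiting the fact that by Lemma \ref{lemma:closedform terminal cost} the terminal cost $g(x;r,\delta)$ is affine in $x$ (with slope $A(r)$ and intercept $f(r,\delta)-B(r)$, where $f$ collects the income-dependent adoption cost). First I would rewrite the running-cost term: using Tonelli and $\bbE_x[X^r_s]=xe^{\mu(r)s}$, the perpetual discounted consumption cost from immediate non-adoption is $\bbE_x[\int_0^\infty e^{-\lambda(r)s}p_b X^r_s\,\dd s] = \tfrac{p_b}{\lambda(r)-\mu(r)}\,x$, which is finite precisely because $\lambda(r)>\mu(r)$. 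Subtracting this from $J$ turns the problem into a pure optimal stopping problem for the payoff obtained by stopping, namely $\phi(x;r,\delta):=g(x;r,\delta)-\tfrac{p_b}{\lambda(r)-\mu(r)}x = \big(A(r)-\tfrac{p_b}{\lambda(r)-\mu(r)}\big)x + (f(r,\delta)-B(r))$, an affine function with \emph{negative} slope (one checks $A(r)<\tfrac{p_b}{\lambda(r)-\mu(r)}$ from the explicit form of $A(r)$, since $\tfrac{1-e^{-\mu t_b}}{e^{(\lambda-\mu)t_b}-1}<\tfrac{1}{\lambda-\mu}\cdot\tfrac{e^{(\lambda-\mu)t_b}-1}{e^{(\lambda-\mu)t_b}-1}$ after the appropriate inequality $1-e^{-\mu t_b}<\mu t_b<\ldots$; this monotonicity is what makes the denominator $\tfrac{p_b}{\lambda(r)-\mu(r)}-A(r)$ in the claimed formula positive).

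Next I would set up the variational characterization. For geometric Brownian motion $\dd X^r_t=\mu(r)X^r_t\,\dd t+\sigma X^r_t\,\dd W^r_t$, the generator is $\bbL = \tfrac12\sigma^2 x^2 \partial_{xx} + \mu(r)x\partial_x$, and the homogeneous equation $\bbL u - \lambda(r)u=0$ has the two power solutions $x^{\gamma_1(r)}$ and $x^{\gamma_2(r)}$, where $\gamma_1(r)>1>0>\gamma_2(r)$ are the roots of $\tfrac12\sigma^2\gamma(\gamma-1)+\mu(r)\gamma-\lambda(r)=0$; the inequality $\gamma_1(r)>1$ again uses $\lambda(r)>\mu(r)$. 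Since waiting should be optimal for small $x$ and the value must stay bounded as $x\to 0$, I guess a continuation region $(0,\overline X(r,\delta))$ on which the value (net of the consumption baseline) equals $Cx^{\gamma_1(r)}$, and a stopping region $[\overline X(r,\delta),\infty)$ on which it equals $\phi(x;r,\delta)$. Imposing continuous fit and smooth fit (value and first derivative match) at $x=\overline X(r,\delta)$ gives two equations in the two unknowns $C$ and $\overline X$; eliminating $C$ yields exactly
\begin{align}
\overline X(r,\delta)=\frac{\gamma_1(r)}{\gamma_1(r)-1}\cdot\frac{f(r,\delta)-B(r)}{\frac{p_b}{\lambda(r)-\mu(r)}-A(r)},
\end{align}
which is the claimed threshold. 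The product-dependent intercept $f(r,\delta)$ is precisely the $r$-dependent constant appearing in $g$ from Lemma \ref{lemma:closedform terminal cost}, so parts (i) and (ii) are established simultaneously once the verification is done.

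Finally, to rigorously conclude part (i) — that this threshold genuinely delivers the optimal stopping time $\tau^*_r=\inf\{t:X^r_t\ge \overline X(r,\delta)\}$ — I would run the usual verification lemma: show the candidate value function $\widehat V$ (defined piecewise as above, plus the consumption baseline) is $C^1$ globally and $C^2$ except at $\overline X$, dominates the stopping payoff everywhere, and satisfies $\bbL\widehat V-\lambda(r)\widehat V + p_b x \le 0$ on the continuation region with equality on the stopping region; then an Itô/Dynkin argument with a localizing sequence, together with the transversality estimate $\bbE_x[e^{-\lambda(r)t}\widehat V(X^r_t)]\to 0$ (finite since $\gamma_1(r)$ and the linear growth are controlled by $\lambda(r)>\mu(r)$), shows $\widehat V=V$ and the hitting time is optimal. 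I also need to rule out the degenerate cases: if $f(r,\delta)-B(r)\le 0$ the threshold formula would be nonpositive, meaning immediate adoption ($\overline X=0$, consistent with part (i)); under the standing cost assumptions and $ \underline{\lambda} < \lambda^{*}_{\delta} <\overline{\lambda}$ one argues $f(r,\delta)>B(r)$ so $\overline X>x$ is the interesting regime. The main obstacle I anticipate is the smooth-fit verification and the transversality/integrability bookkeeping — confirming $\gamma_1(r)>1$ and the sign $\tfrac{p_b}{\lambda(r)-\mu(r)}>A(r)$ so that the continuation-region inequality and the limiting expectations all go the right way; the algebra producing the closed form itself is routine once the two fitting conditions are written down.
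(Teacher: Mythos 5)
Your proposal is correct and follows essentially the same route as the paper: guess-and-verify for the GBM stopping problem with the Cauchy--Euler solutions $x^{\gamma_{1,2}(r)}$, value matching and smooth pasting at the boundary of the continuation region, the sign condition $\tfrac{p_b}{\lambda(r)-\mu(r)}>A(r)$ (which the paper proves as a proposition via monotonicity of $(e^y-1)/y$), and a verification step via Dynkin/It\^o with finiteness of the stopped expectations. The only cosmetic difference is that you subtract the perpetual consumption baseline $\tfrac{p_b x}{\lambda(r)-\mu(r)}$ up front to get a pure affine-payoff stopping problem, whereas the paper keeps the running cost and obtains the same term as the particular solution of the non-homogeneous ODE, and it establishes part (i) separately through the generator-sign condition of its general stopping-time lemma.
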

  	        \begin{proof}
  	       	(i) We use Lemma \ref{boundarystoppingtime} condition $(ii)$ to prove this lemma. For a geometric Brownian motion the infinitesimal generator is given by $ \bbL_{X^r}=  \mu(r) x \frac{\partial }{\partial x}+ \frac{\sigma^{2}}{2} x^2 \frac{\partial^2 }{\partial x^2}.$ The net solar adoption cost, $g$, is given by Lemma \ref{gfunction} and $w(x)= p_b x$.  We define $Z(r):=  -\lambda(r) g(x;r, \delta)  + \bbL_{X^r} g(x;r, \delta) + w(x) $ and consider following two cases:\\\\
  	       	\textbf{Case 1: $r< r^*(\delta)$}\\
  	       	Substituting the functional form of different functions,
  	       	\begin{align}
  	       		Z(r)=  x( (\mu(r)-\lambda(r))A(r) +p_b) - \lambda(r) \bigg(   \frac{(1-\delta_2)p_{sub}c}{1- e^{-\lambda(r)t_b}}  \bigg)
  	       	\end{align}
  	       	\textbf{Case 2: $r \geq r^*(\delta)$}\\
  	       	Similarly for this case,
  	       	\begin{align}
  	       		Z(r)=  x( (\mu(r)-\lambda(r))A(r) +p_b) - \lambda(r) (1-\delta_1)(K+kc)
  	       	\end{align}
  	       	If $(\mu(r')-\lambda(r'))A(r') +p_b > 0$, then there exists a $x$ such that $Z(r') \geq 0$ since other components of $Z(r)$ are bounded. 
  	       	\begin{prop}
  	       		$(\mu(r')-\lambda(r'))A(r') +p_b > 0$ for all $r' \in(0, \infty)$.
  	       	\end{prop}
  	       	\begin{proof}
  	       		Using the expression for $A(r')$ from Lemma \ref{gfunction} and rearranging the terms leads to the following expression:
  	       		\begin{align}
  	       			(\mu(r')-\lambda(r'))A(r') +p_b &=	-p_b\Bigg( \frac{\frac{e^{-\mu(r) t_b}-1}{-\mu(r) t_b}}{\frac{e^{(\lambda(r)-\mu(r)) t_b}-1}{(\lambda(r)-\mu(r)) t_b}}  -1\Bigg)\\
  	       			&> 0
  	       		\end{align}
  	       		The last inequality follows because the function $\frac{e^x -1}{x}$ is an increasing function and $\lambda(r)-\mu(r) > -\mu(r)$.
  	       	\end{proof}
  	       	The terminal cost function is bounded and the function $Z$ in $x$ is monotonic, we can say that $x'=\frac{\lambda(r) \bigg(   \frac{(1-\delta_2)p_{sub}c}{1- e^{-\lambda(r)t_b}}\bigg)}{(\mu(r)-\lambda(r))A(r) +p_b}$ if $r< r^*(\delta)$ and $x'= \frac{\lambda(r) (1-\delta_1)(K+kc)}{(\mu(r)-\lambda(r))A(r) +p_b}$ if $r \geq r^*(\delta)$ such that $Z(r) \geq 0$ for $x>x'$. \\ \\
  	       	
  	       	(ii) We have
  	       		\begin{align}
  	       		\overline{J}(x, \tau_r ; \delta, r):=&   \bbE_x \bigg[ \int_{0}^{\tau_r} e^{-\lambda(r) s} p_b X^{r}_{s}\texttt{d}s  + e^ {-\lambda(r) \tau_r} g(X^{r}_{\tau_r}; \delta, r) \bigg]  , \notag\\
  	       		\overline{V}(x; \delta, r) :=&  \inf_{\tau_r \geq 0} J(x, \tau_r; \delta, r),\notag\\
  	       		\tau^*_r:=& {\arg\min}_{\tau_r \geq 0}J(x, \tau_r; \delta, r).\notag
  	         	\end{align}
  	       	We define
  	       	\begin{align}
  	       		\texttt{d}Y_{t}^{r} = \begin{bmatrix}
  	       			\texttt{d}t\\
  	       			\texttt{d}X_{t}^{r}
  	       		\end{bmatrix}
  	       		&=  \begin{bmatrix}
  	       			1 \\
  	       			\mu(r) X_{t}^{r}
  	       		\end{bmatrix}
  	       		\texttt{d}t + 
  	       		\begin{bmatrix}
  	       			0 \\
  	       			\sigma(r) X_{t}^{r} 
  	       		\end{bmatrix}
  	       		\texttt{d} W_{t}^{r}~~~~~~~~~~~~~~~~~~~~~~~Y_{0}^{r}=(s, x),
  	       	\end{align}
  	       	\begin{align}
  	       		\texttt{d}Z_{t}^{r} = \begin{bmatrix}
  	       			                   \texttt{d}Y_{t}^{r}\\
  	       			                   \texttt{d}U_{t}^{r}
  	       			                  \end{bmatrix}
  	       		               &=  \begin{bmatrix}
  	       		                           1 \\
  	       		                           \mu(r) X_{t}^{r} \\
  	       		                           e^{-\lambda(r)t} p_b X_{t}^{r}
  	       		                          \end{bmatrix}
  	       		                          \texttt{d}t + 
  	       		                          \begin{bmatrix}
  	       		                          	0 \\
  	       		                          	\sigma(r) X_{t}^{r} \\
  	       		                          	0
  	       		                          \end{bmatrix}
  	       		                          \texttt{d}W_{t}^{r}~~~~~~~~~~~~~~~~~~~~~~~Z_{0}^{r}=z=(s, x, u).
  	       	\end{align}
  	       	If we let  $G(z)= u + e^{-\lambda(r)s} g(x; \delta,r)$, then we 
  	       	can reduce the original problem to an equivalent problem with only terminal reward. Using section 10.3 Oksendal,
  	       	the characteristic function of $Z_t$ is given by $\bbL_{Z} G=\bbL_{X}G + \frac{\partial G}{\partial s}+e^{-\lambda(r)s} p_b x$ where $\bbL_{X}$ is the characteristic
  	       	function of $X_t$ and is given by $ \mu(r) x \frac{\partial }{\partial x}+ \frac{\sigma^{2}}{2} x^2 \frac{\partial^2 }{\partial x^2}$. We denote the continuation region of the above optimal stopping problem by $D$. We notice that the terminal cost of the optimal stopping problem is bounded above by $K+kc$ and the running cost is an increasing function of $X_{t}^{r}$ assuming $p_b > 0$.  We can observe that there exists a $\overline{X}(r, \delta)$ such that $(0, \overline{X}(r, \delta)) \subset D$. Next, we derive closed form expression for $\overline{X}(r, \delta)$ and  optimal value function using pde derived by applying separation of variables on $\bbL_{Z} G=\bbL_{X}G + \frac{\partial G}{\partial s}+e^{-\lambda(r)s} p_b x$ .\\
  	        We consider a candidate value function $V(x;r, \delta)$ then V satisfies the following equations,
  	       	\begin{align} 
  	       		& \bbL V(x;r, \delta) -\lambda(r) V(x;r, \delta) + p_b x=0 ~~~~~~~~ x \in D,\\
  	       		& V(x;r, \delta)= g(x;r, \delta)~~~~~~~~~~~~~~~~~~~~~~~~~~x \not\in D,
  	       	\end{align}
  	       	First we solve for the case $x \in D$. 
  	       	$V(x;r, \delta)$ satisfies the following non-homogeneous Cauchy-Euler differential  equation.
  	       	
  	       	\begin{align}
  	       		\mu(r) x \frac{\partial V(x;r, \delta)}{\partial x}  + \frac{1}{2} \sigma^2 x^2 \frac{\partial^2 V(x;r, \delta)}{\partial x^2} - \lambda(r) V(x;r, \delta) +  p_b x =0. 
  	       	\end{align}
  	       	The solution to the above equation is of the form
  	       	\begin{align}
  	       		V(x;r, \delta)= M_1 x^{\gamma_1} + M_2 x^{\gamma_2} + \frac{p_b x}{\lambda(r) - \mu(r)}~~~~~~~~~~~~~~~~~~~~~V(0;r,\delta)=0.
  	       	\end{align}
  	       	where, 
  	       	\begin{align}
  	       		\gamma_{1,2}(r)= \frac{-(\mu(r)- \frac{\sigma^2}{2}) \pm \sqrt{ (\mu(r)- \frac{\sigma^2}{2})^2 + 2\sigma^2 \lambda(r)}}{\sigma^2}.
  	       	\end{align}
  	       	We notice that $\gamma_1(r)>1$ and $\gamma_2(r)<0$. Since,  $\gamma_2<0$ and $V(0;r, \delta)=0$, $M_2=0$. We calculate $M_1$ and $\overline{X}(r, \delta)$ using
  	       	value matching and smooth pasting technique at the boundary of $D$.
  	       	\begin{align}
  	       		M_1 (\overline{X}(r, \delta))^{\gamma_1(r)} + \frac{p_b \overline{X}(r, \delta)}{\lambda(r) - \mu(r)}= g(\overline{X}(r,\delta);\delta, r). \label{eq: valuematching} \\
  	       		M_1 \gamma_1(r) (\overline{X}(r, \delta))^{\gamma_1(r) -1} + \frac{p_b }{\lambda(r) - \mu(r)}= g' (\overline{X}(r, \delta); \delta,r). \label{eq: smoothpasting}
  	       	\end{align}
  	       	Using the above two equations we know that $\overline{X}(r, \delta)$ satisfies the following equation
  	       	\begin{align}
  	       		\frac{(\gamma_1(r)-1)p_b \overline{X}(r, \delta)}{\lambda(r)- \mu(r)} = \gamma_1(r) g(\overline{X}(r, \delta);r, \delta) - \overline{X}(r, \delta) g' (\overline{X}(r, \delta);r, \delta). \label{eq: threshold}
  	       	\end{align}
  	       	Substituting  $g(\overline{X}(r, \delta);r, \delta)$ using Lemma \ref{gfunction} in  \eqref{eq: threshold} completes the proof.\\
  	       	We can also derive the closed form expression for $V(x; r, \delta)$ by simulatenously solving equation \eqref{eq: valuematching} and \eqref{eq: smoothpasting}.
  	       	\begin{align}
  	       		V(x; r, \delta)= \frac{(B(r)-f(r, \delta))}{\gamma_1-1} \bigg( \frac{x}{\overline{X}(r, \delta)}  \bigg)^{\gamma_1(r)} + \frac{p
  	       			_b x}{\lambda(r) - \mu(r)} \notag
  	       	\end{align}
  	       	In this procedure we have made multiple assumptions. Next, we prove some properties necessary to apply verification theorem.
  	    
  	        \begin{prop}
  	        	$\mathds{E}[\tau^{*}_{r}] < \infty$
  	        \end{prop}
  	        
  	        \begin{proof}
  	        	Given that $X^{r}_t$ follows geometric Brownian motion, we know that
  	        	\begin{align}
  	        		\log(X^{r}_{{t \wedge \tau^{*}_{r}}})&= \log(X^{r}_{t_0})+ \bigg( \mu(r)- \frac{\sigma^{2}}{2} \bigg)({t \wedge \tau^{*}_{r}}-t_0)+ \sigma (W^{r}_{{t \wedge \tau^{*}_{r}}}-W^{r}_{t_0}) \notag 
  	        	\end{align}
  	        	
  	        	Taking expectation on both sides
  	        	\begin{align}
  	        	\mathds{E}[\log(X^{r}_{{t \wedge \tau^{*}_{r}}})]&= \log(X^{r}_{t_0})+ \bigg( \mu(r)- \frac{\sigma^{2}}{2} \bigg)\mathds{E}[{t \wedge \tau^{*}_{r}}-t_0]\notag
  	        	\end{align}
  	        	Using monotone convergence theorem
  	        	\begin{align}
  	        		\mathds{E}[\tau^{*}_{r}]= \frac{\lim_{t\rightarrow \infty} \mathds{E}[\log(X^{r}_{{t \wedge \tau^{*}_{r}}})]- \log x}{ \mu(r)- \frac{\sigma^{2}}{2}} + \mathds{E}[t_0] \leq \frac{\log(\overline{X}(r, \delta))- \log x}{ \mu(r)- \frac{\sigma^{2}}{2}} + \mathds{E}[t_0] < \infty \notag
  	        	\end{align}
  	        \end{proof}
  	        Consider a stopping time $\tau'_{r}= \inf\{t: X^{r}_{t} \not\in (0, \overline{X}(r, \delta))\}$. Assuming $x < \overline{X}(r, \delta)$, clearly, $\tau^{*}_{r} \leq \tau'_{r}$. Using procedure similar to the proof of previous lemma, we can show that $\mathds{E} [\tau'_{r}]<\infty$.
  	        $V(x; r, \delta)$ is a continuously differentiable function by construction. Using Ito's lemma
  	        \begin{align}
  	        	\label{eq: expectationtaudash}
  	        	\mathds{E}[V(X^{r}_{\tau'_{r}}; r, \delta)| X^{r}_{\tau^{*}_{r}}]= & V(X^{r}_{\tau^{*}_{r}}; r, \delta) + \mathds{E} \bigg[\int^{\tau'_{r}}_{\tau^{*}_{r}}\bigg( \mu(r) x \frac{\partial V(x; r, \delta)}{\partial x}+ \frac{\sigma^{2}}{2} x^2 \frac{\partial^2 V(x; r, \delta)}{\partial x^2} - \lambda(r) V(x; r, \delta)  + p_b x  \bigg)\texttt{d}t| X^{r}_{\tau^{*}_{r}} \bigg] \notag \\
  	        	&+ \mathds{E} \bigg[ \int^{\tau'_{r}}_{\tau^{*}_{r}} \sigma \frac{\partial V(x; r, \delta)}{\partial x} \texttt{d} W_t  | X^{r}_{\tau^{*}_{r}} \bigg] 
  	        \end{align}
  	        
  	       \begin{prop}
  	       	\label{lemma: martingale}
  	       	$\mathds{E} \bigg[ \int_{\tau^{*}_{r}}^{\tau'_{r}} \sigma(r) \frac{\partial V(x; r, \delta)}{\partial x} \texttt{d} W_t  |  X^{r}_{\tau^{^*}_{r}}  \bigg]=0$
  	       \end{prop}
  	       
  	       \begin{proof}
  	       	We use Corollary 4.8 from Harrison(2013) to prove the above proposition. The corollary states, for a stopping time T, if $Z= \int_0^T X_t d W_t$ such that $X_t$ is bounded in [0,T] and $E[T]<\infty$, then $E[Z]=0$. From the previous lemma, we have that both $E[\tau^{*}_{r}]$ and $E[\tau'_{r}]$ are finite.
  	       	\begin{align}
  	       	    \frac{\partial V(x;r,\delta)}{\partial x}= M_1 \gamma_1 x^{\gamma_1 -1} + M_2 \gamma_2 x^{\gamma_2 -1} + \frac{p_b}{\lambda(r)- \mu(r)}
  	       	\end{align}
  	       	$\frac{\partial V(x;r,\delta)}{\partial x}$ is a continuous function in the compact interval $[ \overline{X}(r, \delta),  \overline{X}(r, \delta) + \Delta]$ , where $\Delta >0$. This implies that $\frac{\partial V(x;r,\delta)}{\partial x}$ is bounded. Since, $\sigma(r)$ is a known constant and $\frac{\partial V(x; r,\delta)}{\partial x}$ is bounded, we get the desired result using corollary 4.8.
  	       	\end{proof} 
  	        
  	        \begin{prop}
  	        	  \label{lemma: greaterthan0}
  	        	 $ \mathds{E} \bigg[\int^{\tau'_{r}}_{\tau^{*}_{r}}\bigg( \mu(r) x \frac{\partial V(x; r, \delta)}{\partial x}+ \frac{\sigma^{2}}{2} x^2 \frac{\partial^2 V(x; r, \delta)}{\partial x^2} - \lambda(r) V(x; r, \delta)  + p_b x  \bigg)\texttt{d} t| X^{r}_{\tau^{*}_{r}} \bigg] \geq 0$ for $x \geq  \overline{X}(r, \delta)$.
  	       \end{prop}
  	        \begin{proof}
  	        	For $x \geq  \overline{X}(r, \delta)$, $V(x; r, \delta)= g(x; r, \delta)= A(r)x -B(r)+ f(r, \delta)$. Substituting $\frac{\partial g(x; r, \delta)}{\partial x}= A(r)$ and $\frac{\partial^2 g(x; r, \delta)}{\partial x^2}= 0$ in  
  	        	$ \mu(r) x \frac{\partial V(x; r, \delta)}{\partial x}+ \frac{\sigma^{2}}{2} x^2 \frac{\partial^2 V(x; r, \delta)}{\partial x^2} - \lambda(r) V(x; r, \delta)  + p_b x $ yields
  	        	\begin{align}
  	        		& (f(r, \delta)- B(r))\bigg(   \bigg(  \frac{x}{\overline{X}(r, \delta)}\bigg) (\lambda(r)- \mu(r)) \frac{\gamma_1(r)}{\gamma_1(r) -1} - \lambda(r)    \bigg),\\
  	        		& \geq (f(r, \delta)- B(r)) \bigg(     \frac{\lambda(r) - \mu(r) \gamma_1(r)}{\gamma_1(r)-1}              \bigg) \\
  	        		& \geq 0
  	        	\end{align}
  	        	The second inequality holds as we are considering the case $x \geq  \overline{X}(r, \delta)$. We have $f(r, \delta) \geq B(r)$ because demand threshold cannot assume negative value. We can infer from the expression for $\gamma_1$ that $\gamma_1 \geq 1$ and with additional simple algebraic manipulations that $\lambda(r) - \mu(r) \gamma_1(r) \geq 0$. 
  	        \end{proof}
  	        Using results from Proposition \ref{lemma: martingale} and Proposition \ref{lemma: greaterthan0} in \eqref{eq: expectationtaudash}, we can infer that for $x \geq \overline{X}(r, \delta)$, the following inequality holds
  	         \begin{align}
  	         \mathds{E} [V(X^{r}_{\tau'_{r}}; r, \delta)| X^{r}_{\tau^{*}_{r}}] \geq V(X^{r}_{\tau^{*}_{r}}; r, \delta) 
  	         \end{align}
  	         The above inequality implies that it is better to adopt solar immediately when the level of demand hits $\overline{X}(r, \delta)$ than to wait. 
  	         \end{proof}
  	       	 	\begin{lemma}
  	       	 	\label{taurdensity}
  	       	 	The probability density function of $\tau^{*}_{r}(\delta)$ is given by
  	       	 	\begin{align}
  	       	 		f_{\tau^{*}_r(\delta)}(t;\delta, r)= \frac{a(r, \delta)}{\sqrt{2\pi}t^{3/2}}  e^{\frac{-(a(r, \delta)-b(r)t)^2}{2t}} 
  	       	 	\end{align}
  	       	 	where,
  	       	 	\begin{align}
  	       	 		a(r, \delta)&:=  \frac{1}{\sigma} \log\bigg(\frac{\overline{X}(r, \delta)}{x}\bigg) \\
  	       	 		b(r)  &:= \frac{1}{\sigma}\bigg(\mu(r) - \frac{\sigma^2}{2}\bigg)\\
  	       	 	\end{align}
  	       	 \end{lemma}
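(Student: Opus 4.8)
The plan is to identify $\tau^{*}_{r}(\delta)$ with the first–passage time of a Brownian motion with constant drift to a fixed positive level, and then invoke the classical inverse–Gaussian first–passage law. Throughout I assume the nondegenerate case $x<\overline{X}(r,\delta)$; when $x\geq\overline{X}(r,\delta)$ Lemma~\ref{doublestarr} gives $\tau^{*}_{r}(\delta)=0$ a.s.\ and the stated expression should be read as the corresponding degenerate limit.

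First I would invoke Lemma~\ref{doublestarr} together with the verification argument in its proof to conclude that the continuation region of the household's problem is the interval $(0,\overline{X}(r,\delta))$, so that
\begin{align}
\tau^{*}_{r}(\delta)=\inf\{t\geq 0:\ X^{r}_{t}\notin(0,\overline{X}(r,\delta))\}=\inf\{t\geq 0:\ X^{r}_{t}\geq\overline{X}(r,\delta)\},
\end{align}
the last equality because the geometric Brownian motion $X^{r}$ solving \eqref{eq: demand process} is strictly positive. Next I would write the explicit solution $X^{r}_{t}=x\exp\!\big((\mu(r)-\tfrac{\sigma^{2}}{2})t+\sigma W^{r}_{t}\big)$ and use strict monotonicity of $\log$ to rewrite $\{X^{r}_{t}\geq\overline{X}(r,\delta)\}$ as $\{\widetilde W_{t}\geq a(r,\delta)\}$, where $\widetilde W_{t}:=b(r)t+W^{r}_{t}$ is a Brownian motion with drift $b(r)=\tfrac{1}{\sigma}(\mu(r)-\tfrac{\sigma^{2}}{2})$ started at $0$, and $a(r,\delta)=\tfrac{1}{\sigma}\log(\overline{X}(r,\delta)/x)>0$. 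Hence $\tau^{*}_{r}(\delta)=\inf\{t\geq 0:\ \widetilde W_{t}=a(r,\delta)\}$.

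Finally I would appeal to the standard formula for the first–passage density of a drifted Brownian motion to a positive level: the law of $\inf\{t:\widetilde W_{t}=a(r,\delta)\}$ has density $\frac{a(r,\delta)}{\sqrt{2\pi}\,t^{3/2}}\exp\!\big(-\frac{(a(r,\delta)-b(r)t)^{2}}{2t}\big)$. This can be obtained by removing the drift via Girsanov's theorem and applying the reflection principle to the resulting driftless Brownian motion, or quoted directly from a standard reference on diffusion hitting times (e.g.\ \cite{shreve2004stochastic}, \cite{oksendal2013stochastic}). Substituting back the definitions of $a(r,\delta)$ and $b(r)$ yields the claimed expression. I do not expect a genuine obstacle here: the only point needing care is that the formula is a \emph{sub}probability density when $b(r)<0$ (equivalently $\mu(r)<\sigma^{2}/2$), reflecting that the household then adopts only with probability $e^{2a(r,\delta)b(r)}<1$; under the maintained assumptions guaranteeing $b(r)>0$ the time $\tau^{*}_{r}(\delta)$ is a.s.\ finite, consistent with the earlier bound $\bbE[\tau^{*}_{r}]<\infty$.
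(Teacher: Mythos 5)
Your proposal is correct and follows essentially the same route as the paper: both identify $\tau^{*}_r(\delta)$ with the first time the geometric Brownian motion $X^r$ reaches $\overline{X}(r,\delta)$, reduce this to a drifted Brownian motion hitting the level $a(r,\delta)$, and obtain the inverse-Gaussian density (the paper writes the running-maximum/hitting-time distribution function via the reflection-principle formula and differentiates in $t$, while you quote the first-passage density directly, which is the same computation). Your added remark about the sub-probability mass $e^{2a(r,\delta)b(r)}$ when $\mu(r)<\sigma^2/2$ is a fair observation that the paper leaves implicit.
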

  	       	\begin{proof}
  	       	 	We first find an expression for $\bbP (\tau^{*}_r \leq \bbT | X^{r}_{0}=x)$. We define $\tilde{X}^r_{\bbT}:= \max\{ X^{r}_{s}: 0 \leq s \leq \bbT\}$. The stochastic process,  $\tilde{X^r_{t}}$, is the running maximum of the demand process $X^{r}_{t}$ till time $t$. We notice that $\bbP (\tau^{*}_r \leq {\bbT} | X^{r}_{0}=x)= \bbP(\tilde{X}^r_{\bbT} \geq \overline{X}(r)| X^{r}_{0}=x)$. We have assumed that the dynamics of $X^{r}_{t}$ follows geometric Brownian motion. The distribution of the running maximum process is given by
  	       	 	\begin{align}
  	       	 		\bbP (\tilde{X}^r_{\bbT} \geq \overline{X}(r)| X^{r}_{0}=x) &= 1- \Phi\Bigg( \frac{a(r, \delta)-b(r) \bbT}{\sqrt{\bbT}} \Bigg) 
  	       	 		+ e^ {2 a(r, \delta) b(r)} \Phi \Bigg( \frac{-a(r, \delta)- b(r) \bbT}{\sqrt{\bbT}} \Bigg). 
  	       	 	\end{align}
  	       	 	where, $\Phi(\cdot)$ is the CDF  of standard normal distribution. Using the following equation gives us the desired result.
  	       	 	\begin{align}
  	       	 		f_{\tau^{*}_r(\delta)}(t;x) := \frac{d}{d t} \bbP (\tau^{*}_r(\delta) \leq t | X^{r}_{0}=x).
  	       	 	\end{align}
  	       	 \end{proof}
   	       	\begin{lemma} 
   	       	 	The feasible region of the central planner's problem is a convex set.
   	       	 \end{lemma}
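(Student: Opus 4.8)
The plan is to use the fact that in the homogeneous case the whole feasible region sits on the diagonal $\{\delta_1=\delta_2=\delta\}$, so it can be identified with a subset of $[0,1]\subset\bbR$; since a subset of a line is convex exactly when it is an interval, it suffices to show that each defining constraint cuts out an interval in the scalar $\delta$ and then invoke the fact that a finite intersection of intervals is an interval. The box $0\le\delta\le1$ is trivially an interval, and with $\delta_1=\delta_2=\delta$ the constraint $(1-\delta_1)(K+kc)-(1-\delta_2)p_{sub}c\ge\epsilon$ collapses to $(1-\delta)\big((K+kc)-p_{sub}c\big)\ge\epsilon$, equivalently $\delta\le 1-\epsilon/((K+kc)-p_{sub}c)$ (using $K+kc>p_{sub}c$, which is already needed for $\lambda^{*}_{\delta}$ in Lemma \ref{lemma:thresholde} to be well defined), again an interval. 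So the real content is the adoption-level constraint, and the plan there is to show that $\delta\mapsto\bbP(\tau^{*}\le T\mid r\in(r^{\Lambda}_{lb},r^{\Lambda}_{ub}),X_0=x)$ is nondecreasing on $[0,1]$, whence its $\Lambda$-super-level set is a (possibly empty) terminal segment of $[0,1]$, hence an interval.

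To establish that monotonicity, I would first note that by Lemma \ref{lemma:thresholde}, in the homogeneous case $\lambda^{*}_{\delta}=-\tfrac1{t_b}\ln\!\big(1-\tfrac{p_{sub}c}{K+kc}\big)$ is independent of $\delta$, so the product-choice threshold $r^{*}(\delta)\equiv r^{*}$ is fixed (whether it equals $0$, $\infty$, or an interior value being determined once and for all by where this fixed $\lambda^{*}_{\delta}$ falls relative to $[\underline\lambda,\overline\lambda]$). Then the function $f$ from Lemma \ref{doublestarr} factors as $f(r,\delta)=(1-\delta)\,\tilde f(r)$ with $\tilde f(r):=\mathbbm{1}_{\{r\le r^{*}\}}\tfrac{p_{sub}c}{1-e^{-\lambda(r)t_b}}+\mathbbm{1}_{\{r>r^{*}\}}(K+kc)>0$ not depending on $\delta$, so the demand threshold $\overline{X}(r,\delta)=\tfrac{\gamma_1(r)}{\gamma_1(r)-1}\cdot\tfrac{(1-\delta)\tilde f(r)-B(r)}{p_b/(\lambda(r)-\mu(r))-A(r)}$ is an affine, hence nonincreasing, function of $\delta$ on the range where it is positive; here I use $\gamma_1(r)>1$ and $p_b/(\lambda(r)-\mu(r))-A(r)>0$, both proved inside the proof of Lemma \ref{doublestarr} (the latter being exactly the proposition $(\mu(r)-\lambda(r))A(r)+p_b>0$ combined with $\lambda(r)>\mu(r)$). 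Since by Lemma \ref{doublestarr} it is optimal to adopt the first time demand reaches $\overline{X}(r,\delta)$, the random variable $\tau^{*}_r$ is the first-passage time of $X^{r}$ to the level $\overline{X}(r,\delta)$ (and $\tau^{*}_r=0$ when $\overline{X}(r,\delta)\le x$); first-passage times are pathwise nondecreasing in the target level, so $\bbP(\tau^{*}_r\le T\mid X_0=x)$ is nonincreasing in $\overline{X}(r,\delta)$ and therefore nondecreasing in $\delta$, for every $r$. Integrating in $r$ against the $\delta$-independent nonnegative weight $p(r)\mathbbm{1}_{\{r\in(r^{\Lambda}_{lb},r^{\Lambda}_{ub})\}}$ and normalizing by the $\delta$-independent positive constant $\int_{r^{\Lambda}_{lb}}^{r^{\Lambda}_{ub}}p(s)\,ds$ preserves monotonicity, which gives the desired claim and finishes the argument.

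The main obstacle, such as it is, is the step from ``the threshold $\overline{X}(r,\delta)$ is nonincreasing in $\delta$'' to ``the adoption probability $\bbP(\tau^{*}_r\le T)$ is nondecreasing in $\delta$'', including a clean treatment of the degenerate regime $\overline{X}(r,\delta)\le x$ (immediate adoption, probability $1$) that is entered as $\delta$ grows; I expect this to follow directly from pathwise monotonicity of first-passage times rather than from any quantitative estimate, but it is the place where the closed forms of Lemma \ref{doublestarr} and Lemma \ref{taurdensity} must be invoked with some care. It is also worth remarking in the write-up that homogeneity is used essentially and in two ways -- it keeps $r^{*}$, and hence $\overline{X}(r,\cdot)$, monotone in the subsidy, and it reduces $\delta$ to a scalar so that super-level sets of monotone functions are automatically convex -- which is consistent with the fact that in the heterogeneous case, where both features fail, the feasible region need not be convex.
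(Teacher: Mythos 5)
Your proof is correct and follows essentially the same route as the paper's: both reduce the problem to showing that the adoption-probability constraint is a nondecreasing function of the scalar subsidy $\delta$ (because $\overline{X}(r,\delta)$ is nonincreasing in $\delta$ and first-passage probabilities are monotone in the barrier, then integrate over $r$), so that its super-level set is convex, the remaining constraints being trivially convex. Your write-up is in fact somewhat more explicit than the paper's (constancy of $r^{*}(\delta)$ under homogeneous subsidy, the factorization $f(r,\delta)=(1-\delta)\tilde f(r)$, positivity of $p_b/(\lambda(r)-\mu(r))-A(r)$, and the degenerate case $\overline{X}(r,\delta)\le x$), but the underlying argument is the same one the paper phrases via quasi-concavity of monotone functions.
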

   	       	  \begin{proof}
   	       	 	Notice that the last constraint reduces to $(1-\delta)(K+kc- p_{sub}c) \geq \epsilon$ and is true for all subsidies. We need to study the first 3 constraints. We show that the right hand side of those constraint is a quasi-concave function. The super level sets of quasi-concave functions are convex.\\
   	       	 	\begin{align}
   	       	 		\bbP(\tau^*(\delta) \leq \bbT| X_{0}=x) &= \int_{0}^{\infty} \bbP (\tau^{*}_r(\delta) \leq \bbT| X^{r}_{0}=x) p(r) \texttt{d} r \\
   	       	 		&= \int_{0}^{\infty}  \bbP (\tilde{X}^r_{\bbT} \geq \overline{X}(r, \delta)| X^{r}_{0}=x) p(r) \texttt{d} r
   	       	 	\end{align}
   	       	 	For each $r$, $P(\tilde{X}^r_{\bbT} \geq \overline{X}(r, \delta)| X^{r}_{0}=x) $ is a non-decreasing function of $\delta$ since $\overline{X}(r, \delta)$ is non-increasing in $\delta$. Monotonicity of integrals imply that $ \bbP(\tau^*(\delta) \leq \bbT| X_{0}=x)$ is non-decreasing function of $\delta$. Any monotonic function is quasi-concave and the super level set of a quasi-concave functions is convex. We can use similar argument to prove that the set of $\delta$ satisfying second and third constraint are also convex. The feasible region for the central planner's problem is the intersection of these finite number of convex sets, hence convex.
   	       	 \end{proof}

   	       	  \begin{lemma}
   	       	     $z(\delta)$ is a convex and increasing function of $\delta$.
   	       	  \end{lemma}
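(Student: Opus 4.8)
\emph{Plan of proof.} The plan is to reduce $z$ to an explicit scalar integral over income and then verify convexity and monotonicity termwise. By Lemma~\ref{lemma:thresholde}, when $\delta_1=\delta_2=\delta$ the quantity $\frac{(1-\delta_2)p_{sub}c}{(1-\delta_1)(K+kc)}$ inside the logarithm defining $\lambda^*_\delta$ reduces to $\frac{p_{sub}c}{K+kc}$, which is $\delta$-free, so $r^*(\delta)=:r^*$ is a fixed income level; moreover $f(r,\delta)=(1-\delta)D(r)$ with $D(r):=\mathbbm{1}_{\{r\le r^*\}}\frac{p_{sub}c}{1-e^{-\lambda(r)t_b}}+\mathbbm{1}_{\{r>r^*\}}(K+kc)>0$. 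The per-income integrand of the objective is then $e^{-\lambda(r)\tau^*_r}\delta D(r)$, so by Tonelli
\begin{align}
 z(\delta)=\delta\int_0^\infty D(r)\,p(r)\,\bbE_x\big[e^{-\lambda(r)\tau^*_r(\delta)}\big]\,dr. \notag
\end{align}
From Lemma~\ref{doublestarr}, with $f(r,\delta)=(1-\delta)D(r)$ the threshold is affine and strictly decreasing in $\delta$: $\overline X(r,\delta)=\kappa(r)\big[(1-\delta)D(r)-B(r)\big]$, where $\kappa(r):=\tfrac{\gamma_1(r)}{\gamma_1(r)-1}\big(\tfrac{p_b}{\lambda(r)-\mu(r)}-A(r)\big)^{-1}>0$ and $(1-\delta)D(r)-B(r)>0$ (the same positivity, $f(r,\delta)\ge B(r)$ because $\overline X\ge 0$, used in the proof of Lemma~\ref{doublestarr}).

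Next I would compute the Laplace transform of $\tau^*_r$ in closed form. Integrating the inverse-Gaussian density of Lemma~\ref{taurdensity} against $e^{-\lambda(r)t}$ (equivalently, the standard first-passage transform for a geometric Brownian motion crossing a level $\overline X(r,\delta)>x$) gives $\bbE_x[e^{-\lambda(r)\tau^*_r(\delta)}]=(x/\overline X(r,\delta))^{\gamma_1(r)}$, where $\gamma_1(r)>1$ is the positive root of the Cauchy--Euler characteristic equation already appearing in the proof of Lemma~\ref{doublestarr}. Substituting both closed forms,
\begin{align}
 z(\delta)=\int_0^\infty c(r)\,p(r)\,D(r)\;\delta\,\big[(1-\delta)D(r)-B(r)\big]^{-\gamma_1(r)}\,dr,\qquad c(r):=\big(x/\kappa(r)\big)^{\gamma_1(r)}>0. \notag
\end{align}

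It then remains to show that, for each fixed $r$, the scalar map $\psi_r(\delta):=\delta\,u_r(\delta)^{-\gamma_1(r)}$ with $u_r(\delta):=(1-\delta)D(r)-B(r)$ is increasing and convex on $\{\delta\ge 0:u_r(\delta)>0\}$; the lemma follows because $z=\int_0^\infty c(r)p(r)D(r)\,\psi_r(\delta)\,dr$ is a positive-weighted integral of such maps (differentiation under the integral being legitimate from the explicit formulas and $p$ being a density, the integral finite on the nonempty feasible set). With $\phi_r:=u_r^{-\gamma_1(r)}$ and $u_r'=-D(r)$ one gets $\phi_r'=\gamma_1(r)D(r)u_r^{-\gamma_1(r)-1}>0$ and $\phi_r''=\gamma_1(r)(\gamma_1(r)+1)D(r)^2u_r^{-\gamma_1(r)-2}>0$, hence $\psi_r'=\phi_r+\delta\phi_r'>0$ and $\psi_r''=2\phi_r'+\delta\phi_r''=\gamma_1(r)D(r)u_r^{-\gamma_1(r)-2}\big[2u_r(\delta)+(\gamma_1(r)+1)D(r)\delta\big]>0$, which gives both claims. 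Monotonicity also has a softer proof: $\overline X(r,\cdot)$ decreasing makes $\tau^*_r(\delta)$ pathwise non-increasing, so $\Psi(\delta):=\int D\,p\,\bbE_x[e^{-\lambda(r)\tau^*_r(\delta)}]dr$ is positive and increasing and $z=\delta\,\Psi(\delta)$.

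The main obstacle is the regime $\overline X(r,\delta)\le x$, in which a household adopts at $\tau^*_r=0$, so that $\bbE_x[e^{-\lambda(r)\tau^*_r}]=1$ and the $r$-contribution degenerates to the linear term $\delta D(r)p(r)$. Each of the two regimes is separately convex, but at the $\delta$ where $\overline X(r,\delta)=x$ the left derivative of $\delta\mapsto\delta\,\bbE_x[e^{-\lambda(r)\tau^*_r(\delta)}]$ equals $1+\delta\,\gamma_1(r)D(r)\kappa(r)/x$ while the right derivative is $1$, so a naive splice can destroy convexity; hence the argument above requires the regime --- already implicit in Lemma~\ref{taurdensity}, which presupposes $a(r,\delta)=\tfrac1\sigma\log(\overline X(r,\delta)/x)>0$ --- in which no income class adopts at time zero for the relevant $\delta$'s. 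Under that (model-wide) assumption the closed forms above are valid and the termwise computation completes the proof; verifying the sign facts $D(r)>B(r)$ and $\tfrac{p_b}{\lambda(r)-\mu(r)}>A(r)$ (inherited from the analysis in Lemma~\ref{doublestarr}) and the dominated-convergence justification for differentiating under the integral are the remaining routine points.
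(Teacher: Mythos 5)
Your proposal follows essentially the same route as the paper's proof: apply Tonelli to pull the expectation inside, use the closed-form Laplace transform of the first-passage time $\bbE_x[e^{-\lambda(r)\tau^*_r(\delta)}]=(x/\overline X(r,\delta))^{\text{power}}$, substitute the threshold (affine in $\delta$ since $r^*$ and hence $\nu(r)=D(r)$ are $\delta$-free under homogeneous subsidy), and verify increasingness and convexity of the per-income term $\delta\,[(1-\delta)D(r)-B(r)\gamma_1(r)\text{-type expression}]^{-\text{power}}$ by explicit first and second derivatives, integrated against $p(r)$. Your additional discussion of the $\overline X(r,\delta)\le x$ (immediate-adoption) regime is a more careful version of the paper's Case 2, where the contribution degenerates to a linear, hence still convex and increasing, term.
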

   	       	   \begin{proof}
   	       	  	Let 
   	       	  	\begin{align}
   	       	  		\nu(r):=  \mathbbm{1}_{\{r \leq r^*\}} \bigg( \frac{p_{sub}c}{1- e^{-\lambda(r)t_b} } \bigg) + \mathbbm{1}_{\{r > r^*\}}(K+kc)
   	       	  	\end{align}
   	       	  	Since the integrand in $z(\delta)$ is non-negative, using Tonelli theorem, we can exchange the expectation and the integral.
   	       	  	\begin{align}
   	       	  		z(\delta):= \int_{0}^{\infty}   \bbE_{x} \bigg[ e^{-\lambda(r) {\tau^{*}_r(\delta)}}  \bigg]  \delta \nu(r) p(r) \texttt{d} r \label{eq: zdeltaconcex}
   	       	  	\end{align}
   	       	  	The expression $\bbE_{x} \big[ e^{-\lambda(r) {\tau^{*}_r(\delta)}}  \big]$ represents the Laplace transform of the adoption time for a given $r$ and $\delta$. Using the expression of the Laplace transform of the stopping time for a geometric Brownian motion, we have
   	       	  	\begin{align}
   	       	  		\bbE_{x} \big[ e^{-\lambda(r) {\tau^{*}_r(\delta)}}  \big]= \bigg(    \frac{x}{\overline{X}(r, \delta)}  \bigg)^l \label{eq: laptau}
   	       	  	\end{align}
   	       	  	where,
   	       	  	\begin{align}
   	       	  		l(r)= \frac{    \sqrt{\mu(r)^2 + 2 \lambda(r) \sigma^2}    -\mu(r)}{\sigma^2}
   	       	  	\end{align}
   	       	  	
   	       	  	Substituting (\refeq{eq: laptau}) in (\refeq{eq: zdeltaconcex})
   	       	  	\begin{align}
   	       	  		z(\delta):= \int_{0}^{\infty}     x^{l(r)}    \frac{\delta}{(\overline{X}(r, \delta))^l}  \nu(r) p(r) \texttt{d} r 
   	       	  	\end{align}
   	       	  	The integrand in the above integral is a bounded and a continuous function of $\delta$ and $r$. Using dominated convergence theorem we can show that $z(\delta)$ is a continuous function of $\delta$.
   	       	  	Next we substitute the expression for $\overline{X}(r, \delta)$ from lemma 3, which leads us to the following two cases.\\
   	       	  	
   	       	  	\textbf{Case 1:} $\overline{X}(r, \delta)>x$
   	       	  	\begin{align}
   	       	  		z(\delta)& = \int_{0}^{\infty}        x^{l(r)} \frac{\delta}{((1-\delta) \nu(r)- B(r)\gamma_1(r))^{l(r)}}  \bigg(  \bigg( \frac{\gamma_1(r)-1}{\gamma_1(r)}\bigg)   \bigg(  \frac{p_b}{\lambda(r)-\mu(r)} - A(r)\bigg)  \bigg)^{l(r)} \nu(r) p(r) \texttt{d} r. \\
   	       	  		\frac{\partial}{\partial \delta}  z(\delta) &=  \int_{0}^{\infty}   x^{l(r)} \frac{\partial}{\partial \delta}   \bigg(  \frac{\delta}{((1-\delta) \nu(r)- B(r)\gamma_1(r))^{l(r)}} \bigg)  \bigg(  \bigg( \frac{\gamma_1(r)-1}{\gamma_1(r)}\bigg)   \bigg(  \frac{p_b}{\lambda(r)-\mu(r)} - A(r)\bigg)  \bigg)^{l(r)} \nu(r) p(r) \texttt{d} r.\\
   	       	  		\frac{\partial^2}{\partial \delta^2}  z(\delta) &= \int_{0}^{\infty}  x^{l(r)}   \frac{\partial^2}{\partial \delta^2}   \bigg(  \frac{\delta}{((1-\delta) \nu(r)- B(r)\gamma_1(r))^{l(r)}} \bigg)  \bigg(  \bigg( \frac{\gamma_1(r)-1}{\gamma_1(r)}\bigg)   \bigg(  \frac{p_b}{\lambda(r)-\mu(r)} - A(r)\bigg)  \bigg)^{l(r)}\nu(r) p(r)\texttt{d}r.
   	       	  	\end{align}
   	       	  	We observe that for every $r$
   	       	  	\begin{align}
   	       	  		\frac{\partial}{\partial \delta}  \frac{\delta}{((1-\delta) \nu(r)- B(r)\gamma_1(r))^{l(r)} } &= \frac{ ((1-\delta) \nu(r)- B(r)\gamma_1(r))^{l(r)} + \delta l(r) \nu(r) ((1-\delta) \nu(r)- B(r)\gamma_1(r))^{l(r)-1} }{((1-\delta) \nu(r)- B(r)\gamma_1(r))^{2l(r)}}\\
   	       	  		& \geq 0.
   	       	  	\end{align}
   	       	  	Similarly
   	       	  	\begin{align}
   	       	  		\frac{\partial^2}{\partial \delta^2}  \frac{\delta}{((1-\delta) \nu(r)- B(r)\gamma_1(r))^{l(r)}} &= \frac{l(r) \nu(r)}{((1-\delta) \nu(r)- B(r)\gamma_1(r))^{l(r)+1}}\\
   	       	  		& + l(r) \nu(r) \bigg[ \frac{ ((1-\delta) \nu(r)- B(r)\gamma_1(r))^{l(r)+1} + \delta (l(r)+1)  ((1-\delta) \nu(r)- B(r)\gamma_1(r))^{l(r)} }{((1-\delta) \nu(r)- B(r)\gamma_1(r))^{2l(r)+2}}   \bigg]\\
   	       	  		& \geq 0.
   	       	  	\end{align}
   	       	  	This implies that $\frac{\partial}{\partial \delta}  z(\delta) \geq 0$ and $ \frac{\partial^2}{\partial \delta^2}  z(\delta) \geq 0$.\\
   	       	  	
   	       	  	\textbf{Case 2:} $\overline{X}(r, \delta)=x$\\
   	       	  	Substituting $\overline{X}(r, \delta)=x$ in the expression for $z(\delta)$, we can notice that $z(\delta)$ is a linear function of $\delta$, $\frac{\partial}{\partial \delta}  z(\delta) \geq 0$ and $ \frac{\partial^2}{\partial \delta^2}  z(\delta) =0$.
   	       	  \end{proof}

   	       	    \begin{lemma}
   	       	    	$r^*(\delta)$ is non-increasing function of $\delta_1$ and a non-decreasing function of $\delta_2$.
   	       	    \end{lemma}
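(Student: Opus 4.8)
The plan is to write $r^*(\delta)$ as a composition of monotone maps, using the closed form from Lemma~\ref{lemma:thresholde}. Recall that
\begin{align}
r^*(\delta) = \lambda^{-1}\!\left(\lambda^*_\delta\right), \qquad
\lambda^*_\delta = -\frac{1}{t_b}\ln\!\left(1 - q(\delta)\,\frac{p_{sub}c}{K+kc}\right), \qquad q(\delta) := \frac{1-\delta_2}{1-\delta_1},
\end{align}
and that $\lambda(\cdot)$ — hence $\lambda^{-1}(\cdot)$ — is strictly decreasing. The third constraint of the planner's problem, $(1-\delta_1)(K+kc)-(1-\delta_2)p_{sub}c \ge \epsilon > 0$, guarantees that $q(\delta)\,p_{sub}c/(K+kc) \in (0,1)$ on the feasible set, so the logarithm has a strictly positive argument and $\lambda^*_\delta$ is finite.

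First I would fix $\delta_2$ and vary $\delta_1$. As $\delta_1$ increases, $1-\delta_1$ decreases, so $q(\delta)$ increases; hence the argument $1 - q(\delta)\,p_{sub}c/(K+kc)$ of the logarithm decreases, and since $u \mapsto -\ln u$ is decreasing on $(0,1)$, $\lambda^*_\delta$ is non-decreasing in $\delta_1$. Symmetrically, fixing $\delta_1$ and increasing $\delta_2$ decreases $1-\delta_2$, hence decreases $q(\delta)$, hence increases the log's argument, so $\lambda^*_\delta$ is non-increasing in $\delta_2$. Composing with the decreasing map $\lambda^{-1}$ flips each direction once more, yielding the claim: $r^*(\delta)=\lambda^{-1}(\lambda^*_\delta)$ is non-increasing in $\delta_1$ and non-decreasing in $\delta_2$. (If one prefers a derivative argument, $\partial \lambda^*_\delta/\partial \delta_1 = \tfrac{1}{t_b}\,\tfrac{q(\delta)p_{sub}c/(K+kc)}{(1-\delta_1)(1-q(\delta)p_{sub}c/(K+kc))}\ge 0$ and $\partial \lambda^*_\delta/\partial \delta_2 = -\tfrac{1}{t_b}\,\tfrac{p_{sub}c/(K+kc)}{(1-\delta_1)(1-q(\delta)p_{sub}c/(K+kc))}\le 0$, and then $\partial r^*/\partial \delta_i = (\lambda^{-1})'(\lambda^*_\delta)\,\partial \lambda^*_\delta/\partial \delta_i$ has the asserted sign because $(\lambda^{-1})'<0$; but the monotone-composition version avoids assuming $\lambda$ differentiable.)

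The remaining point is the boundary regimes of Lemma~\ref{lemma:thresholde}: $r^*(\delta)=0$ when $\lambda^*_\delta \ge \overline\lambda$ and $r^*(\delta)=\infty$ when $\lambda^*_\delta \le \underline\lambda$. I would handle these uniformly by regarding $r^*$ as a map into $[0,\infty]$ and extending $\lambda^{-1}$ to all of $\mathbb{R}$ by setting $\lambda^{-1}(s)=0$ for $s\ge\overline\lambda$ and $\lambda^{-1}(s)=\infty$ for $s\le\underline\lambda$; this extension is still non-increasing, so the composition argument carries over verbatim. Concretely, an increase in $\delta_1$ can only raise $\lambda^*_\delta$, which can only move $r^*$ from the interior regime toward the $r^*=0$ regime (never back toward $r^*=\infty$), and analogously for $\delta_2$.

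There is no substantive obstacle here; the only thing demanding care is tracking the three nested sign reversals (in $1-\delta_i$, in $-\ln(\cdot)$, and in $\lambda^{-1}$) and invoking the feasibility constraint so that $\lambda^*_\delta$ stays finite and the argument of the logarithm stays in $(0,1)$ throughout the feasible region.
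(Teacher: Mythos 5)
Your proposal is correct and follows essentially the same route as the paper: both use the closed form of $\lambda^*_\delta$ from Lemma~\ref{lemma:thresholde}, note that it is non-decreasing in $\delta_1$ and non-increasing in $\delta_2$, and conclude via the decreasing property of $\lambda$ (equivalently $\lambda^{-1}$). Your extra care with the feasibility constraint and the boundary regimes $r^*\in\{0,\infty\}$ is a nice tightening of the paper's terse argument but does not change the method.
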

                
   	       	    \begin{proof}
   	       	    	We know that $\lambda^{*}_{\delta }= 	-\frac{1}{t_b}\ln \bigg(  1- \frac{(1- \delta_2)p_{sub}c}{(1-\delta_1)(K+kc)}\bigg)$. It is straight forward to see that $\frac{\partial \lambda^{*}_{\delta }}{\partial \delta_1} \geq 0$ and 
   	       	    	$\frac{\partial \lambda^{*}_{\delta }}{\partial \delta_2} \leq 0$. The result follows since $\lambda(r)$ is a decreasing function of r.
   	       	    \end{proof}

   	       	    \begin{lemma}
   	       	    	The probability density function of $\tau^*$ is given by
   	       	    	\begin{align}
   	       	    		f_{\tau^*}(t;x, \delta,G, \alpha) = \frac{1}{\sqrt{2\pi}} \frac{1}{(G \alpha) t^{3/2}} \int_{0}^{\infty} a(r, \delta) \frac{(r/\alpha)^{\frac{1-G}{G}}}{(1+ (r/\alpha)^{1/G})^2} e^{\frac{-(a(r, \delta)-b(r)t)^2}{2t}} \texttt{d} r.
   	       	    	\end{align}
   	       	    \end{lemma}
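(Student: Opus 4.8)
The plan is to substitute the log-logistic income density into the mixture representation of $f_{\tau^*}$ and simplify. First I would recall that, by the definition of $\tau^*$ given in Section \ref{sub_sub_central_planner_optimization}, its density is the $p$-weighted average of the income-specific densities,
\[
f_{\tau^*}(t;\delta,x) = \int_0^\infty f_{\tau^*_r}(t;\delta,x)\, p(r)\, \dd r,
\]
and that Lemma \ref{taurdensity} supplies, for each fixed $r$, the closed form $f_{\tau^*_r}(t;\delta,r) = \frac{a(r,\delta)}{\sqrt{2\pi}\,t^{3/2}}\, e^{-(a(r,\delta)-b(r)t)^2/(2t)}$, with $a$ and $b$ as defined there.

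Next I would re-express the log-logistic density $p(r) = \frac{(\beta/\alpha)(r/\alpha)^{\beta-1}}{(1+(r/\alpha)^\beta)^2}$ in terms of the region's summary statistics. Since the median of a log-logistic law equals its scale parameter $\alpha$ and its Gini coefficient equals $1/\beta$, putting $G = 1/\beta$ gives $\beta = 1/G$ and $\beta-1 = (1-G)/G$; under the standing assumption $\beta > 1$ (equivalently $G<1$), which keeps $p$ unimodal and integrable, this yields
\[
p(r) = \frac{1}{G\alpha}\,\frac{(r/\alpha)^{(1-G)/G}}{\bigl(1+(r/\alpha)^{1/G}\bigr)^2}.
\]
Plugging this together with the expression from Lemma \ref{taurdensity} into the mixture integral and pulling the $r$-independent factors $\frac{1}{\sqrt{2\pi}\,t^{3/2}}$ and $\frac{1}{G\alpha}$ outside the integral produces exactly the asserted formula.

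The remaining points are purely bookkeeping: checking the exponent identities $\beta = 1/G$ and $\beta - 1 = (1-G)/G$ so the powers of $(r/\alpha)$ line up, and justifying that writing $f_{\tau^*}$ as the mixture integral (hence the factorization above) is legitimate. The latter is immediate since $f_{\tau^*_r}(\cdot;\delta,x) \ge 0$ and $p(\cdot)\ge 0$, so Tonelli's theorem applies with no further integrability assumption. I therefore do not expect a genuine obstacle here; the substance of the lemma is the observation that the pair $(\alpha,G)$ enters the population adoption-time density only through $a(r,\delta)$, $b(r)$, and the log-logistic weight $\frac{(r/\alpha)^{(1-G)/G}}{(1+(r/\alpha)^{1/G})^2}$, so inequality (captured by $G$) acts on adoption timing solely by reweighting the income-specific densities.
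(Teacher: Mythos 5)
Your proposal is correct and follows essentially the same route as the paper: both substitute the closed-form density from Lemma \ref{taurdensity} and the log-logistic density rewritten with $G=1/\beta$ (so $\beta-1=(1-G)/G$) into the mixture definition of $f_{\tau^*}$ and pull the $r$-independent factors outside the integral. The extra remarks about Tonelli and the interpretation of $G$ are harmless additions but not needed, since the defining expression is already a single integral over $r$.
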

   	       	    \begin{proof}
   	       	    	The probability density function of $\tau^*$ is defined as
   	       	    	\begin{align}
   	       	    		f_{\tau^*}(t;x, \delta, G, \alpha)& := \int_{0}^{\infty} 	f_{\tau^{*}_r}(t;x, \delta) p(r) \texttt{d} r.\notag
   	       	    	\end{align} 
   	       	    	We know from Lemma \ref{taurdensity} the closed form expression for $	f_{\tau^{*}_r}(t;x, \delta) $. Substituing the expression for $p(r)= \frac{1}{G \alpha} \frac{(r/\alpha)^{\frac{1-G}{G}}}{(1+ (r/\alpha)^{1/G})^2}$ gives us the desired result.
   	       	    \end{proof}
   	       	    
   	       	     
                	\begin{lemma}
                    Assuming $g \in \bbC^{2}(\bbR)$, $w \in \bbC^{2}(\bbR)$ and a subsidy policy $\delta$, if  a customer belongs to income level $r$ such that \\ \\
                	$(i)$ $\big( -\lambda(r) g(x;r, \delta)  + \bbL_{X^{r}} g(x;r, \delta) + w(x; p_b)\big) \geq 0$ for all $x$ then $\tau^{*}_r= 0$ a.s. \\
                	$(ii)$$\big( -\lambda(r) g(x;r, \delta)  + \bbL_{X^{r}} g(x;r ,\delta) + w(x; p_b)\big) < 0$ for all $x$ then $\tau^{*}_r= \infty$ a.s. \\ \\
                	Here $\bbL_{X^r}$ is the infinitesimal generator for the demand process $\{X^{r}_t\}_{t \geq 0}$. 
                	\label{boundarystoppingtime}
                 \end{lemma}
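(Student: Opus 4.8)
The plan is to turn the optimal stopping problem for the household into a statement about the (sub/super)martingale property of the discounted total–cost process. Fix a household at income level $r$ and a subsidy policy $\delta$, and define
\[
M_t := \int_0^t e^{-\lambda(r)s}\, w(X^r_s; p_b)\,\dd s + e^{-\lambda(r)t}\, g(X^r_t; r, \delta),
\]
so that $J(x,\tau_r; \delta, r) = \bbE_x[M_{\tau_r}]$, with the convention that on $\{\tau_r=\infty\}$ the terminal term is read as $\lim_{t\to\infty} e^{-\lambda(r)t} g(X^r_t; r,\delta)$, which vanishes under the standing growth hypotheses together with (for the primary model) $\lambda(r)>\mu(r)$. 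Since $g\in\bbC^2$, Itô's formula gives
\[
\dd M_t = e^{-\lambda(r)t}\Big(-\lambda(r)\,g + \bbL_{X^r} g + w\Big)(X^r_t)\,\dd t + e^{-\lambda(r)t}\,\partial_x g(X^r_t)\,\sigma(r, X^r_t)\,\dd W^r_t .
\]
The drift coefficient is precisely the quantity appearing in hypotheses $(i)$ and $(ii)$, so the two cases correspond exactly to $M$ being a local submartingale, respectively a local supermartingale.

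For part $(i)$, the hypothesis forces the drift to be nonnegative, so $M = M_0 + A + N$ with $A$ nondecreasing and $N$ a local martingale. Choosing a localizing sequence $T_n\uparrow\infty$ for $N$ and applying optional sampling to $\tau_r\wedge T_n\wedge m$ yields $\bbE_x[M_{\tau_r\wedge T_n\wedge m}] \ge M_0 = g(x;r,\delta)$. Letting $m\to\infty$ and then $n\to\infty$, the integral term converges by monotone convergence (the integrand is nonnegative since $w\ge 0$ in the model) and the terminal term is controlled by the growth bound on $g$ and $\lambda(r)>\mu(r)$; this gives $J(x,\tau_r;\delta,r)\ge g(x;r,\delta)=J(x,0;\delta,r)$ for every admissible $\tau_r$. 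Hence $\tau_r=0$ attains the infimum, i.e. $\tau^*_r=0$ a.s.

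For part $(ii)$, the drift is strictly negative, so $M$ is a local supermartingale, and localization plus integrability of $\int_0^\infty e^{-\lambda(r)s}w(X^r_s;p_b)\,\dd s$ upgrades it to a genuine supermartingale converging a.s. and in $L^1$ to $M_\infty = \int_0^\infty e^{-\lambda(r)s}w(X^r_s;p_b)\,\dd s = J(x,\infty;\delta,r)$. Because $M$ is then closed by $M_\infty$, the optional sampling inequality gives $\bbE_x[M_{\tau_r}]\ge \bbE_x[M_\infty]$ for every stopping time $\tau_r$, i.e. $J(x,\tau_r;\delta,r)\ge J(x,\infty;\delta,r)$, so $\tau_r=\infty$ is optimal. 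For the sharper claim, write $M_{\tau_r}-M_\infty = -\int_{\tau_r}^\infty e^{-\lambda(r)s}\big(-\lambda(r)g+\bbL_{X^r}g+w\big)(X^r_s)\,\dd s$ plus a martingale increment; strict negativity of the drift makes this strictly positive on $\{\tau_r<\infty\}$, so taking expectations shows any $\tau_r$ with $\bbP(\tau_r<\infty)>0$ is strictly suboptimal, giving $\tau^*_r=\infty$ a.s.

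The Itô computation and the optional-sampling bookkeeping are routine; the step I expect to be the main obstacle is the integrability/localization layer — verifying that the stochastic integral is correctly handled under localization, that $e^{-\lambda(r)t}g(X^r_t;r,\delta)\to 0$ in the required sense, and that $\int_0^\infty e^{-\lambda(r)s}w(X^r_s;p_b)\,\dd s$ is integrable so that $M$ is a true (super)martingale with the limit $M_\infty$. This forces explicit use of the at-most-linear-growth and Lipschitz conditions on $\mu,\sigma$, the bound $\lambda(r)>\mu(r)$, and (for the general-cost version) a mild growth restriction on $g$ and $w$ that is in any case implicit in the cost functional being well defined; once those are in place, the submartingale/supermartingale dichotomy closes the proof.
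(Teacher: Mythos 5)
Your argument is correct and is essentially the paper's own proof: the drift you extract from It\^o's formula for $M_t$ is exactly the quantity $\bbL_{Z^r}\tilde g = e^{-\lambda(r)t}\big(-\lambda(r)g + \bbL_{X^r}g + w\big)$ that the paper computes after augmenting the state with time and the accumulated running cost, and both arguments conclude from the same sub/supermartingale (sub-/superharmonic) dichotomy. The only difference is presentational: you carry out the localization and optional-sampling step by hand, whereas the paper invokes Dynkin's formula and \O ksendal's Section~10 characterization, with the same integrability/growth caveats left implicit in both.
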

   	       	     \begin{proof}
   	       	     	We introduce a new Ito's diffusion $Z^{r}_t$ to define a time homogeneous stopping problem as follows
   	       	     	\begin{align}
   	       	     		d Z^{r}_t=
   	       	     		\begin{bmatrix}
   	       	     			1\\
   	       	     			\mu(r, X_{t}^{r}) \\
   	       	     			e^{-\lambda(r)t} w(X_{t}^{r})
   	       	     		\end{bmatrix} \texttt{d}t +
   	       	     		\begin{bmatrix}
   	       	     			0\\
   	       	     			\sigma(r, X_{t}^{r}) \\
   	       	     			0
   	       	     		\end{bmatrix} \texttt{d}W^{r}_t	  ~~~~~~~~~~~~~~~~~Z_0=(0,x,0).   
   	       	     	\end{align}
   	       	     	We then define the a new cost function as
   	       	     	\begin{align}
   	       	     		\tilde{g}(t,x,\overline{w}; r, \delta)=\overline{w}+ e^{-\lambda(r)t }g(x;r,\delta),
   	       	     	\end{align}	
   	       	     	and our equivalent optimal stopping problem is
   	       	     	\begin{align}
   	       	     		V(x; r, \delta) := \inf_{\tau_r \geq 0}  \bbE_{Z_0} \big[ 	\tilde{g}(Z_{\tau_r }; r, \delta) \big],  \label{eq:v(x)} 
   	       	     	\end{align}
   	       	     	where $Z_{\tau_r }= (\tau_r, x, \overline{w}_{\tau_r})$.\\ \\
   	       	     	Next, we write the infinitesimal generator $\bbL_{Z^r}$ for the process $\{Z^r_t\}_{t \geq 0}$. Let $\phi(z) \in \bbC^2$, then
   	       	     	\begin{align}
   	       	     		\bbL_{Z^r} \phi= \frac{\partial \phi}{\partial t} +  \mu(r, x)\frac{\partial \phi}{\partial x} + \frac{\sigma^2(r,x)}{2} \frac{\partial^2 \phi}{\partial x^2} +  e^{-\lambda(r)t} w(x)  \frac{\partial \phi}{\partial \overline{w}}. \label{eq: zgenerator}
   	       	     	\end{align}
   	       	     	Using equations \eqref{eq:v(x)} and \eqref{eq: zgenerator}, we get
   	       	     	\begin{align}
   	       	     		\bbL_{Z^r} 	\tilde{g} & = \big( -\lambda(r) g(x;r, \delta) + \mu(r,x) g'(x;r, \delta) + \frac{\sigma^2(r,x)}{2} g''(x;r, \delta) +w(x) \big) e^{-\lambda(r)t}.\\
   	       	     		& = \big(   -\lambda(r) g(x;r, \delta)  + \bbL_{X^r} g(x;r, \delta) + w(x)  \big)  e^{-\lambda(r)t},
   	       	     	\end{align}
   	       	     	For a given $r$ and $\delta$ we consider two cases:\\\\
   	       	     	\textbf{Case 1:} If $\big( -\lambda(r) g(x;r, \delta)  + \bbL_{X^r} g(x;r, \delta) + w(x)\big) \geq 0$ for all $x$, then $\bbL_{Z^r} 	\tilde{g} \geq 0$. Since $g$ and $w$ belong to the class of twice differentiable functions, $\tilde{g} \in \bbC^2(\bbR)$. Using Dynkin formula(Oksendal, Section 10.1) we can say that $\tilde{g} $ is a sub-harmonic function and $\tau^{*}_r= 0$ i.e customers from income level $r$ adopt solar technology immediately.\\ \\
   	       	     	\textbf{Case 2:} If $ \big( -\lambda(r) g(x;r, \delta)  + \bbL_{X^r} g(x;r, \delta) + w(x) \big) < 0$ for all $x$, then $\bbL_{Z^r} 	\tilde{g} < 0$. This implies that the continuation region is $\bbR$ and $\tau^{*}_r= \infty$ i.e customers from income level $r$ do not adopt solar technology.\\
   	       	     \end{proof}	     	    

\bibliographystyle{informs2014} 
\bibliography{citation.bib} 
							
\end{document}